\numberwithin{equation}{section}
\newtheorem{theorem}{Theorem}[section]
\newtheorem{lemma}[theorem]{Lemma}
\newtheorem{proposition}{Proposition}
\theoremstyle{definition}
\newtheorem{definition}[theorem]{Definition}
\newtheorem{remark}{Remark}
\title[]
      {On existence and uniqueness to homogeneous Boltzmann flows of monatomic gas mixtures }
\author[Irene M. Gamba and Milana Pavi\'c-\v Coli\'{c}]{}
\newcommand{\changelocaltocdepth}[1]{%
	\addtocontents{toc}{\protect\setcounter{tocdepth}{#1}}%
	\setcounter{tocdepth}{#1}%
}
\newcommand{\nocontentsline}[3]{}
\newcommand{\tocless}[2]{\bgroup\let\addcontentsline=\nocontentsline#1{#2}\egroup}
\newcommand{\wideBar}[1]{\overbracket[0.3pt][0pt]{\overbracket[0.3pt][0pt]{\mkern-2.8mu #1}}}
\begin{document}
\maketitle

% Enter the first author's name and address:

\centerline{\scshape Irene M. Gamba}
\medskip
{\footnotesize
	\centerline{Department of Mathematics and Oden Institute of Computational Engineering and Sciences}
	\centerline{University of Texas at Austin}
	\centerline{2515 Speedway Stop C1200
		Austin, Texas 78712-1202}
	%\centerline{209 South 33rd Street, Philadelphia, PA 19104}
	
}

\medskip

\centerline{\scshape Milana Pavi\'c-\v Coli\'{c}}
\medskip
{\footnotesize
  \centerline{Department of Mathematics and Informatics}
  \centerline{Faculty of Sciences, University of Novi Sad}
 \centerline{Trg Dositeja Obradovi\'ca 4, 21000 Novi Sad, Serbia}
}

\bigskip

%\centerline{\today}

\bigskip

% The name of the associate editor will be entered by an editorial staff
% "Communicated by the associate editor name" is not needed for special issue.
% \centerline{(Communicated by )}

%The abstract of your paper
\begin{abstract}
We solve the Cauchy problem for the full non-linear homogeneous Boltzmann system of equations describing multi-component monatomic gas mixtures for binary interactions in three dimensions. More precisely,  we show existence and uniqueness of the vector value solution  by means of an existence theorem for ODE systems in Banach spaces under the transition probability rates assumption corresponding to hard potentials rates  in the interval $(0,1]$, with an angular section modeled by  an integrable function of the  angular  transition rates modeling binary scattering effects.  The initial data for the vector valued solutions needs to be a vector of non-negative measures with finite total number density, momentum and strictly positive energy, as well as to have a finite  $L^1_{k_*}(\mathbb{R}^3)$-integrability property corresponding to a sum across each species of  $k_*$-polynomial weighted norms depending on the corresponding mass fraction parameter for each species as much as on the intermolecular potential rates, referred as to the scalar polynomial moment of order $k_*$. 
The existence and uniqueness rigorous results rely on a new angular averaging lemma adjusted to vector values solution that yield a Povzner estimate with  constants that decay with the order of the corresponding dimensionless scalar polynomial moment. In addition, such initial data yields global generation of such scalar polynomial moments at any order as well as their summability of moments to obtain estimates for corresponding scalar exponentially decaying high energy tails, referred as to scalar exponential moments associated to the system solution.   Such scalar polynomial and exponential moments propagate as well.
\end{abstract}

\medskip

{\bf Keywords:} {Mixing; kinetic theory of gases; chemical kinetics; Boltzmann equation; interactive particle systems. }

\

{\bf MSC:} {76F25, 76P05, 80A30, 82C05, 82C22, 82C40.}
\tableofcontents

%The title of your section 1

\section{Introduction}

We consider a mixture of $I$ monatomic gases, labeled with $\mathcal{A}_1, \dots, \mathcal{A}_I$.  In the kinetic theory framework, each species of the mixture $\mathcal{A}_i$ is statistically  described with its own distribution function $f_i:=f_i(t,x,v)$, that in general depends on time $t\geq0$, space position $x\in \mathbb{R}^3$ and velocity of molecules $v \in \mathbb{R}^3$ (in this manuscript we restrict ourselves to the spatially homogeneous case, that is, we drop dependence on space position $x$).  The distribution function $f_i$ changes due to binary interactions (or collisions) with other particles. In the mixture setting, these particles can belong to other species  $\mathcal{A}_j$, $j\neq i$. Therefore, the evolution of each $f_i$ involves not only the particle-particle interaction of specie  $\mathcal{A}_i$, but also interactions between $\mathcal{A}_i$ and $\mathcal{A}_j$, $j\neq i$.

In the mixture framework, the evolution of each distribution function $f_i$ describing the mixture component $\mathcal{A}_i$, is governed by the Boltzmann-like equation, that traditionally introduces collision operator as a measure of   its change. Now,  one has multi-species collision operators and  their transition probabilities, or cross sections,  between the different distribution functions describing each component of the mixture \cite{Sir62}. Since all  species are considered simultaneously in a system of  species with binary interactions, one is led to introduce a vector valued set of distribution functions $\mathbb{F}=\left[f_i\right]_{1\leq i \leq I}$, whose evolution is governed by a vector of collision operators, whose $i-$th component (that describes precisely evolution of $f_i$) is $\left[\mathbb{Q}(\mathbb{F})\right]_{i}=\sum_{j=1}^I Q_{ij}(f_i,f_j)$. In this formula, operator $Q_{ij}(f_i,f_j)$ describes influence of species $\mathcal{A}_j$ for the distribution function $f_j$ on species $\mathcal{A}_i$ with the distribution function $f_i$. Note that summation over all $j=1,\dots,I$ is in the spirit of taking into account influence of all species  $\mathcal{A}_j$, $j=1,\dots,I$, on the  considered species $\mathcal{A}_i$.

From a mathematical viewpoint,  the challenging situation occurs when masses of species molecules are not equal (i.e. $m_i\neq m_j$). In such a situation, underlying binary  collisions between molecules lose some symmetry properties which can dramatically change mathematical treatment, for instance  in order to study diffusion asymptotics  when one needs  to show the  compactness of a part of linearized Boltzmann operator  \cite{BGPS}.  In the mixture framework,  a linear system of linearized Boltzmann equations has been recently studied in \cite{BriantDaus16}, corresponding to the perturbative setting of our model  when the non-linear system is  linearized near Maxwellian states corresponding to each species. In this case   authors showed  existence, uniqueness, positivity and exponential trend to equilibrium. \\

In this work, we give  the first existence and uniqueness result for the non-linear system of  spatially homogeneous Boltzmann equations for  multi-species   mixtures with binary interactions in a suitable Banach space.  
We also emphasize that our approach for solving the Cauchy problem  for the Boltzmann equation with variable hard potentials relies on some specific conditions on the initial moments, without requesting entropy boundedness. The hard potentials assumption correspond to collision cross sections  related to the species $\mathcal{A}_i$ and $\mathcal{A}_j$ proportional to the local relative speed with a power exponent $\gamma_{ij}\in(0,1]$,  and $L^1-$integrable   angular part $b_{ij}$, as function of the scattering direction. 

 In addition, the  existence and uniqueness of a vector value solution $\mathbb{F}(t,v)$  need to assume that initially  its scalar zero and second moment (i.e. the  scalar number density and energy of the mixture) are  strictly positive and finite, and additionally that this function has at least an upper bounded $k_*$-polynomial moments,   where $k_*:= \max\{\overline{k},  2+2 \bar{\bar{\gamma}} \}$, for    $\overline{k}=\max_{1\leq i, j \leq I}\{ k^{ij}_*\}$  and   $\bar{\bar{\gamma}} =\max_{1\leq i, j \leq I}\gamma_{ij}$,  is sufficiently large to ensure the prevail of the polynomial  moments of loss term with respect to those same moments of the gain term. Each
 $k^{ij}_* $ depends on the  angular transition rate $b_{ij}$ as well as on the two-body mass fraction $r_{ij}:= m_i/(m_i+m_j)$ associated to  each component on the vector solution.
 All these parameters are defined in the next Section~\ref{Section notation} dedicated to notation, preliminaries and main results.
\\

The result is obtained following general ODE theory that studies    differential equations in  suitable Banach spaces  \cite{Martin}.  In the context of (single) Boltzmann equation, this theory was proposed  as a  main tool in \cite{Bressan} for solving the Cauchy problem with  hard spheres in three dimensions and constant angular transition probability kernel. However, the notes \cite{Bressan}  do not completely verify all conditions of general ODE theory for the Boltzmann equation. This was motivation for \cite{GambaAlonso18} to revise the application of ODE theory from \cite{Martin} in the case of Boltzmann equation with more general hard potentials and integrable angular cross section, and in particular, to provide a complete proof of sub-tangent condition. 
	
One very interesting new aspect from this approach is that the ODE flow in a suitable Banach space without imposing initial bounded entropy  condition yields an alternative approach that allows for  a rather general theory for gathering estimates where one can apply a rather general result  in order to find solutions to the Cauchy problem for Boltzmann type flows  where there is no classical entropy that is dissipated,  or even some conservation laws may not be satisfied. Such problems have already been solved in for polymers kinetic problems \cite{AlonsoLods18},  quantum Boltzmann equation for bosons in very low temperature \cite{GambaAlonsoTran17} and more recently to study the weak wave turbulence models for stratified flows \cite{GambaSmithTran18}.\\

After proving the existence and uniqueness of the vector value solution $\mathbb{F}$ to the Boltzmann system, we turn to the study of generation and propagation of scalar polynomial and exponential  moments of its solution $\mathbb{F}$.

The techniques we use in this manuscript are  adaptations or extensions  of results that have been developed for  scalar Boltzmann type equations models.

 In the case of the classical Boltzmann equation for the single elastic monatomic gas model, polynomial moments have been exclusively  considered, for instance,  in \cite{Des93} and \cite{wennberg97} for hard  potentials where propagation and generation of such moments was proved.  About the same time, Bobylev introduced in \cite{Bob97} the concept of exponential moment as a measure of the distribution solution tail, referred as to {\sl tail temperature}, 
 by showing that solutions to the Boltzmann equation for monatomic gases, modeled by elastic hard spheres (i.e. power exponent  $\gamma=1$) 
 in three dimensions with a constant angular dependent cross-section as a function of the scattering direction, have inverse Maxwellian weighted moments, globally in time,  whose tail decay rate depend on moments of the initial data. His proof  consists in showing that  infinite sums of renormalized polynomial moments  are summable whose  limit is proportional to a $L^1$- Gaussian weighted norm for the unique probability density function solving the initial value problem associated to the Boltzmann equation, whose rate depends on the initial data  that must also be integrable with a Gaussian weight.   These techniques of understanding moments summability in order to obtain high energy tail behavior for the solution of the Boltzmann equation was extended to  inelastic interactions with stochastic heating sources, shear flows or self-similarity scalings to obtain non-equilibrium statistical  stationary (NESS) states   \cite{GambaBobPanf04} where the exponential rates did not necessarily correspond to Gaussian weighted moments.

This concept in the elastic case was further extended by  \cite{GambaPanfVil09} to collision kernels for  hard potentials (i.e. $\gamma\in (0,1]$) for any angular section
with $L^{1+}$-integrability. Further, generation of exponential moments of order $\gamma/2$ with bounded angular section were shown in \cite{Mouhot06}. 

By then it became clear that the  study of general forms of exponential moments resulted  as a by-product of the analysis of polynomial  moments (or tails), and so a spur of work arose for the improvement of conditions and results that will allow to estimate, globally in time.
These  results were extended to collision kernels for hard potentials  with $\gamma\in(0,2]$  for any angular section
with just $L^{1}$-integrability by a new approach using partial sums summability techniques, rather than using summability studies by power series associated to 
renormalized moments as proposed in \cite{Bob97, GambaBobPanf04, GambaPanfVil09,Mouhot06}. The generation results were improved to obtain exponential moments of order $\gamma$, while Gaussian moments were propagated  for any initial data that would have that property, 
independent of $\gamma$. All these results were  extended to  the angular non-cutoff regime (lack of angular integrability) in \cite{GambaTask18, LuMouhot12} still for hard potentials with $\gamma\in(0,2]$, and in \cite{BobGamba17, PT} for pseudo-Maxwellian and Maxwellian case ($\gamma=0$). 
In the later referenced work, these non-Gaussian tailed moments are called Mittag-Leffler moments as in fact the summability of partial sums is shown to converge  to  an   $L^1$-Mittag-Leffler function weighted norm  for the unique probability density function solving  the initial value problem associated to the Boltzmann equation, whose order and rate depend on the initial data as much as on the order of singularity in the angular section.

A very important tool for the success of summability properties for polynomial moments relies on the fact that such moments are both created and propagated depending on how moments of the collision operator can be estimated: the positive part of the (gain) collision operator must have a decay rate with respect to the moment order while   the negative part of such moments prevails  in the dynamics,  when sufficiently many moments are taken into account.

This is indeed a key step, arising  as a consequence of an {\em angular averaged Povzner lemma}.  In the case of single gas components,
these estimates  are based on integration of the collision operator against  polynomial test functions on the pre-collisional velocities in the sphere.
 While these objects were  originally introduced by Povzner \cite{Pov} in 1960s, a   sharper form that uses the conservation of energy and angular averaging was introduced  in \cite{Bob97} for the case of hard spheres in three dimensions with a constant angular cross section, where the polynomial test functions are proportional to even powers of the velocity magnitude. Later this technique was extended  in \cite{GambaBobPanf04} for the inelastic collision with heating sources, 
in \cite{GambaPanfVil09} to the elastic case with hard potentials with $L^{1+}$ integrable angular cross section, as well as  in \cite{Gamba13}  for the case with just $L^{1}$ integrable angular cross section. Further, the approach was enlarged to hard spheres with non-integrable angular cross section in \cite{LuMouhot12} and \cite{GambaTask18} for hard potentials. { {\sl All of these estimates were developed for the mono-component model. } }

  Hence, the angular averaged  Povzner lemma is our starting point in the case of mixtures as well.  However,  it requires a subtle modification of the polynomial weight that define {\emph{the scalar  moment  for the mixture}}, to be defined in \eqref{SPM} next section, that renormalizes the polynomial test function from just even powers of the magnitude of the velocity vector  to a dimensionless bracket form independent of mass density units, as the mono-component treatment  to obtain moment estimates from \cite{Bob97} for the elastic case, or from \cite{GambaPanfVil09} for inelastic hard sphere interactions,  can not be directly extended to the mixture case, when masses are possibly different. 

This facts enticed us to introduce a new approach that relies on the way to 
rewrite collisional rules and scalar polynomial moments  in such dimensionless, independent of mass density units form  that is very  convenient to obtain a  convex combination form between the conserved local quantities for a binary interaction, namely, local center of mass and  energy. 
As a consequence, we conclude that  averaging over the $S^2$-sphere yield  decay properties as a function of the moment order  for as long as angular kernel is $L^1-$integrable  on $S^2$. In particular, these decay properties  will be significantly influenced by  the fact how much species masses are disparate. It will be shown that as much as renormalized species masses deviates one from each other, the decay rate will be more slowly.\\

The paper is organized as follows. In Section \ref{Section notation} we introduce notation and preliminaries, and state the main results, namely the Existence and Uniqueness Theorem for the vector value solution of the homogeneous Boltzmann system, and then generation and propagation of both  scalar polynomial and exponential moments. Then in Section \ref{Section Kinetic Model} we describe in details kinetic model that we use. Section \ref{Section prelim Lemmas} contains two preliminary Lemmas that we need for further work, including Povzner lemma. Sections \ref{Section Ex Uni proof}, \ref{section proof generation of poly} and \ref{Section gen prop exp mom} are devoted to  proofs of our main results. A final Appendix contains some auxiliary calculations relevant to our estimates.

\section{Notation, Preliminaries and Main Results}\label{Section notation}

\subsection{Notation and Preliminaries}

In this paper, we consider mixture of $I$ gases, and we label its components with $\mathcal{A}_1$, $\dots$, $\mathcal{A}_I$. Each component of the mixture $\mathcal{A}_i$, $i=1,\dots,I$, is described with its own distribution function, denoted with $f_i:=f_i(t,v)\geq0$, that, in this manuscript, depends on time $t>0$ and velocity $v\in \mathbb{R}^3$. Fixing some $i\in \left\{1,\dots,I\right\}$,  distribution function $f_i$ satisfy Boltzmann like equation, which now, in the mixture context, has to take into account influence of all other components of the mixture on  species $\mathcal{A}_i$. In the kinetic theory style, this is achieved by defining collision operator $Q_{ij}$ for each $j=1,\dots,I$ that measures interaction between species $\mathcal{A}_i$ that we fixed and all the others $\mathcal{A}_j$, $j=1,\dots,I$, including itself $\mathcal{A}_i$. If the species $\mathcal{A}_j$ are described with distribution functions $f_j$, then the evolution of $f_i$ is described via 
\begin{equation}\label{boltzmann i}
\partial_t f_i(t,v) = \sum_{j=1}^I Q_{ij}(f_i,f_j)(t,v), \qquad i=1,\dots,I. 
\end{equation} 

The form of $Q_{ij}$, for  distribution functions $f$ and $g$ and  any $i,j=1,\dots I$,  is given by the non-local bilinear form
\begin{equation}\label{QijJ}
Q_{ij}(g,h)(v)= \int_{\mathbb{R}^3} \int_{S^2} \left(  \frac1{\mathcal{J}} \, g(v'_{ij}) h(v'_{*ij}) - g(v) h(v_*)  \right) \mathcal{B}_{ij}(v,v_*,\sigma) \, \mathrm{d} \sigma \, \mathrm{d}v_*,
\end{equation}
where pre-collisional quantities $v'_{ij}$ and $v'_{*ij}$ depend on post-collisional ones $v$, $v_*$ and parameter $\sigma$, as much as on the   masses $m_i$ and $m_j$ mass of colliding  particles of species $\mathcal{A}_i$ and $\mathcal{A}_j$ respectively, in the following manner
\begin{equation}\label{collisional rules intro}
v'_{ij} =\frac{m_i v + m_j v_*}{m_i + m_j} + \frac{m_j}{m_i + m_j} \left| v-v_* \right| \sigma, \quad v'_{*ij} =\frac{m_i v + m_j v_*}{m_i + m_j} - \frac{m_i}{m_i + m_j} \left| v-v_* \right| \sigma.
\end{equation}
%with $m_i$ being the mass of a particle of species $\mathcal{A}_i$, $i=1,\dots, I$. 
The collisional rules \eqref{collisional rules intro} can be written in scattering direction coordinates  (or in a center of mass reference framework) by introducing the velocity of center of mass $V_{ij}$ and relative velocity $u$ of the two colliding particles,
\begin{equation}\label{center-of-mass}
V_{ij}:=\frac{m_i v + m_j v_*}{m_i + m_j}, \qquad u:=v-v_*,
\end{equation}
as follows 
\begin{equation}\label{cof-rv-coor}
v'_{ij} =V_{ij}+ \frac{m_j}{m_i + m_j} \left| u \right| \sigma, \quad v'_{*ij} =V_{ij} - \frac{m_i}{m_i + m_j} \left| u \right| \sigma,
\end{equation}
or equivalently, introducing the two-body mass fraction parameter $r_{ij}=  \frac{m_i}{m_i + m_j}\in (0,1)$, associated to one of the particles, say $m_i$  without loss of generality, 
\begin{equation}\label{cof-rv-coor-2}
v'_{ij} =V_{ij}+ (1-r_{ij})\left| u \right| \sigma, \quad v'_{*ij} =V_{ij} - r_{ij} \left| u \right| \sigma.
\end{equation}

\begin{remark}\label{eliminate subindex ij}
For simplicity of notation, from now on, we will eliminate subindices $i,j$ from $v'_{ij}$, $v'_{*ij}$, $V_{ij}$ and $r_{ij}$.
\end{remark}

The transition probability rates  or collision cross section terms $\mathcal{B}_{ij}$ are positive functions  supposed to satisfy the following micro-reversibility assumptions
\begin{equation}\label{Bij micro}
\mathcal{B}_{ij}(v,v_*,\sigma) = \mathcal{B}_{ij}(v',v'_*,\sigma')=\mathcal{B}_{ji}(v_*,v,-\sigma),
\end{equation}
where  $\sigma=u'/\left|u'\right|$ and $u'=v'-v'_*$ (note that then   $\sigma'=u/\left|u\right|$).

The factor in the positive non-local binary term  ${\mathcal{J}}= \left| \det J_{(v',v'_*,\sigma')/(v,v_*,\sigma)} \right|$  is  the absolute value of determinant of the Jacobian  associated to the exchange of velocity variables transformation \eqref{collisional rules intro} from pre to post for the given binary interaction.   The Jacobian of this transformation  can be easily computed by passing to the scattering direction coordinates i.e by considering the following mappings $(v',v'_*,\sigma') \mapsto (u',V',\sigma')\mapsto (\left|u'\right|, \frac{u'}{\left|u'\right|} , V', \sigma') \mapsto (\left|u\right|, \frac{u}{\left|u\right|} , V, \sigma) \mapsto (u, V, \sigma) \mapsto (v, v_*, \sigma)$, with the notation \eqref{center-of-mass} and using Remark \ref{eliminate subindex ij}. The first mapping is of unit Jacobian from definition of $u$ and $V$, the second one is passage from Cartesian to spherical coordinates for $u'$. Since from the collisional rules \eqref{collisional rules intro}  it follows $\left|u'\right| = \left|u\right|$ and $V'=V$ the passage from primes to non-primes described in  the third mapping is of unit Jacobian. Then we pass from spherical to Cartesian coordinates for $u$ and finally go back to the original variables $(v,v_*,\sigma)$. Thus, the Jacobian is computed as the decomposition of the mentioned mappings, 
\begin{equation*}
\mathcal{J} = 1 \cdot \frac{1}{\left|u'\right|^2} \cdot 1 \cdot \left| u \right|^2 \cdot 1  = 1,
\end{equation*}	
since $\left|u'\right| = \left|u\right|$.  Therefore,  each $Q_{ij}$ from \eqref{QijJ} simple becomes,
\begin{equation}\label{Qij}
Q_{ij}(g,h)(v)= \int_{\mathbb{R}^3} \int_{S^2} \left(g(v') h(v'_*) - g(v) h(v_*)  \right) \mathcal{B}_{ij}(v,v_*,\sigma) \, \mathrm{d} \sigma \, \mathrm{d}v_*.
\end{equation}

\bigskip

Since we consider a mixture as a whole, it will be convenient to introduce the following vector notation. We put all distribution functions $f_i$, $i=1,\dots, I$ into vector of distribution functions 
\begin{equation}\label{distribution function vector form}
\mathbb{F}=[f_i]_{1\leq i \leq I}.
\end{equation} 
Moreover, a vector value collision operator is defined
\begin{equation}\label{collision operator vector form}
\mathbb{Q}(\mathbb{F})=\left[\sum_{j=1}^I Q_{ij}(f_i,f_j)\right]_{1\leq i \leq I}.
\end{equation}
Then the system of  Boltzmann equations \eqref{boltzmann i} can be written in a vector form
\begin{equation}\label{BE intro}
\partial_t \mathbb{F}(t,v) = \mathbb{Q}(\mathbb{F})(t,v).
\end{equation}

\begin{definition}[\sl Bracket forms for the mixture's dimensionless polynomial moments independent of mass density units]\label{SPM}
Let $\mathbb{F}=[f_i]_{1\leq i \leq I}$ be a suitable vector value distribution function. Let mixture's bracket forms be denoted  by 
\begin{equation}\label{vi}
\left\langle v \right\rangle_i :=\sqrt{ 1+ \frac{m_i}{\sum_{j=1}^I m_j} \left| v \right|^2}, \qquad v\in \mathbb{R}^3. 
\end{equation}

 \emph{Scalar polynomial moments independent of mass density units} of order $q\geq 0$ for $\mathbb{F}$ is defined with
	\begin{equation}\label{poly moment}
	\mathfrak{m}_q[\mathbb{F}](t) =  \sum_{i=1}^I \int_{\mathbb{R}^3} f_i(t,v) \left\langle v \right\rangle_i^q \mathrm{d}v.
	\end{equation}
	In particular, we define scalar polynomial moment of zero order for each species $\mathcal{A}_i$
	\begin{equation*}
	\mathfrak{m}_{0,i}[\mathbb{F}](t) = \int_{ \mathbb{R}^3} f_i(t,v) \, \mathrm{d}v, \quad i=1,\dots,I,
	\end{equation*}
	having in mind that $\sum_{i=1}^I \mathfrak{m}_{0,i}[\mathbb{F}] = \mathfrak{m}_0[\mathbb{F}]$.
	
	 \emph{Scalar exponential moment}, or exponential weighted $L^1-$forms, for $\mathbb{F}$ of a rate $\boldsymbol{\alpha}:=(\alpha_1,\dots,\alpha_I)$, $\alpha_i>0$, and an order $\mathbf{s}:=(s_1,\dots,s_I)>0$, $0<s_i\leq 2$,   is defined by 
	\begin{equation}\label{exp moment}  
	\mathbb{\mathcal{E}}_{ \mathbf{s}}[\mathbb{F}](\boldsymbol{\alpha},t) =  \sum_{i=1}^I \int_{\mathbb{R}^3} f_i(t,v) e^{\alpha_i \left\langle v \right\rangle_i^{s_i}} \mathrm{d}v. 
	\end{equation} 
	The case $s_i =2$, $\forall i$, is referred to as inverse Maxwellian (or Gaussian) moment, otherwise they are super exponential moments (some authors referred as stretched exponentials though this concept usually refers to exponential times).

\end{definition} 

\begin{remark}
It can be noticed that  such both dimensionless polynomial and  exponential moments for the mixture are defined as a sum of the resulting moments corresponding to each species independent of mass density units. In particular, when $\mathbb{F}$ solves the Boltzmann system of equations \eqref{BE intro}, then $\mathfrak{m}_{0,i}[\mathbb{F}]$ is interpreted as number density of the species $\mathcal{A}_i$, for any $i=1,\dots,I$, while the zeroth scalar moment  $\mathfrak{m}_0[\mathbb{F}]$ is the total number density of the mixture. Moreover, the second scalar moment $\mathfrak{m}_2[\mathbb{F}]$ represents total energy of the mixture.
\end{remark}

\begin{remark}
	 If, for given exponential moments individually for each species $\mathcal{A}_i$, we seek for the maximum value of both their  rate and order, i.e. 
	\begin{equation}\label{max rate max order}
	\hat{\alpha}= \max_{1\leq i \leq I} \alpha_i, \quad 	\hat{s}= \max_{1\leq i \leq I} s_i,
	\end{equation}
		then $$ \mathbb{\mathcal{E}}_{ \mathbf{s}}[\mathbb{F}](\boldsymbol{\alpha},t) \leq  \sum_{i=1}^I \int_{\mathbb{R}^3} f_i(t,v) \, e^{\hat{\alpha} \left\langle v \right\rangle_i^{\hat{s}}} \mathrm{d}v =:  \mathcal{E}_{\hat{s}}[\mathbb{F}](\hat{\alpha},t)$$
Therefore, finiteness of the  exponential moment $\mathbb{\mathcal{E}}_{ \mathbf{s}}[\mathbb{F}](\boldsymbol{\alpha},t)$ is  a consequence of the finiteness of $\mathcal{E}_{\hat{s}}[\mathbb{F}](\hat{\alpha},t)$, with $\hat{\alpha}$ and $\hat{s}$ as in \eqref{max rate max order}, for any time $t\geq 0$. 
\end{remark}

\subsubsection{Functional space}
We work in $L^1$ space weighted  polynomially  in velocity $v$ and summed over all species, that is
\begin{equation}
\begin{split}\label{space L_k^1}
L_k^1 &= \left\{ \mathbb{F}= [f_i]_{1\leq i \leq I} \ \text{measurable}: \sum_{i=1}^I \int_{\mathbb{R}^3} \left| f_i(v)  \right|\left\langle v \right\rangle_i^k \mathrm{d}v < \infty, \ k\geq 0 \right\}\\
\end{split}\end{equation} 
where the polynomial weight  was defined in \eqref{vi} by $\left\langle v \right\rangle_i ={\left( 1+ \frac{m_i}{\sum_{j=1}^I m_j} \left| v \right|^2 \right)^{1/2}}.$

Its associated norm is 
\begin{equation}\label{norm I}
\left\| \mathbb{F} \right\|_{L_k^1} = \sum_{i=1}^I \int_{\mathbb{R}^3} \left| f_i(v)  \right|\left\langle v \right\rangle_i^k \mathrm{d}v.
\end{equation}
Note that if $\mathbb{F}\geq 0$, then its norm in $L_k^1$ is precisely its polynomial moment of order $k$, i.e.
$
\left\| \mathbb{F} \right\|_{L_k^1} := \mathfrak{m}_k[\mathbb{F}].
$

Sometimes we will consider species separately, i.e. fix some component of the mixture $\mathcal{A}_i$. We define a space together with its norm
\begin{equation*}
L_{k,i}^1 = \left\{ g \ \text{measurable}: \int_{\mathbb{R}^3} \left|g(v)\right|  \left\langle v \right\rangle_i^{k} \mathrm{d}v < \infty, k\geq 0 \right\}, \left\| g \right\|_{L_{k,i}^1} = \int_{\mathbb{R}^3} \left|g(v)\right| \left\langle v \right\rangle_i^{k} \mathrm{d}v.
\end{equation*}
Note that the norm of $\mathbb{F}$ in $L_k^1$ is related to the norm of its components $f_i$ in the space $L_{k,i}^1$ via 
$
\left\| \mathbb{F} \right\|_{L_k^1} = \sum_{i=1}^{I} \left\| f_i \right\|_{L_{k,i}^1}.
$

Finally, since we use bracket forms $\left\langle \cdot \right\rangle$ defined in \eqref{vi}, the monotonicity property holds, i.e.
\begin{equation}\label{monotonicity of norm}
\left\| f_i \right\|_{L_{k_1,i}^1} \leq \left\| f_i \right\|_{L_{k_2,i}^1} \ \text{and} \ \left\| \mathbb{F} \right\|_{L_{k_1}^1} \leq \left\| \mathbb{F} \right\|_{L_{k_2}^1}, \text{whenever} \ 0\leq \ k_1 \leq k_2.
\end{equation}

\subsection{Main Results}

We study the Cauchy problem for system of spatially homogeneous Boltzmann equations  for the mixture of gases $\mathcal{A}_1$, $\dots$,  $\mathcal{A}_I$:
\begin{equation}\label{Cauchy problem}
\left\{ \begin{split} & \partial_t \mathbb{F}(t,v) = \mathbb{Q}(\mathbb{F})(t,v), \quad  t>0, \ v \in \mathbb{R}^3, \\& \mathbb{F}(0,v) = \mathbb{F}_0(v),\end{split} \right.
\end{equation}
where $\mathbb{F}$ is a vector of distribution functions $\mathbb{F}=[f_i]_{1\leq i \leq I}$,  $f_i$ being distribution function of the component $\mathcal{A}_i$, $i=1,\dots,I$, as defined in \eqref{distribution function vector form}, and  $\mathbb{Q}(\mathbb{F})=\left[\sum_{j=1}^I Q_{ij}(f_i,f_j)\right]_{1\leq i \leq I}$ is a collision operator introduced in (\ref{Qij}, \ref{collision operator vector form}).

We consider the particular case  when  the transition probability terms $\mathcal{B}_{ij}$, $i,j=1,\dots,I$  are assumed to   take the  form 
\begin{equation}\label{cross section}
\mathcal{B}_{ij}(v,v_*, \sigma) =\left|u\right|^{\gamma_{ij}} \, b_{ij}(\sigma \cdot \hat{u}), \ \gamma_{ij}\in (0,1], \ \text{and} \ b_{ij}(\sigma \cdot \hat{u})\in L^1(S^2; \mathrm{d}\sigma),
\end{equation}
where $u:=v-v_*$, $\hat{u}:=u/\left|u\right|$. This form of cross section corresponds to variable hard potentials with an integrable angular part.   In the mixture setting, both potential $\gamma_{ij}$ and angular kernel $b_{ij}$ may depend on species $\mathcal{A}_i$ and $\mathcal{A}_j$. In order to satisfy micro-reversibility assumptions \eqref{Bij micro}, it is supposed that
\begin{equation*}
\gamma_{ij} = \gamma_{ji}, \quad \text{and} \quad b_{ij}(\sigma \cdot \hat{u}) = b_{ji}(\sigma \cdot \hat{u}),
\end{equation*}
for any choice $i,j=1,\dots,I$. Moreover,  let $\overline{\gamma}$  and $\bar{\bar{\gamma}}$  denote respectively the minimum and  the maximum value of potentials $\gamma_{ij}$, i.e.
\begin{equation}\label{gamma max}
\overline{\gamma} =  \min_{1\leq i, j \leq I}\gamma_{ij}, \quad  \bar{\bar{\gamma}}= \max_{1\leq i, j \leq I}\gamma_{ij}.
\end{equation}

\subsubsection{Povzner lemma by angular averaging} The essential ingredient of this manuscript is the Povzner lemma obtained by averaging in the scattering angle representation of the collision kernel, originally introduced in \cite{Bob97}, \cite{GambaBobPanf04},  for the case of elastic and inelastic collisions.
It estimates the positive contribution of the collision operator after integration  against $\sigma \in S^2$, that is crucial for all further proofs. 

\begin{lemma}[Povzner  lemma by angular averaging for the mixing model]\label{Povzner intro} Let the angular part $b_{ij}(\sigma \cdot \hat{u})$ of the cross-section be integrable in $\sigma$ variable (that is $b_{ij} \in L^1(S^2; \mathrm{d}\sigma)$), $\hat{u}$ being the normalized relative velocity $u=v-v_*$.
	Let $v'$ and $v'_*$ be functions of $v,v_*, \sigma$ as in \eqref{collisional rules intro}, with $m_i, m_j >0$.  Then the following estimate holds  for any  fixed $i, j$,
	\begin{equation}\label{Povzner estimate gain}
	\int_{S^2} \left( \left\langle v' \right\rangle_i^k +  \left\langle v'_* \right\rangle_j^k  \right) \, b_{ij}(\sigma \cdot \hat{u}) \, \mathrm{d}\sigma \leq  \boldsymbol{\mathcal{C}}^{ij}_\frac{k}{2} \left(\left\langle v \right\rangle_i^2 + \left\langle v_* \right\rangle_j^2 \right)^\frac{k}{2},
	\end{equation}
	where constant $\boldsymbol{\mathcal{C}}^{ij}_\frac{k}{2}$ tends to zero as $k$  grows 
	and moreover
	\begin{equation}\label{povzner constant prop}
\boldsymbol{\mathcal{C}}^{ij}_\frac{k}{2} -	\left\| b_{ij} \right\|_{L^1(\mathrm{d}\sigma)} <0, \qquad \text{for  any} \ k\ge  k^{ij}_*, \ \  \ 1\leq i, j \leq I,
	\end{equation}	
where each   $k^{ij}_*$ depends on $b_{ij}$ and  $r_{ij}$.
\end{lemma}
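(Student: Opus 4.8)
The plan is to adapt the angular averaging technique of \cite{Bob97,GambaBobPanf04,GambaPanfVil09,Gamba13} to the mixture, the crucial new ingredient being that the dimensionless bracket forms \eqref{vi} turn the microscopic energy balance into an \emph{exact} conservation law for the weight. Writing $a_\ell:=m_\ell/\sum_n m_n$, from the center--of--mass form \eqref{cof-rv-coor-2} together with $|v'-v'_*|=|u|$ one checks the elementary identity
\[
m_i|v'|^2+m_j|v'_*|^2=(m_i+m_j)|V|^2+\tfrac{m_im_j}{m_i+m_j}|u|^2=m_i|v|^2+m_j|v_*|^2,
\]
whence $\japvip^2+\japvjps^2=2+a_i|v'|^2+a_j|v'_*|^2=\japvi^2+\japvjs^2=:E$. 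Thus $\japvip^2=E\,x$ and $\japvjps^2=E(1-x)$ for some $x=x(\sigma)\in(0,1)$, and the left side of \eqref{Povzner estimate gain} equals $E^{k/2}\int_{S^2}\bigl(x^{k/2}+(1-x)^{k/2}\bigr)\bij(\sigma\cdot\hat u)\,\mathrm d\sigma$. I would therefore \emph{define}
\[
\cpov_{k/2}:=\sup_{v,v_*\in\mathbb R^3}\int_{S^2}\bigl(x(\sigma)^{k/2}+(1-x(\sigma))^{k/2}\bigr)\,\bij(\sigma\cdot\hat u)\,\mathrm d\sigma,
\]
so that \eqref{Povzner estimate gain} holds by construction; since $x^{k/2}+(1-x)^{k/2}\le 1$ on $[0,1]$ for $k\ge 2$, one has $\cpov_{k/2}\le\bijnjed$ automatically, and it remains only to prove the strict decay \eqref{povzner constant prop}, for which it suffices to show $\cpov_{k/2}\to0$ as $k\to\infty$.

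Next I would make $x$ explicit. With $A:=\tfrac{m_i+m_j}{\sum_n m_n}|V|^2$, $B:=\tfrac{m_im_j}{(m_i+m_j)\sum_n m_n}|u|^2$ and $\psi$ the angle between $\sigma$ and $V$, expanding \eqref{cof-rv-coor-2} and using $r_{ij}=m_i/(m_i+m_j)$ gives
\[
\japvip^2=1+r_{ij}A+(1-r_{ij})B+2\sqrt{r_{ij}(1-r_{ij})AB}\,\cos\psi,\qquad E=2+A+B,
\]
and the companion formula for $\japvjps^2$ with $r_{ij}\leftrightarrow 1-r_{ij}$ and $\cos\psi\to-\cos\psi$. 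Completing the square yields $\japvjps^2\ge 1+\max\{r_{ij}B,(1-r_{ij})A\}\sin^2\psi\ge 1+\tfrac\rho2(A+B)\sin^2\psi$, with $\rho:=\min\{r_{ij},1-r_{ij}\}\in(0,\tfrac12]$, and symmetrically for $\japvip^2$; also $1/E\le x\le1-1/E$ since $\japvip^2,\japvjps^2\ge1$. From these I would extract the key \emph{configuration--uniform} fact: for $\eta$ sufficiently small (depending only on $\rho$), $\{x(\sigma)>1-\eta\}\cup\{x(\sigma)<\eta\}$ forces $\sin^2\psi<4\eta/\rho$, hence lies in the union of the two spherical caps of angular radius $\arcsin\sqrt{4\eta/\rho}$ about $\pm V/|V|$, a set of $\sigma$--measure at most $C\eta/\rho$ with $C$ absolute.

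The conclusion is then the standard splitting. Given $\varepsilon>0$, absolute continuity of the finite measure $\bij(\sigma\cdot\hat u)\,\mathrm d\sigma$ on $S^2$ — whose modulus is independent of $\hat u$ by rotational invariance, and so depends only on $\bij$ — provides $\delta>0$ with $\int_S\bij(\sigma\cdot\hat u)\,\mathrm d\sigma<\varepsilon$ whenever $|S|<\delta$. Choosing $\eta$ with $C\eta/\rho<\delta$ (the more disparate the masses, i.e. the smaller $\rho$, the smaller $\eta$ must be — this is the announced dependence), and splitting $S^2$ into $G:=\{\eta\le x\le 1-\eta\}$, where $x^{k/2}+(1-x)^{k/2}\le 2(1-\eta)^{k/2}$, and its complement, of measure $<\delta$, where we merely use $x^{k/2}+(1-x)^{k/2}\le 1$, gives, uniformly in $(v,v_*)$,
\[
\int_{S^2}\bigl(x^{k/2}+(1-x)^{k/2}\bigr)\bij(\sigma\cdot\hat u)\,\mathrm d\sigma\le 2(1-\eta)^{k/2}\bijnjed+\varepsilon,
\]
hence $\cpov_{k/2}\le 2(1-\eta)^{k/2}\bijnjed+\varepsilon$. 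Letting $k\to\infty$ and then $\varepsilon\to0$ shows $\cpov_{k/2}\to0$, so there is $k^{ij}_*$ — depending on $\bij$ (through $\delta$ and $\bijnjed$) and on $r_{ij}$ (through $\rho$, hence $\eta$) — with $\cpov_{k/2}<\bijnjed$ for all $k\ge k^{ij}_*$, which is \eqref{povzner constant prop}.

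The main obstacle is exactly the mismatch of axes: the directions making $x$ close to $0$ or $1$ are the poles $\pm V/|V|$, which are unrelated to the axis $\hat u$ of the weight $\bij(\sigma\cdot\hat u)$, and $|V|/|u|$ is unbounded, so one cannot localize the bad set against the weight. The remedy will be the configuration--uniform measure bound $|\{x>1-\eta\}\cup\{x<\eta\}|\le C\eta/\rho$ together with the uniform (over rotations) absolute continuity of $\bij\,\mathrm d\sigma$, which uses only $L^1$-integrability of $\bij$; tracking constants along the chain $\rho\rightsquigarrow\eta\rightsquigarrow\delta\rightsquigarrow k$ yields the dependence of $k^{ij}_*$ on $r_{ij}$ and $\bij$ and makes transparent that more disparate species masses produce slower decay.
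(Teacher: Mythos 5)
Your proof is correct and takes a genuinely different route from the paper. You identify the same pivot — the bracket normalization turns the pointwise energy balance into $\japvip^2+\japvjps^2=E:=\japvi^2+\japvjs^2$, so writing $\japvip^2=Ex(\sigma)$, $\japvjps^2=E(1-x(\sigma))$ reduces everything to controlling $\int_{S^2}(x^{k/2}+(1-x)^{k/2})\bij(\sigma\cdot\hat u)\,\mathrm d\sigma$. From there the strategies diverge. The paper sets up the energy identity $\japvip^2=p+\lambda\,\sigma\cdot\hat V$ (Lemma~\ref{energy-identity-mixture}), Taylor-expands $(p\pm\lambda\mu)^{k/2}$ about $\mu=0$ to second order with integral remainder, passes to polar coordinates about $\hat u$, and bounds each piece $P_1,P_2,P_{3_1},P_{3_2},P_{3_3}$ separately, obtaining a constant that provably decays (by monotone convergence for $P_{3_1}$) and, when $\bij\in\Linf$, a fully explicit formula $\cpovinf$ whose dependence on $r$ is then plotted. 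You instead \emph{define} $\cpov_{k/2}$ as a sup, observe the trivial bound $\cpov_{k/2}\le\bijnjed$ from $x^p+(1-x)^p\le1$ for $p\ge1$, and prove decay to zero by a measure--theoretic split: completing the square in the explicit formula for $\japvjps^2$ in the variables $A,B,\cos\psi$ gives $\japvjps^2\ge1+\tfrac{\rho}{2}(A+B)\sin^2\psi$ with $\rho=\rmin$, whence the bad set $\{x<\eta\}\cup\{x>1-\eta\}$ is confined to two $\psi$--caps of measure $O(\eta/\rho)$, and rotation-uniform absolute continuity of $\bij\,\mathrm d\sigma$ kills the contribution there while $2(1-\eta)^{k/2}\bijnjed\to0$ kills the good set. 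Your approach is shorter, more conceptual, and makes the geometric obstruction ($\hat V$ not aligned with $\hat u$) and the chain $r_{ij}\rightsquigarrow\rho\rightsquigarrow\eta\rightsquigarrow\delta\rightsquigarrow k^{ij}_*$ very transparent; what it gives up is the explicit Povzner constant and the quantitative decay rate in $k$ (the paper's $\cpovinf$ in the $\Linf$ case), which the paper uses to discuss and plot exactly how badly disparate masses degrade the decay. One small point worth recording: your $\cpov_{k/2}$ is manifestly nonincreasing in $k$ (pointwise monotonicity of $x^{k/2}+(1-x)^{k/2}$), which is the property the downstream moment arguments (Sections~\ref{section proof generation of poly}--\ref{Section gen prop exp mom}) actually need, so your constant plugs in with no loss.
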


The proof of  Lemma~\ref{Povzner intro} genuinely reflects difference between single and multicomponent gas, with an accent on writing collisional rules in a  convex combination form for mixtures, in contrast to symmetric or ``half-half" writing for the single component gas. It turned out that single component case due to symmetry  had a lot of room for estimates and further simplification, presented in \cite{GambaBobPanf04} for example. For  mixtures, this is not the case any longer, and writing should be exact as much as possible: we use Taylor expansion of second order with a reminder in the integral form, and estimates are done only in the reminder.

A very important consequence of the Povzner lemma is the ability to estimate moments of the collision operator. In particular, averaging over the sphere yields decay properties of the gain term polynomial  moment with respect to its order. This decay allows polynomial moments of loss term to prevail in dynamics, when sufficiently many moments are taken into account. In a single component gas, it suffices to take $2+$ order of polynomial moment, that is  slightly more than energy, to obtain this property \cite{GambaBobPanf04}.  Mixtures bring great novelty in this aspect, too: decay properties of the constant issuing from the Povzner lemma strongly depend on the two-body mass fraction parameter $r_{ij}$. We study this issue in detail in the case $b_{ij} \in  L^\infty(S^2; \mathrm{d}\sigma)$ when it is possible to explicitly calculate the  constant $\boldsymbol{\mathcal{C}}^{ij}_{k/2}$ from \eqref{povzner constant prop}. It will be shown that when $r_{ij}=1/2$ (which corresponds to $m_i=m_j$), we recover the same decay properties of the constant $\boldsymbol{\mathcal{C}}^{ij}_{k/2}$ as in the case of single gas component. However, when mixtures are considered, we observe that as much as $r_{ij}$ deviates from $1/2$, the  larger $k^{ij}_*$ that ensures  \eqref{povzner constant prop} is,  or  larger and larger order of moment that guarantees prevail of loss term moment is.

\subsubsection{Existence and uniqueness theory}
In this manuscript, we discuss existence and uniqueness for the vector value   solution $\mathbb{F}$ to the initial value problem \eqref{Cauchy problem} of space homogeneous Boltzmann equations for monatomic gas mixtures, with transition probabilities (or collision kernels)  associated to species $\mathcal{A}_i$ and $\mathcal{A}_j$, $i,j \in \left\{1,\dots,I\right\}$ having hard potential growth of order $\left|u\right|^{\gamma_{ij}}$ for $\gamma_{ij} \in (0,1]$ and an  integrable angular part $b_{ij}$, with an initial total mixture number density and energy bounded below (i.e. the initial data can not be singular measure), and  have at least  a $k_*$ (scalar) polynomial moments, 
\begin{multline}\label{kstar}
k_* \geq \max\{ \overline{k} , 2+2\bar{\bar{\gamma}} \}  \ \quad\ \text{for} \ \ 
\overline{k} =  \max_{1\leq i, j \leq I}\{ k^{ij}_*\} \\  \text{and}\ \ \overline{\gamma}=   \min_{1\leq i, j \leq I}\gamma_{ij}, \quad \bar{\bar{\gamma}} = \max_{1\leq i, j \leq I}\gamma_{ij}.
\end{multline} 
 chosen to ensure the inequality \eqref{povzner constant prop} holds for any $i,j=1,\dots,I$.

Such a study  fits into an abstract framework of ODE theory in Banach spaces, which can be found in \cite{Martin}. For the Boltzmann equation, the application of this theory was clarified in \cite{GambaAlonso18}, after being recognized in \cite{Bressan}. The formulation of  theorem that we apply in this manuscript is given in Appendix \ref{Appendix exi and uni}. As for the choice of Banach space, it is known that natural Banach space to solve the Boltzmann equation is $L^1$ polynomially weighted, or in mixture setting space $L_k^1$ defined in \eqref{space L_k^1}. More precisely, here we take $k=2$, because the norm in that space is related to energy whose conservation  is exploited. \\

In order to apply Theorem \ref{Theorem general}, we need to find an invariant region $\Omega \subset L_2^1$ in which collision operator $\mathbb{Q}: \Omega \rightarrow L_2^1$ will satisfy \emph{(i) H\"{o}lder continuity},  \emph{(ii) Sub-tangent} and \emph{(iii) one-sided Lipschitz} conditions. 

To that end, we first study the  map $\mathcal{L}_{\overline{\gamma}, k_*}:[0,\infty) \rightarrow \mathbb{R}$,  defined with $$\mathcal{L}_{\overline{\gamma}, k_*}(x)=- Ax^{1+ \frac{\overline{\gamma}}{k_*}}+  B   x,$$ 
where $A$ and $B$ are positive constants,  $\overline{\gamma}\in (0,1]$ and  $k_*$ defined in  \eqref{kstar}. This map has only one root, denoted with $x^*_{\overline{\gamma}, k_*}$, at which $\mathcal{L}_{\overline{\gamma}, k_*}$ changes from positive to negative. Thus, for any $x\geq0$, we may write
\begin{equation*}
\mathcal{L}_{\overline{\gamma}, k_*}(x) \leq \max_{0\leq x \leq x^*_{\overline{\gamma}, k_*}} \mathcal{L}_{\overline{\gamma}, k_*}(x) =: \mathcal{L}^*_{\overline{\gamma}, k_*}.
\end{equation*}
Define
\begin{equation}\label{c dva plus dva gama def}
C_{k_*} := x^*_{\overline{\gamma}, k_*} +  \mathcal{L}^*_{\overline{\gamma}, k_*}.
\end{equation}

Now, we are in position to define  the bounded, convex and closed subset $\Omega \subset L_2^1$,
\begin{multline*}
\Omega = \Big\{ \mathbb{F}(t, \cdot) \in L_2^1: \mathbb{F}\geq 0 \ \text{in} \ v,  \sum_{i=1}^I \int_{\mathbb{R}^3} m_i v \, f_i(t,v)  \mathrm{d}v=0,  
\\
\exists \,  c_{0}, C_0, c_{2},  ,  C_{2},  C_{2+\varepsilon} >0, \text{and} \ C_0 < c_{2}, \text{ such that} \ \forall t\geq0, 
\\
c_{0} \leq \mathfrak{m}_0[\mathbb{F}](t) \leq C_0,  \  c_{2} \leq \mathfrak{m}_2[\mathbb{F}](t) \leq C_{2},   
\\
\mathfrak{m}_{2+\varepsilon}[\mathbb{F}](t) \leq C_{2+\varepsilon}, \ \text{for } \ \varepsilon>0,  
\\
\mathfrak{m}_{k_*}[\mathbb{F}](t) \leq C_{k_*}, \ \text{with} \ C_{k_*} \ \text{from } \eqref{c dva plus dva gama def}   \  \Big\},
\end{multline*}
where 
\begin{equation*}
\mathfrak{m}_{2+\varepsilon}[\mathbb{F}](t) = \left\| \mathbb{F} \right\|_{L_{2+\varepsilon}^1}= \sum_{i=1}^I \int_{\mathbb{R}^3} \left| f_i(t,v)  \right|\left\langle v \right\rangle_i^{2+\epsilon} \mathrm{d}v,
\end{equation*}
for any $\epsilon>0$, which can be arbitrary small.

Then, existence and uniqueness  theory of  a vector value $\mathbb{F}$ solution to the Cauchy problem \eqref{Cauchy problem} fits into the study of ODE in  a Banach space $(L_2^1, \left\| \cdot \right\|_{L_2^1})$ and its bounded, convex and closed subset $\Omega$. The collision operator $\mathbb{Q}$ is  viewed as a map $\mathbb{Q}: \Omega \rightarrow L_2^1$. We will show that it satisfies H\"{o}lder continuity, sub-tangent and one-sided Lipschitz conditions, which will enable us to prove the following Theorem.

\begin{theorem}[Existence and Uniqueness]\label{theorem existence uniqueness} Assume that $\mathbb{F}(0,v)=\mathbb{F}_0(v) \in \Omega$. Then the Boltzmann system \eqref{Cauchy problem} for the cross section \eqref{cross section} has the unique solution in $\mathcal{C}\left(\left[0, \infty \right), \Omega \right) \cap \mathcal{C}^1\left(\left(0,\infty  \right), L_2^1 \right)$.
\end{theorem}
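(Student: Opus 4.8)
The strategy is to realize the Cauchy problem \eqref{Cauchy problem} as the ordinary differential equation $\dt\F=\QQ[\F]$ in the Banach space $\big(\ldva,\|\cdot\|_{\ldva}\big)$ posed on the closed set $\Omega$, and to invoke the abstract existence--uniqueness theorem for such flows due to Martin \cite{Martin}, in the form adapted to Boltzmann operators in \cite{GambaAlonso18} (following \cite{Bressan}) and recalled as Theorem~\ref{Theorem general} in Appendix~\ref{Appendix exi and uni}. Concretely, I would verify, for $\QQ\colon\Omega\to\ldva$: \emph{(i)} that $\Omega$ is a nonempty, bounded, convex and closed subset of $\ldva$ and that $\QQ$ actually takes values in $\ldva$; \emph{(ii)} H\"older continuity of $\QQ$ on $\Omega$; \emph{(iii)} the sub-tangent condition $\lim_{h\to0^{+}}h^{-1}\,\mathrm{dist}_{\ldva}(\F+h\,\QQ[\F],\Omega)=0$ for every $\F\in\Omega$; and \emph{(iv)} the one-sided Lipschitz condition in the $\ldva$ semi-inner product. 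Once \emph{(i)}--\emph{(iv)} hold, Theorem~\ref{Theorem general} produces a unique $\F\in\mathcal C([0,\infty),\Omega)$ solving \eqref{Cauchy problem}; continuity of $t\mapsto\QQ[\F(t)]$ in $\ldva$, which follows from \emph{(ii)}, then bootstraps to $\F\in\mathcal C^{1}((0,\infty),\ldva)$, which is the content of Theorem~\ref{theorem existence uniqueness}.

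The structure of $\Omega$ and the property $\QQ[\Omega]\subset\ldva$ rest on the collision invariants and on polynomial moment bounds. Pairing the $(i,j)$ and $(j,i)$ contributions and using that \eqref{collisional rules intro} conserves $m_iv+m_jv_*$ and $\tfrac{m_i}{2}|v|^{2}+\tfrac{m_j}{2}|v_*|^{2}$, together with the unit Jacobian already computed, shows that $\QQ$ annihilates the test functions $1$ (species by species), $m_iv$ and $\tfrac{m_i}{2}|v|^{2}$; since $\japvi^{2}=1+\tfrac{m_i}{\sum_j m_j}|v|^{2}$, this means $\mathfrak m_{0,i}[\QQ[\F]]=0$ for each $i$, $\sum_i\int m_iv\,[\QQ[\F]]_i\,\mathrm dv=0$ and $\mathfrak m_{2}[\QQ[\F]]=0$, so the affine constraints in $\Omega$ (the bounds on $\pmnulaF$, the zero-momentum condition, the bounds on $\pmdvaF$) are left \emph{exactly} invariant by the increment $\F\mapsto\F+h\,\QQ[\F]$. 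For the higher moments I would use the angular-averaged Povzner Lemma~\ref{Povzner intro}: writing $\mathfrak m_{\ks}[\QQ[\F]]=\sum_{i,j}\int\japvi^{\ks}Q_{ij}(f_i,f_j)\,\mathrm dv$ in weak form, bounding the gain part by \eqref{Povzner estimate gain} with the constant $\cpov_{\ks/2}$, bounding the loss part from below by interpolating the collision frequency (itself bounded below by a positive power of $\japvi$, thanks to the number-density lower bound and energy upper bound inside $\Omega$ and the form \eqref{cross section} of the kernel) against $\pmksF$, and using $\ks\ge\kmax$ so that $\cpov_{\ks/2}-\bijnjed<0$ by \eqref{povzner constant prop} — whence the leading loss dominates the leading gain — yields the differential inequality $\mathfrak m_{\ks}[\QQ[\F]]\le\mathcal L_\gk(\pmksF)=-A(\pmksF)^{1+\gmax/\ks}+B\,\pmksF$. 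The definition \eqref{c dva plus dva gama def} of $\cks$ is tailored so that $y+h\,\mathcal L_\gk(y)\le\cks$ whenever $0\le y\le\cks$ and $0<h\le1$, so that the bound $\pmksF\le\cks$ is preserved by the increment; the same scheme with $2+\varepsilon$ in place of $\ks$ controls $\pmdvapF$. Boundedness of $\Omega$ in $\ldva$ is the identity $\|\F\|_{\ldva}=\pmdvaF\le\cdva$; convexity holds because every defining relation is affine or a sublevel set of a linear functional; closedness follows from $\ldva$-continuity of $\F\mapsto\pmdvaF,\pmnulaF$ together with Fatou's lemma for the constraints on $\pmdvapF$ and $\pmksF$. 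Finally $\QQ[\F]\in\ldva$ follows from the same gain/loss bounds, using $|u|^{\gij}\le C\,\japvi^{\gij}\japvj^{\gij}$ and the bound $\|\F\|_{\ldvagamma}\le\cks$, valid by \eqref{monotonicity of norm} since $\ks\ge2+2\gmax$.

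For \emph{(ii)}, bilinearity gives $Q_{ij}(f_i,f_j)-Q_{ij}(g_i,g_j)=Q_{ij}(f_i-g_i,f_j)+Q_{ij}(g_i,f_j-g_j)$; estimating the $\ldva$ norm of each piece with the weight splitting above and the $L^{1}(\mathrm d\sigma)$ bound on $\bij$ gives $\|\QQ[\F]-\QQ[\G]\|_{\ldva}\le C\,\|\F-\G\|_{\lkjed}$ for some $k_1$ slightly above $2$; the interpolation $\|\cdot\|_{\lkjed}\le\|\cdot\|_{\ldva}^{1-\theta}\|\cdot\|_{L^{1}_{\ks}}^{\theta}$ with $\theta=(k_1-2)/(\ks-2)$, combined with the uniform bound $\pmksF\le\cks$ on $\Omega$, turns this into $\|\QQ[\F]-\QQ[\G]\|_{\ldva}\le C\,\|\F-\G\|_{\ldva}^{1-\theta}$, i.e. H\"older continuity with exponent $1-\theta\in(0,1)$ (one has $\theta\le\tfrac12$ since $k_1-2\le\gmax$ and $\ks-2\ge2\gmax$). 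For \emph{(iv)}, I would test $[\QQ[\F]]_i-[\QQ[\G]]_i$ against $\mathrm{sign}(f_i-g_i)\japvi^{2}$, sum over $i$, and split as above: the diagonal loss term gives $-\sum_i\int|f_i-g_i|\,\nu_i[\F]\,\japvi^{2}\,\mathrm dv\le0$ and is discarded, while all gain contributions (on which the sign has modulus $\le1$) and the off-diagonal loss contributions are dominated by $C\,\|\F-\G\|_{\ldva}$ using the weight splitting and the moment bounds built into $\Omega$; this is the required one-sided Lipschitz estimate.

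The delicate point is the sub-tangent condition \emph{(iii)}, precisely the step whose complete verification for Boltzmann flows was carried out in \cite{GambaAlonso18}. The obstruction is that the loss frequency $\nu_i[\F](v)$ grows like $\japvi^{\gmax}$ in $v$, so $\F+h\,\QQ[\F]$ need \emph{not} be nonnegative for any $h>0$, even though — by the second paragraph — it already respects every other constraint of $\Omega$, either exactly or up to an $O(h)$ absorbed by the strict room built into $\cdva$, $\cdvaplus$ and $\cks$ (the last one through \eqref{c dva plus dva gama def}). The remedy, which I would adapt from \cite{GambaAlonso18}, is to introduce the relative-velocity truncated operator $\QQ^{n}$ obtained by inserting $\mathbf 1_{\{|u|\le n\}}$ into each $Q_{ij}$: it has bounded loss frequency, so $\F+h\,\QQ^{n}[\F]\ge0$ for $h$ small; it still conserves species number, total momentum and total energy, because the truncation is symmetric under the pre/post exchange $(v,v_*)\mapsto(v',v'_*)$ since $|u'|=|u|$; and $\mathfrak m_{\ks}[\QQ^{n}[\F]]\to\mathfrak m_{\ks}[\QQ[\F]]$ as $n\to\infty$ by monotone convergence, which together with $\mathfrak m_{\ks}[\QQ[\F]]\le\mathcal L_\gk(\pmksF)$ shows $\F+h\,\QQ^{n}[\F]\in\Omega$ once $n$ is large and $h$ is small. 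Hence $\mathrm{dist}_{\ldva}(\F+h\,\QQ[\F],\Omega)\le h\,\|\QQ[\F]-\QQ^{n}[\F]\|_{\ldva}$, and since $\|\QQ[\F]-\QQ^{n}[\F]\|_{\ldva}\to0$ as $n\to\infty$ (dominated convergence, controlled by the $\ldvagamma$-bound on $\Omega$), letting $n=n(h)\to\infty$ slowly as $h\to0^{+}$ yields $\mathrm{dist}_{\ldva}(\F+h\,\QQ[\F],\Omega)=o(h)$, which is \emph{(iii)}. With \emph{(i)}--\emph{(iv)} in hand, Theorem~\ref{Theorem general} applies and gives Theorem~\ref{theorem existence uniqueness}.
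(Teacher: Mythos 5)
Your overall strategy is the one the paper uses: cast \eqref{Cauchy problem} as $\dt\F=\QQ[\F]$ in $\big(\ldva,\|\cdot\|_{\ldva}\big)$ on the closed, convex, bounded set $\Omega$, verify that $\QQ:\Omega\to\ldva$ satisfies H\"older continuity, the sub-tangent condition, and the one-sided Lipschitz condition, and then invoke Theorem~\ref{Theorem general}. Your treatment of conservation laws, H\"older continuity, and the one-sided Lipschitz estimate is essentially the paper's, up to two small differences. In the H\"older step you split bilinearly as $Q_{ij}(f_i,f_j)-Q_{ij}(g_i,g_j)=Q_{ij}(f_i-g_i,f_j)+Q_{ij}(g_i,f_j-g_j)$, while the paper uses the symmetrized identity $\tfrac12\bigl(Q_{ij}(f_i-g_i,f_j+g_j)+Q_{ij}(f_i+g_i,f_j-g_j)\bigr)$, which pairs more naturally with the $(i,j)\leftrightarrow(j,i)$ weak form \eqref{weak form ij+ji}; and the paper interpolates between $\ldva$ and $\ldvadvagamma$, giving exponent exactly $\tfrac12$, whereas you interpolate to $L^1_{\ks}$. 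These are inessential.

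The real divergence is in the sub-tangent condition. You propose truncating the \emph{kernel} by inserting $\mathbf 1_{\{|u|\le n\}}$ into each $Q_{ij}$. The paper instead truncates the \emph{data}, setting $\F_R(v)=\chi_R(v)\F(v)$ and $\W_R=\F+h\,\QQ(\F_R)$ (see Proposition~\ref{prop tangency}). These are not interchangeable here, and your version has a gap. First, the claim that $\mathfrak m_{\ks}[\QQ^n(\F)]\to\mathfrak m_{\ks}[\QQ(\F)]$ ``by monotone convergence'' is not right: the signed integrand $f_if_j\bigl(\japvip^{\ks}+\japvjps^{\ks}-\japvi^{\ks}-\japvjs^{\ks}\bigr)\mathcal B_{ij}\mathbf 1_{\{|u|\le n\}}$ is not monotone in $n$; only the gain and loss parts separately are, and for a generic $\F\in\Omega$ each of those involves the moment $\mathfrak m_{\ks+\gmax}[\F]$, which is \emph{not} assumed finite in $\Omega$. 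So the limit may be the indeterminate $\infty-\infty$, and the statement $\mathfrak m_{\ks}[\QQ(\F)]\le\mathcal L_{\gk}(\pmksF)$ on which you lean is not an absolutely convergent inequality you can pass to the limit from. Second, even putting the limit aside, one cannot run the coercive ODI of Lemma~\ref{lemma Q za tangent cond} directly for $\QQ^n$: the lower bound on the collision frequency (Lemma~\ref{lemma lower bound}) relies on $|u|^{\gij}$ being unbounded, and it fails for the kernel $|u|^{\gij}\mathbf 1_{\{|u|\le n\}}$ once $|v|\gtrsim n$. Consequently, for $\F$ at or near the boundary $\pmksF=\cks$ you do not get $\mathfrak m_{\ks}[\QQ^n(\F)]\le 0$ for finite $n$ by any estimate you have written down, so you cannot conclude $\F+h\,\QQ^n(\F)\in\Omega$.

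The paper's $v$-truncation sidesteps all of this cleanly: $\F_R$ is compactly supported, so every polynomial moment of $\F_R$ and of $\QQ(\F_R)$ is finite and the weak-form manipulations underlying Lemma~\ref{lemma Q za tangent cond} are absolutely convergent for $\F_R$; the collision frequency $\nu_i(\F_R)(v)$, while still growing like $\japvi^{\gmax}$, multiplies $\F_R$ which is supported in $\{|v|\le R\}$, so $[\F_R]_i\,\nu_i(\F_R)\le f_i\,K(1+R^{\gmax})$ and nonnegativity of $\W_R$ follows for $h<\bigl(K(1+R^{\gmax})\bigr)^{-1}$; the ODI for $\pmks[\F_R]$ applies directly, and for the boundary case one chooses $R$ large enough that $\pmks[\F_R]>x^*_{\gk}$ and hence $\mathcal L_{\gk}(\pmks[\F_R])\le 0$; finally, $\|\QQ(\F)-\QQ(\F_R)\|_{\ldva}\le C_H\|\F-\F_R\|_{\ldva}^{1/2}\to 0$ as $R\to\infty$ by the already-proved H\"older bound, which gives $\mathrm{dist}(\F+h\QQ(\F),\Omega)\le h\varepsilon$. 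If you want to keep the $u$-truncation you will either have to impose an $\ldvagamma\cap L^1_{\ks+\gmax}$ integrability on the invariant set, or truncate the data as well, at which point you have essentially reproduced the paper's argument.
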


\begin{remark} Let us point out that for the existence and uniqueness result no conditions on initial entropy are necessary. However, if the initial data has finite entropy, then the entropy inequality implies that it will remain bounded for all times. Let us give a sketch of the proof. Definition of the entropy and entropy inequality is taken from \cite{DesMonSalv}, Proposition 1. 
\begin{definition}[Mixture entropy and entropy production]
Let $\mathbb{F}$ be a vector value distribution function as in \eqref{distribution function vector form}. The (mixture) entropy is defined as
\begin{equation}\label{entropy def}
\eta(t) = \sum_{i=1}^I \int_{ \mathbb{R}^3} f_i \log f_i \mathrm{d}v,
\end{equation}
while the (mixture) entropy production is given with
\begin{equation}\label{entropy production def}
D(\mathbb{F})= \sum_{i=1}^I \int_{ \mathbb{R}^3} \left[\mathbb{Q}(\mathbb{F})\right]_i \log f_i \mathrm{d}v.
\end{equation}
\end{definition}
Then the following Proposition holds.
\begin{proposition}[Entropy inequality or the first part of the H-theorem, \cite{DesMonSalv}]
Let us assume that the cross section terms $\mathcal{B}_{ij}$, $1\leq i, j \leq I$, are positive almost everywhere and that  $\mathbb{F}\geq0$ is such that both collision operator $\mathbb{Q}(\mathbb{F})$ and entropy production are well defined. Then the entropy production is non-positive, i.e.
$
D(\mathbb{F}) \leq 0.
$
\end{proposition}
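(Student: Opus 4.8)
The plan is to bring the entropy production $D(\F)$ into the classical symmetrized form and then conclude by an elementary pointwise inequality. Writing for brevity $f_i=f_i(v)$, $f_{j*}=f_j(v_*)$, $f_i'=f_i(v'_{ij})$, $f_{j*}'=f_j(v'_{*ij})$, the target identity is
\[
D(\F)=-\frac14\sum_{i,j=1}^{I}\int_{\mathbb{R}^3}\int_{\mathbb{R}^3}\int_{S^2}\bigl(f_i'f_{j*}'-f_if_{j*}\bigr)\,\log\frac{f_i'f_{j*}'}{f_if_{j*}}\;\mathcal{B}_{ij}(v,v_*,\sigma)\,\mathrm{d}\sigma\,\mathrm{d}v_*\,\mathrm{d}v ,
\]
with $v'_{ij},v'_{*ij}$ as in \eqref{collisional rules intro}. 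Once this is established, each integrand is non-negative because $(a-b)\log(a/b)\ge0$ for all $a,b>0$ and $\mathcal{B}_{ij}>0$ almost everywhere, so $D(\F)\le0$ follows at once, with equality characterizing the collisional (Maxwellian) equilibria.

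I would derive this identity in three steps. \emph{Step 1.} For a fixed ordered pair $(i,j)$ and a test function $\psi$, I combine the weak form of $Q_{ij}$ obtained from \eqref{Qij} with the pre--post collisional change of variables $(v,v_*)\mapsto(v'_{ij},v'_{*ij})$ --- which has unit Jacobian and preserves $|v-v_*|$, as recorded in the Jacobian computation preceding \eqref{Qij} --- together with the micro-reversibility $\mathcal{B}_{ij}(v,v_*,\sigma)=\mathcal{B}_{ij}(v',v'_*,\sigma')$ from \eqref{Bij micro}, to rewrite
\[
\int_{\mathbb{R}^3}Q_{ij}(f_i,f_j)\,\psi\,\mathrm{d}v=\int_{\mathbb{R}^3}\int_{\mathbb{R}^3}\int_{S^2}f_if_{j*}\,\bigl(\psi(v'_{ij})-\psi(v)\bigr)\,\mathcal{B}_{ij}(v,v_*,\sigma)\,\mathrm{d}\sigma\,\mathrm{d}v_*\,\mathrm{d}v .
\]
\emph{Step 2.} I pair the $(i,j)$ and $(j,i)$ contributions to $D(\F)$ in \eqref{entropy production def}: in the $Q_{ji}(f_j,f_i)$ term, performing the further substitution $v\leftrightarrow v_*$ followed by $\sigma\mapsto-\sigma$ and using the second micro-reversibility relation $\mathcal{B}_{ij}(v,v_*,\sigma)=\mathcal{B}_{ji}(v_*,v,-\sigma)$ of \eqref{Bij micro} --- together with $r_{ji}=1-r_{ij}$ and $V_{ji}=V_{ij}$ after the swap --- maps the $(j,i)$ gain velocity exactly onto $v'_{*ij}$. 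Choosing $\psi=\log f_i$ in the $(i,j)$ term and $\psi=\log f_j$ in the $(j,i)$ term, summing over all $i,j$, and relabeling $(i,j)\leftrightarrow(j,i)$ in the double sum then gives $2D(\F)=\sum_{i,j}\int\!\int\!\int f_if_{j*}\log\!\bigl(f_i'f_{j*}'/(f_if_{j*})\bigr)\mathcal{B}_{ij}\,\mathrm{d}\sigma\,\mathrm{d}v_*\,\mathrm{d}v$. \emph{Step 3.} Applying the pre--post change of variables once more to this expression sends $f_if_{j*}\mapsto f_i'f_{j*}'$ and flips the sign of the logarithm while leaving $\mathcal{B}_{ij}$ invariant; averaging the two resulting forms yields the displayed identity with the prefactor $-\tfrac14$.

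The step I expect to be the main obstacle is Step 2, which has no analogue in the single-species H-theorem: one must carefully track how the two-body mass fraction $r_{ij}$ of \eqref{cof-rv-coor-2} and the center-of-mass velocity $V_{ij}$ of \eqref{center-of-mass} transform under the combined map $v\leftrightarrow v_*$, $\sigma\mapsto-\sigma$, so that the $(j,i)$ gain velocity lands precisely on $v'_{*ij}$ and the two cross-sections recombine into $\mathcal{B}_{ij}$; only then do the four contributions telescope into $\log\bigl(f_i'f_{j*}'/(f_if_{j*})\bigr)$. Once the two micro-reversibility relations of \eqref{Bij micro} are invoked in the correct order, the symmetrization is formally identical to the mono-component computation. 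All interchanges of the order of integration and all changes of variables used above are legitimate under the standing hypothesis that both $\Q$ and $D(\F)$ are well defined, which together with $\mathcal{B}_{ij}\ge0$ a.e. guarantees the absolute convergence needed.
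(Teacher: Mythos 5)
Your proof is correct. Note that the paper itself does not prove this proposition: it is quoted from \cite{DesMonSalv}, Proposition 1, and the ``sketch of the proof'' in the surrounding remark refers to the \emph{consequence} ($\eta(t)\leq\eta(0)$), not to the entropy inequality itself. So there is no in-paper proof to compare against; you have supplied the argument the authors chose to cite.

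Your derivation is the classical H-theorem symmetrization, and you correctly handle the mixture-specific bookkeeping: after $v\leftrightarrow v_*$ and $\sigma\mapsto-\sigma$, the $(j,i)$ gain velocity indeed lands on $v'_{*ij}$, because $V_{ji}\mapsto V_{ij}$ under the swap and $r_{ji}=1-r_{ij}$, and the second micro-reversibility relation converts $\mathcal{B}_{ji}$ into $\mathcal{B}_{ij}$. One remark on economy: your Steps 1--2 essentially re-derive the paired weak form that the paper already records as \eqref{weak form ij+ji}; starting directly from that identity with $\psi=\log f_i$, $\phi=\log f_j$, relabeling the double sum, and then applying the pre--post change of variables and averaging (your Step 3) gets you to the quarter-symmetrized form with no extra work. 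A small precision point: you write $(a-b)\log(a/b)\geq0$ for $a,b>0$; since $f_i$ may vanish one should note that the inequality extends to $a,b\geq0$ with the standard conventions, and that the hypothesis that $D(\F)$ is well defined is what renders the manipulations (integrability, change of variables) legitimate -- you do flag the latter, which is enough.
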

As an immediate consequence, we get from the Boltzmann equation that $\partial_t \eta \leq 0$, or in other words, $\eta(t)\leq \eta(0)$, for any $t\geq0$. Therefore, we conclude that the entropy inequality implies that mixture entropy remains bounded at any time if initially so.
\end{remark}

\subsubsection{Generation and propagation of polynomial moments} The second part of the manuscript is devoted to the study of generation and propagation of scalar polynomial moments associated to the solution of the Boltzmann system \eqref{Cauchy problem} for the cross section \eqref{cross section}, that initially belongs to $\Omega$. 

First, in the following Lemma, we derive from the Boltzmann system \eqref{Cauchy problem} an ordinary differential inequality for polynomial moment of order $k$, $\mathfrak{m}_k[\mathbb{F}](t)$, for  large enough $k$,  that  relies on the Povzner estimate from Lemma~\ref{Povzner intro},  uniformly in each pair $i,j$. 

\begin{lemma}[Ordinary differential inequality for polynomial moments]\label{poly moments}
	Let $\mathbb{F}=\left[f_i\right]_{i=1,\dots,I}$ be a solution of the Boltzmann system \eqref{Cauchy problem} with the cross section \eqref{cross section}-\eqref{gamma max}. Then the polynomial moment \eqref{poly moment} satisfies the following Ordinary Differential Inequality
	\begin{equation}\label{pomocna 100}
	\frac{\mathrm{d}}{\mathrm{d} t} \mathfrak{m}_k[\mathbb{F}](t) = \sum_{i=1}^I \left[ \mathbb{Q}(\mathbb{F}) \right]_i \left\langle v \right\rangle_i^k \mathrm{d} v
	\leq - A_k \, \mathfrak{m}_k[\mathbb{F}](t)^{1+\frac{\overline{\gamma}}{k}}+  B_k \,  \mathfrak{m}_k[\mathbb{F}](t),
	\end{equation}
	for   large enough $k$ to ensure \eqref{kstar}, and some positive constants $A_k$ and $B_k$.
\end{lemma}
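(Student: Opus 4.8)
The plan is to derive the differential inequality \eqref{pomocna 100} by testing the Boltzmann system against the dimensionless polynomial weight $\japvi^k$, splitting the collision operator into gain and loss contributions, applying the angular-averaged Povzner Lemma~\ref{Povzner intro} to control the gain term, and then using the standard interpolation argument to close the estimate. First I would write, using the weak form of each $Q_{ij}$ and the substitution $v\mapsto v_*$ symmetrization in the $j$-species variable,
\begin{equation*}
\dt \pmkF(t) = \frac12 \sum_{i,j=1}^I \int_{\mathbb{R}^3}\int_{\mathbb{R}^3}\int_{S^2} f_i(v) f_j(v_*) \left( \japvip^k + \japvjps^k - \japvi^k - \japvjs^k \right) \mathcal{B}_{ij}(v,v_*,\sigma)\,\mathrm{d}\sigma\,\mathrm{d}v_*\,\mathrm{d}v,
\end{equation*}
bearing in mind that the weight $\japvi$ for species $i$ and $\japvjs$ for species $j$ do \emph{not} add up to the weights of the post-collisional velocities in the naive way, which is exactly why the bracket form \eqref{vi} and the convex-combination structure of the collisional rules are needed. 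For the loss part, since $\mathcal{B}_{ij}(v,v_*,\sigma)=|u|^{\gij} \bij(\sigma\cdot\hat u)$ with $\bij\in\Ljed$, the $\sigma$-integral just produces $\bijnjed$, and $|u|^{\gij}\ge |u|^{\gmax}$-type lower bounds combined with $\japvi^{\gmax}\lesssim \japvi^{\gij}$ give a coercive term of the form $-c\,\pmkgammaF(t)$ after bounding $|v-v_*|^{\gij}$ below by a multiple of $\japvi^{\gij} - C\japvjs^{\gij}$ and using the lower bound $\pmnulaF\ge \cnulaL$ and $\pmdvaF\ge\cdvaL$ from $\Omega$ (this is where the strictly positive energy is used).

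Next I would handle the gain part: by Lemma~\ref{Povzner intro}, after integrating in $\sigma$,
\begin{equation*}
\int_{S^2}\left(\japvip^k+\japvjps^k\right)\bij(\sigma\cdot\hat u)\,\mathrm{d}\sigma \le \cpov_{\frac{k}{2}}\left(\japvi^2+\japvjs^2\right)^{\frac{k}{2}},
\end{equation*}
and I expand $(\japvi^2+\japvjs^2)^{k/2}$ using the binomial-type ``Gamba–Panferov–Villani'' inequality
\begin{equation*}
\left(\japvi^2+\japvjs^2\right)^{\frac k2} \le \japvi^k + \japvjs^k + \sum_{\ell=1}^{\lfloor k/2\rfloor}\binom{k/2}{\ell}\left(\japvi^{2\ell}\japvjs^{k-2\ell}+\japvi^{k-2\ell}\japvjs^{2\ell}\right),
\end{equation*}
so that the top-order terms $\japvi^k+\japvjs^k$ combine with the loss contribution to yield the factor $(\cpov_{k/2}-\bijnjed)$, which by \eqref{povzner constant prop} is strictly negative once $k\ge\kmax$, hence once \eqref{kstar} holds. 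Multiplying through by $|u|^{\gij}\le (\text{const})(\japvi^{\gmax}+\japvjs^{\gmax})$ and integrating against $f_i(v)f_j(v_*)$, the negative top-order term becomes $-\tilde c\,\pmkgammaF(t)$ (up to lower-order corrections absorbed below), while the mixed ``error'' terms with $1\le\ell\le\lfloor k/2\rfloor$ produce a finite sum of products of moments of orders strictly less than $k+\gmax$, bounded by $\pmkmjedgammaF(t)\,\pmjedgammaF(t)$ plus similar terms; by the moment bounds available in $\Omega$ (finiteness of $\pmnulaF$, $\pmdvaF$, $\pmdvapF$) and by interpolation of polynomial moments, each such product is controlled by a constant times $\pmkF(t)$ to a power strictly less than $1+\gmax/k$ — more precisely by $C_k\,\pmkF(t)$ after using $\pmkmjedgammaF\lesssim\pmkF$ crudely, or sharply by Young's inequality if one wants the exponent explicit. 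Finally I would invoke the standard moment interpolation $\pmkgammaF(t)\ge \pmkF(t)^{1+\gmax/k}/\pmnulaF(t)^{\gmax/k}\ge c\,\pmkF(t)^{1+\gmax/k}$, valid thanks to the upper bound $\pmnulaF\le\cnulaU$, to turn the coercive term into $-A_k\,\pmkF(t)^{1+\gmax/k}$, collect all the remaining terms into $+B_k\,\pmkF(t)$, and arrive at \eqref{pomocna 100}.

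The main obstacle I anticipate is the careful bookkeeping of the mixed error terms coming from the binomial expansion: unlike the single-species case, the two brackets $\japvi$ and $\japvjs$ carry \emph{different} mass-fraction weights, so one must verify that each term $\japvi^{2\ell}\japvjs^{k-2\ell}$, after multiplication by the relative-speed factor $|u|^{\gij}$ and integration, is genuinely of order at most $k-1+\gmax$ in a single variable and hence interpolates below the coercive term — this requires using \eqref{monotonicity of norm}, the species-wise moment bounds in $\Omega$, and possibly a Hölder/Young splitting to keep the constants uniform in the pair $(i,j)$. A secondary technical point is ensuring the lower bound $|v-v_*|^{\gij}\gtrsim \japvi^{\gij}-C\japvjs^{\gij}$ is applied on the right region of velocity space so that, after integrating the negative term, the strictly positive lower bounds on $\pmnulaF$ and $\pmdvaF$ genuinely produce coercivity rather than being swallowed by the error terms; this is the step that genuinely needs the initial data to have non-singular, strictly-positive-energy mass, and it is the analogue of the corresponding estimate in the scalar theory of \cite{GambaPanfVil09, Gamba13}.
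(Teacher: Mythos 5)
Your proposal follows essentially the same route as the paper: integrate the system against the weight $\japvi^k$, symmetrize to the weak form \eqref{weak form any test}, invoke the angular-averaged Povzner Lemma~\ref{Povzner intro} on the gain part, expand the local energy bracket by a binomial-type inequality, control the mixed error terms via the $\Omega$-moment bounds, and finally use Jensen's inequality \eqref{jensen} to replace $\pmkgamma[\F]$ by $\pmkF^{1+\gmax/k}$. Two points of comparison are worth noting. First, the paper's own proof of this Lemma is a one-line deferral to Lemma~\ref{lemma Q za tangent cond} (stated and proved for $k=\ks$ in the existence section, and valid verbatim for any $k\ge\ks$), so all the analytic work you describe is precisely the content of that earlier proof. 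Second, and more substantively, your handling of the loss term relies on a pointwise lower bound of the form $|v-v_*|^{\gij}\gtrsim\japvi^{\gij}-C\japvjs^{\gij}$, whereas the paper instead uses the \emph{averaged} lower bound of Lemma~\ref{lemma lower bound}, namely $\sum_j m_j\int f_j(v_*)\,|v-v_*|^{\gij}\,\mathrm{d}v_*\ge\clb\japvi^{\gmax}$, established using the non-concentration and strict-positivity conditions encoded in $\Omega$. Your pointwise variant can be made to work, but it generates an extra nonnegative term proportional to $\japvjs^{\gij}$ that must then be integrated and re-absorbed into the $B_k\pmkF$ contribution; the averaged bound produces the coercive term directly, which is exactly where the paper spends the hypotheses on $\pmnulaF$, $\pmdvaF$, $\pmdvapF$. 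A small cosmetic difference: the paper first uses $\bigl(\japvi^2+\japvjs^2\bigr)^{k/2}\le\bigl(\japvi+\japvjs\bigr)^k$ and then applies the integer-power inequality (Lemma~\ref{binomial}), collapsing all cross terms onto the single dominant pair $\japvi\japvjs^{k-1}+\japvi^{k-1}\japvjs$ via Lemma~\ref{moment products}, and then bounds that crudely by $2\cdva\pmkF$ rather than by a fractional power of $\pmkF$; you instead apply the fractional-exponent binomial inequality to $(\cdot)^{k/2}$ directly. Either path lands at the same ODI \eqref{pomocna 100}, so the proposal is correct in substance.
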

The proof of this Lemma   follows from  comparison principles for ODE's, which yields the generation and propagation estimates 
stated in  the following Theorem that is proved in Section \ref{section proof generation of poly}.

\begin{theorem}[Generation and propagation of polynomial moments]\label{theorem bound on norm}
	Let $\mathbb{F}$ be a solution of the Boltzmann system \eqref{Cauchy problem} with a cross section \eqref{cross section}-\eqref{gamma max} and  an initial data $\mathbb{F}(0,v)=\mathbb{F}_0(v) \in \Omega$. \\
	
	\begin{itemize}
		\item[1.](Generation)  There is a constant $\mathfrak{C}^{\mathfrak{m}}$  such that  for any $k>k_*$ defined in \eqref{kstar} ,
		\begin{equation}\label{bound on norm}
		\mathfrak{m}_k[\mathbb{F}](t)  \leq \mathfrak{C}^{\mathfrak{m}} \left(1-e^{-\frac{\overline{\gamma} B_k t}{k} }   \right)^{-\frac{k}{\overline{\gamma}}},  \qquad \forall t>0,
		\end{equation}
		where constants $\mathfrak{C}^{\mathfrak{m}}$ depend on $A_k$, $B_k$   from \eqref{pomocna 100} and $\overline{\gamma}$. \\
		
		\item[2.](Propagation) Moreover, if $\mathfrak{m}_k[\mathbb{F}](0)<\infty$, then 
	\begin{equation}\label{poly propagation}
	\mathfrak{m}_k[\mathbb{F}](t)  \leq \max\{\mathfrak{C}^{\mathfrak{m}}, \mathfrak{m}_k[\mathbb{F}](0)\}, 
	\end{equation}
	for all $t\geq 0$.
		\end{itemize}
\end{theorem}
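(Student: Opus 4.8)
The plan is to read the differential inequality \eqref{pomocna 100} as a scalar Bernoulli-type inequality for $y(t):=\pmkF(t)$, namely $y'\le -A_k\,y^{1+\gmax/k}+B_k\,y$ with exponent $p:=\gmax/k\in(0,1)$ (since $k>\ks\ge 2+2\gmax$ forces $k>\gmax$), and to extract both assertions by elementary ODE comparison, as in the mono-component theory. I would invoke Lemma~\ref{poly moments}, noting that its proof in fact produces this inequality first for the truncated moments $\pmkF^{(N)}(t):=\sum_{i=1}^{I}\int_{\mathbb{R}^3}f_i(t,v)\,\japvi^{k}\,\chi_N(\japvi)\,\mathrm{d}v$, with a smooth cutoff $\chi_N\uparrow 1$ saturating $\japvi^{k}$ at scale $N$ and with constants $A_k,B_k$ independent of $N$ — the loss contribution only improves under truncation and the gain contribution is dominated uniformly thanks to the Povzner decay of Lemma~\ref{Povzner intro}. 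Every bound obtained below is then stable under $N\to\infty$ by monotone convergence, which in particular supplies the finiteness of $\pmkF(t)$ for $t>0$ in part~1, where a priori only moments up to order $\ks$ are controlled through $\Omega$; I suppress the truncation in what follows.

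For the generation estimate I would substitute $w:=y^{-p}$. Differentiating and multiplying \eqref{pomocna 100} by $-p\,y^{-p-1}<0$ (which reverses the inequality) gives the linear differential inequality
\[
w'\ \ge\ pA_k-pB_k\,w .
\]
Using the integrating factor $e^{pB_k t}$, integrating over $[s,t]$ with $0<s<t$, discarding the nonnegative term $e^{pB_k s}w(s)$, and letting $s\to 0^{+}$, I would get $w(t)\ge \tfrac{A_k}{B_k}\bigl(1-e^{-pB_k t}\bigr)$, hence
\[
\pmkF(t)=w(t)^{-1/p}\ \le\ \Bigl(\tfrac{B_k}{A_k}\Bigr)^{k/\gmax}\bigl(1-e^{-\gmax B_k t/k}\bigr)^{-k/\gmax},
\]
which is exactly \eqref{bound on norm} with $\cpmk:=(B_k/A_k)^{k/\gmax}$, a constant depending only on $A_k$, $B_k$ and $\gmax$ as asserted. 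The passage $s\to 0^{+}$ is what makes the right-hand side independent of the (possibly infinite) value $\pmkF(0)$, giving generation for every $t>0$.

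For the propagation estimate I would assume $\pmkF(0)<\infty$ and exploit the sign structure of the right-hand side of \eqref{pomocna 100}: the map $\mathcal{L}(x)=-A_k x^{1+p}+B_k x$ vanishes at $x=0$ and at its unique positive root $x^{\ast}=(B_k/A_k)^{k/\gmax}=\cpmk$, being positive on $(0,x^{\ast})$ and negative on $(x^{\ast},\infty)$; hence $y'(t)\le 0$ whenever $y(t)\ge\cpmk$. Since $y$ is continuous on $[0,\infty)$ and $C^1$ on $(0,\infty)$, a barrier argument then gives $y(t)\le M:=\max\{\cpmk,\pmkF(0)\}$ for all $t\ge 0$: if this failed there would be a first time $t_1$ at which $y$ reaches the level $M$ from below, and on an interval immediately to the right of $t_1$ one would have $y>M\ge x^{\ast}$, forcing $y'\le\mathcal{L}(y)<0$ there, contradicting that $y$ just increased through $M$. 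This is \eqref{poly propagation}; note that the single constant $\cpmk$ serves in both \eqref{bound on norm} and \eqref{poly propagation} precisely because the positive root of $\mathcal{L}$ coincides with the prefactor $(B_k/A_k)^{k/\gmax}$ of the generation bound.

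Granting Lemmas~\ref{poly moments} and~\ref{Povzner intro}, the only genuinely delicate point is the finiteness/stability issue flagged in the first paragraph — justifying that the scalar inequality \eqref{pomocna 100} may be used for each individual order $k>\ks$ although $\Omega$ closes moments only up to order $\ks$ — which the truncated-moment device resolves, the uniformity of $A_k,B_k$ in the truncation parameter being exactly what the decay of the Povzner constant $\cpov_{\frac{k}{2}}$ guarantees. Everything else is routine Bernoulli comparison.
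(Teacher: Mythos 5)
Your proposal is correct and follows essentially the same route as the paper: both derive the Bernoulli-type ordinary differential inequality for $\pmkF$ from Lemma~\ref{poly moments} (hence ultimately from the Povzner estimate of Lemma~\ref{Povzner intro}), and both then extract the generation and propagation estimates by elementary comparison, with the same final constant $\cpmk=(B_k/A_k)^{k/\gmax}=(A_k/B_k)^{-k/\gmax}$. The differences are ones of presentation and technical care rather than of strategy. For part~1 you linearize via $w=y^{-\gmax/k}$ and integrate the resulting linear differential inequality with an integrating factor, sending the lower limit $s\to 0^{+}$; the paper instead writes the explicit solution of the associated Bernoulli ODE and discards the term carrying the initial datum. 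These are equivalent, although your version handles the possibility $\pmkF(0)=\infty$ more transparently. For part~2 you argue by a first-crossing barrier at the level $\max\{\cpmk,\pmkF(0)\}$, while the paper reads the bound off the monotonicity of the explicit Bernoulli solution; again the two arguments are interchangeable. Where your proposal genuinely adds something is the truncation device you introduce to justify that the scalar inequality for $\pmkF$ with $k>k_*$ may be used at all, since membership in $\Omega$ only closes moments through order $k_*$ and a priori one does not know that $\pmkF(t)$ is finite for larger $k$; the paper does not address this bootstrapping explicitly, so your remark, and your observation that the uniformity of $A_k,B_k$ in the truncation parameter is exactly what the decay of $\cpov_{k/2}$ provides, is a welcome and correct supplement rather than a deviation.
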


Finally, we show that, under the assumed conditions on collision kernel form   \eqref{cross section},
the  renormalized series of moments is summable depending on the moments of the initial data yielding the following result on generation and propagation of exponential, or Mittag-Leffler moments.

\subsubsection{Generation and propagation of exponential moments} With bounds on polynomial moment at hand, one can deal with exponential moments. We prove the following Theorem.

\begin{theorem}[Generation and propagation of  exponential moments]\label{theorem gen prop ML}
	Let $\mathbb{F}$ be a solution of the Boltzmann system \eqref{Cauchy problem} with 	a cross section \eqref{cross section}-\eqref{gamma max}  where $\overline{\gamma}=\bar{\bar{\gamma}} = \gamma_{ij}$ for all $i,j\in \left\{1,\dots,I \right\}$, and  an initial data $\mathbb{F}(0,v)=\mathbb{F}_0(v) \in \Omega$. \\

	\begin{itemize}
		\item[(a)](Generation) There exist constants $\alpha>0$ and $\mathfrak{B}^{\mathcal{E}}>0$ such that
		\begin{equation*}
		\mathcal{E}_{\overline{\gamma}}[\mathbb{F}](\alpha \min\left\{t,1\right\}, t) \leq \mathfrak{B}^{\mathcal{E}}, \quad \forall t\geq 0.
		\end{equation*}   \\
		
		\item[(b)](Propagation) Let $0<s\leq 2$. Suppose that there exists a  constant $\alpha_0 >0$, such that
		\begin{equation}\label{initial data exp prop}
		 \mathcal{E}_{s}[\mathbb{F}]({\alpha_0},0) \leq M_0 < \infty.
		\end{equation}
		 Then there exist  constants $0<\alpha \leq \alpha_0$ and $\mathfrak{C}^{\mathcal{E}}>0$ such that
		\begin{equation}\label{exp propagation p}
	\mathcal{E}_{s}[\mathbb{F}]({\alpha},t)  \leq \mathfrak{C}^{\mathcal{E}}, \qquad \forall t\geq 0.
		\end{equation}
	\end{itemize}
\end{theorem}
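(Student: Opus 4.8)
\textbf{Proof plan for Theorem~\ref{theorem gen prop ML}.} The strategy is the standard partial-sums summability argument for exponential moments, adapted to the vector-valued mixture setting via the dimensionless bracket forms $\japvi$. The plan is to introduce, for each $i$ and each $p\in\mathbb{N}$, the renormalized polynomial moments $z_p^i(t) = \pmdval[\F](t)/\Gamma(ap+b)$ (with suitable $a,b$ depending on whether we are in the generation case $s=\gmax$, so $a=2/\gmax$, or the propagation case, so $a=2/s$) and to study the partial sums $E_n(t)=\sum_{i=1}^I\sum_{p=0}^n z_p^i(t)\,\alpha^p$. Since the Taylor expansion of $e^{\alpha_i\japvi^{s_i}}$ has coefficients exactly of this form, finiteness of $\limsup_n E_n(t)$ uniformly in $t$ yields the claimed bound on $\emF$ (resp.\ $\mathcal{E}_{\gmax}[\F](\alpha\min\{t,1\},t)$). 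The first step is therefore to differentiate $E_n$ in time, insert the moment ODI from Lemma~\ref{poly moments} (applied to each even moment of order $2p$, which is $\geq k_*$ once $p$ is large; the finitely many low-order moments are controlled by Theorem~\ref{theorem bound on norm} and the bounds built into $\Omega$), and obtain a differential inequality for $E_n$ of the form $\dt E_n \le -\text{(coercive term)} + C\,E_n\,E_n^{\,\text{(something)}} + \text{(lower order)}$.

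The key technical input is a \emph{bilinear} Povzner-type estimate for the gain-term moment: when one expands $\int \sum_{ij} Q_{ij}(f_i,f_j)\japvi^{2p}\,\mathrm{d}v$ and uses Lemma~\ref{Povzner intro}, one produces sums $\sum_{\ell=1}^{p-1}\binom{p}{\ell}\pmdval[\F]\,\pmdvakmdval[\F]$-type terms (with the extra $\gmax$-shift from the collision frequency weight $|u|^{\gij}$, which must be absorbed using $|u|^{\gij}\lesssim \japvi^{\gmax}+\japvjs^{\gmax}$ and the finiteness of $\pmdvagamaF$ guaranteed in $\Omega$). The crucial combinatorial fact, going back to Bobylev and used in \cite{GambaPanfVil09,Gamba13}, is that after dividing by $\Gamma(ap+b)$ the binomial-convolution sum $\sum_\ell \binom{p}{\ell}\frac{\Gamma(a\ell+b)\Gamma(a(p-\ell)+b)}{\Gamma(ap+b)}$ is bounded by a constant (depending on $a,b$ but not $p$) times $p^{-(1+\text{const})}$ — I would cite or reprove the elementary Beta-function lemma that gives this. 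This converts the nonlinear term into something controlled by $E_n^2$ (or $E_n$ times a convergent-in-$p$ factor), while the loss term contributes $-K\, z_{p+\lceil\gmax/2\rceil}^i$-type coercivity; the point $k_*\ge 2+2\gmax$ in \eqref{kstar}, together with the decay $\cpov_{k/2}\to0$ from Lemma~\ref{Povzner intro}, is exactly what guarantees the loss term dominates the gain term for $p$ large, so that the negative contribution eventually beats the positive one. The \textbf{main obstacle} is precisely this mixture-specific bookkeeping: the bracket weights $\japvi$ differ across species, so one cannot freely exchange $\japvi$ and $\japvjs$ as in the mono-component case, and the constants $\cpov_{k/2}$ (hence the threshold $k^{ij}_*$) degrade as the mass fractions $r_{ij}$ move away from $1/2$; one must check that a \emph{single} rate $\alpha$ and order $\gmax$ (resp.\ $s$) work simultaneously for all $I$ species, using the reduction to $\emsaF$ noted in the Remark after Definition~\ref{SPM}.

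With the differential inequality $\dt E_n \le \Phi(E_n)$ in hand (uniform in $n$), the final step is a barrier/continuity argument: for the \textbf{propagation} part one shows that if $\alpha\le\alpha_0$ is chosen small enough then $E_n(0)$ is bounded (by $M_0$ plus lower-order moments from $\Omega$) and the region $\{E_n\le 2\cem\}$ is invariant under the flow of $\Phi$, so $E_n(t)\le\cem$ for all $t$ and all $n$; letting $n\to\infty$ by monotone convergence gives \eqref{exp propagation p}. For the \textbf{generation} part, one instead uses that the moments $\pmdval[\F](t)$ are finite for $t>0$ even without initial control (Theorem~\ref{theorem bound on norm}, part 1, giving $\pmkF(t)\lesssim (1-e^{-cB_kt/k})^{-k/\gmax}$), chooses $a=2/\gmax$, and runs the same barrier argument with the time-dependent rate $\alpha\min\{t,1\}$ — the generation bound on $\pmkF(t)$ is tuned so that $z_p^i(t)\alpha^p\min\{t,1\}^p$ has the right growth in $p$ to start the induction, and the $\min\{t,1\}$ truncation handles the $t\to0$ and $t\to\infty$ regimes uniformly, yielding $\mathcal{E}_{\gmax}[\F](\alpha\min\{t,1\},t)\le\bem$. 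Throughout, one must justify the interchange of $\dt$ and the infinite sum, which follows from the a priori finiteness of all polynomial moments for $t>0$ together with dominated convergence on the partial sums.
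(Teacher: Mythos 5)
Your proposal follows a valid route, but it is not the one the paper takes, and the difference is worth spelling out. You renormalize by Gamma functions, working with $z_p^i = \mathfrak{m}_{2p}[\mathbb{F}]/\Gamma(ap+b)$, and you invoke the Beta-function combinatorial lemma to show that $\sum_\ell \binom{p}{\ell}\,\Gamma(a\ell+b)\Gamma(a(p-\ell)+b)/\Gamma(ap+b)$ decays in $p$ — this is the Bobylev/Gamba--Panferov--Villani/Tasković line. The paper instead works directly with the Taylor coefficients of $e^{\alpha\langle v\rangle_i^s}$, i.e.\ with $\mathcal{E}_s^n = \sum_{k=0}^n \frac{\alpha^k}{k!}\mathfrak{m}_{sk}[\mathbb{F}]$, so that moments of order $sk$ (for propagation) or $\gmax k$ (for generation) appear with $1/k!$ in front. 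The crucial simplification this buys is that the binomial convolution in the Povzner-controlled term cancels algebraically: $\frac{1}{k!}\binom{k}{\ell} = \frac{1}{\ell!(k-\ell)!}$, so the double sum reorganizes into a clean product $\mathcal{E}_{s;\gmax}^n\cdot\mathcal{E}_s^n$ of the shifted and unshifted partial sums, and no Beta-function decay estimate is needed at all — only the monotone decay of the Povzner constant $\boldsymbol{\mathcal{C}}^{ij}_{sk/2}$ past the threshold $k_*$. Your Beta-function route is sound but introduces auxiliary combinatorics the paper deliberately avoids. The paper then splits the partial sum at a finite $k_0$ (bounding the low-order part via the moment-propagation or moment-generation bounds from Theorem~\ref{theorem bound on norm}), obtains the ODI $\frac{\mathrm{d}}{\mathrm{d}t}\mathcal{E}_s^n \le -K_1\mathcal{E}_{s;\gmax}^n + c + C\,\boldsymbol{\mathcal{C}}^{ij}_{sk_0/2}\,\mathcal{E}_{s;\gmax}^n\mathcal{E}_s^n$, tunes $k_0$ and $\alpha$ so the negative term dominates, and closes with the standard $T_n$-maximality contradiction — which is the same barrier idea you describe as flow-invariance. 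For generation, the paper runs this on $[0,1]$ with rate $\alpha t$ and time-weighted generation bounds $\bpmgammak \max\{1,t^{-k}\}$, then restarts via propagation from $t=1$, rather than handling $\min\{t,1\}$ in one shot as you suggest; that two-stage presentation is what the $\min\{t,1\}$ in the theorem statement is actually encoding. Your identification of the mixture-specific obstacle (species-dependent brackets $\japvi$ and the dependence of $k_*^{ij}$ on $r_{ij}$) is accurate and is indeed what forces the uniform choice $k_* = \max\{\overline{k}, 2+2\gmax\}$.
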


\section{Kinetic Model}\label{Section Kinetic Model}

\subsection{Study of collision process}
In our setting molecules are assumed to  interact via elastic collisions. Let us fix two colliding molecules; one of the species $\mathcal{A}_i$ having mass $m_i$ and pre-collisional velocity $v'$ and the another one belonging to the species $\mathcal{A}_j$ with mass $m_j$ and pre-collisional velocity $v'_*$   (note that we here immediately adopted the simplicity of notation pointed out in Remark \ref{eliminate subindex ij}). If the post-collisional velocities are denoted with $v$ and $v_*$, respectively, than the momentum and kinetic energy during the collision are conserved
\begin{align}
m_i v' + m_j v'_* &= m_i v + m_j v_*, \nonumber\\
 m_i \left| v' \right|^2 +  m_j \left| v'_* \right|^2 &= m_i \left| v \right|^2 +  m_j \left| v_* \right|^2. \label{CL micro energy}
\end{align}
As usual, we parametrize these equations with a parameter $\sigma \in S^2$, in order to express pre-collisional velocities in terms of post-collisional ones,
\begin{equation}\label{collisional rules bi}
v' =\frac{m_i v + m_j v_*}{m_i + m_j} + \frac{m_j}{m_i + m_j} \left| v-v_* \right| \sigma, \quad v'_* =\frac{m_i v + m_j v_*}{m_i + m_j} - \frac{m_i}{m_i + m_j} \left| v-v_* \right| \sigma.
\end{equation}
Note that if $m_i=m_j$, then the collisional rules simplify and take the usual single component gas form
\begin{equation}\label{collisional rules equal masses}
v' =\frac{v +  v_*}{2} + \frac{1}{2} \left| v-v_* \right| \sigma, \quad v'_* =\frac{ v +  v_*}{2} - \frac{1}{2} \left| v-v_* \right| \sigma.
\end{equation}
\begin{figure}
	\begin{center}
		\begin{tikzpicture}[scale=2]
		\node (v) at (1.93,0.52) [preaction={shade, ball color=red},pattern=vertical lines, pattern color=black,circle,scale=2.5] {};
		\node (vs) at (-1.93,-0.52) [preaction={shade, ball color=white!70!gray},circle,scale=1, ,pattern=crosshatch dots, pattern color=black] {};
		\node (V) at (1.158,0.312)  [circle,scale=1.7, shade,ball color=blue] {};
		\node (vp) at (1.72,-0.25)  [preaction={shade,ball color=red},circle,scale=2.5, ,pattern=vertical lines, pattern color=black] {};
		\node (vps) at (-1.1,2.57) [preaction={shade,ball color=white!70!gray}, circle,scale=1,pattern=crosshatch dots, pattern color=black ] {};
		\node (V old) at (0,0)	 [circle,scale=1, shade,ball color=blue] {};
		\node (start) at (-1,-3)  [circle,scale=0.1, gray] {};
		\node (vp old) at (1.414,-1.414)   [circle,scale=1, shade,ball color=green!30!black!70!white] {};
		\node (vps old) at (-1.414,1.414)   [circle,scale=1, shade,ball color=green!30!black!70!white] {};
		
		\begin{scope}[>=latex]
		\draw[->,line width=0.6ex,gray!80!black] (V old) -- (0.7071,-0.7071) node[midway, above,black] {$\sigma$};
		\draw[->,line width=0.2ex] (start) --  (vs) node[left,pos=0.98] {$v_*$};
		\end{scope}
	    \begin{scope}[>=latex, on background layer]
	    \draw[densely dotted,color=green!30!black!70!white,line width=0.2ex] (0,0) circle (2cm);
	    \draw[densely dotted,->,color=green!30!black!70!white,line width=0.2ex] (vps old) -- (vp old) node[densely dotted,color=green!30!black,pos=0.3,above right=-0.1cm,line width=0.2ex] {$  \left|u \right| \sigma$};
		\draw[->,line width=0.2ex] (start) -- (v)  node[pos=0.96,right=0.1cm] {\color{red}$v$};
		\draw[->,line width=0.2ex] (vs) -- (v) node[pos=0.3,sloped, black,above left] {$u$};
		\draw[densely dotted,->,color=green!30!black!70!white,line width=0.2ex] (start) -- (V old) node[color=green!30!black,pos=0.9,below left=-0.1cm] {$\frac{v+v_*}{2}$};
		\draw[->,dashdotted,color=blue,line width=0.2ex]  (vps)--(vp) node[color=green!30!black,above right=-0.1cm,midway] {\color{blue}$u'$};
		\draw[densely dotted,->,color=green!30!black!70!white,line width=0.2ex] (start) -- (vp old) node[color=green!30!black,below right,line width=0.2ex] {$ \frac{v +  v_*}{2} + \frac{1}{2} \left|u \right| \sigma$};
		\draw[densely dotted,->,color=green!30!black!70!white,line width=0.2ex] (start) -- (vps old) node[color=green!30!black, left=0.1cm] {$ \frac{v +  v_*}{2} - \frac{1}{2} \left|u \right| \sigma$};
		\draw[->,densely dashed,blue,line width=0.6ex] (V old) --  (V) node[midway, sloped,above, blue] {$\frac{m_i-m_j}{2(m_i+m_j)}u$};
		\draw[dashdotted,->,color=blue,line width=0.2ex] (start) -- (V) node[pos=0.98,below right=-0.05cm] {\color{blue}$V_{ij}$};
		\draw[dashdotted,->,color=blue,line width=0.2ex] (start) -- (vps) node[pos=0.98,left] {\color{black}$v'_*$};
		\draw[dashdotted,->,color=blue,line width=0.2ex] (start) -- (vp) node[pos=0.98,below right=-0.1cm] {\color{red}$v'$};
		\end{scope}	
\end{tikzpicture}
	\end{center}
	\caption{Illustration of the collision transformation, with notation $V_{ij}:=\frac{m_i v + m_j v_*}{m_i + m_j}$, $u:=v-v_*$,  $u':=v'-v'_*$. The displacement of the center of mass with respect to a single component elastic binary interaction is given by $(r_{ij}-\frac 12)u = \frac{m_i-m_j}{2(m_i+m_j)}u$, if $m_i>m_j$.  
	Solid lines denote vectors after collision, or given data. Dash-dotted vectors  represent primed (pre-collisional) quantities  that can be calculated from the given data, and compared to the case $m_i=m_j$, represented by dotted vectors. Dashed vector direction is the displacement along the direction of the relative velocity $u$ proportional to the half difference of relative masses, (which clearly vanishes for  $m_i = m_j$, reducing the model to a classical collision). Note that the scattering direction $\sigma$  is preserved as  the pre-collisional relative velocity $u'$ keeps the same magnitude as the post-collisional $u$,   $u'$ is parallel the reference elastic  pre-collisional relative velocity $|u|\sigma$.  }
	\label{picture}
\end{figure}

Figure \ref{picture} illustrates the collision transformation \eqref{collisional rules bi} and aims at explaining its difference with respect to the collision transformation \eqref{collisional rules equal masses} when masses are equal. Namely, for given $v$, $v_*$, $\sigma$ and $m_i$, $m_j$, we calculate center of mass $V=\frac{m_i v + m_j v_*}{m_i + m_j}$, and velocities $v'$ and $v'_*$ according to \eqref{collisional rules bi}. One can notice that the magnitude of the relative velocity does not change during the collision, i.e.  $\left|v-v_*\right|= \left|v'-v'_*\right|$, as it is when masses are the same. Difference comes with the vector of center of mass:  the vector of center of mass for equal masses $\frac{v+v_*}{2}$ displaces by adding a quantity that is proportional to the difference of masses $m_i-m_j$  and thus is peculiar to the mixture case.   More precisely,
$$
V= \frac{v+v_*}{2} + \frac{m_i-m_j}{2(m_i+m_j)}u,
$$
with $u:=v-v_*$.

\subsection{Collision operators} Collision operators $Q_{ij}$, as defined in \eqref{Qij},  describe binary interactions between molecules of species $\mathcal{A}_i$ and $\mathcal{A}_j$, $i,j=1,\dots,I$.
Fix the species $\mathcal{A}_i$ for any $i=1,\dots,I$, and let its distribution function be $g$. On the other hand, let distribution function $h$ describe species $\mathcal{A}_j$.

Note that each $Q_{ij}$ for a fix $(i,j)$-pair has its corresponding  counterpart, $Q_{ji}$,  that describes interaction of molecules of species $\mathcal{A}_j$ with molecules of species $\mathcal{A}_i$
\begin{equation}\label{collision operator ji}
Q_{ji}(h,g)(v) = \int_{\mathbb{R}^3} \int_{S^2} \left( h(w') g(w'_*) - h(v) g(v_*)  \right) \mathcal{B}_{ji}(v,v_*,\sigma) \, \mathrm{d} \sigma \, \mathrm{d}v_*,
\end{equation}
where pre-collisional velocities $w'$ and $w'_*$ now differ from the previous ones given in \eqref{collisional rules bi} by an exchange of mass $m_i \leftrightarrow m_j$, i.e.
\begin{equation}\label{collisional rules symmetry}
w' =\frac{m_j v + m_i v_*}{m_i + m_j} + \frac{m_i}{m_i + m_j} \left| v-v_* \right| \sigma, \quad w'_* =\frac{m_j v + m_i v_*}{m_i + m_j} - \frac{m_j}{m_i + m_j} \left| v-v_* \right| \sigma.
\end{equation}
When $m_i=m_j$, although primed velocities are the same,  $Q_{ij}$ and $Q_{ji}$ still defer, because of the cross section.

\subsection{Weak form of collision operator}
Testing the collision operator against some suitable test functions  $\psi(v)$ and $\phi(v)$ yields
\begin{multline*}
\int_{\mathbb{R}^3} Q_{ij} (g,h)(v) \psi(v) \mathrm{d} v  \\ =  \iiint_{\mathbb{R}^3 \times \mathbb{R}^3\times S^2} g(v) h(v_*) \left( \psi(v')  - \psi(v) \right) \mathcal{B}_{ij}(v,v_*,\sigma) \mathrm{d}\sigma \mathrm{d}v_* \mathrm{d} v,
\end{multline*}
and
\begin{multline*}
\int_{\mathbb{R}^3} Q_{ji} (h,g)(v) \phi(v) \mathrm{d} v  \\ =  \iiint_{\mathbb{R}^3 \times \mathbb{R}^3 \times S^2}  h(v_*)g(v) \left(  \phi(v'_*) -  \phi(v_*) \right) \mathcal{B}_{ij}(v,v_*,\sigma) \mathrm{d}\sigma \mathrm{d}v_* \mathrm{d} v,
\end{multline*}
where now $v'$ and $v'_*$ are denoting the post-collisional velocities as defined by \eqref{collisional rules bi}. Therefore, looking at these two integrals pairwise, meaning that each time when $Q_{ij}$ is considered we add his pair  $Q_{ji}$, we have
\begin{multline}\label{weak form ij+ji}
\int_{\mathbb{R}^3} \left( Q_{ij} (g,h)(v) \psi(v) + Q_{ji} (h,g)(v) \phi(v) \right) \mathrm{d} v  \\ =  \iiint_{\mathbb{R}^3 \times \mathbb{R}^3 \times S^2} g(v) h(v_*) \left( \psi(v') +  \phi(v'_*)  - \psi(v) - \phi(v_*) \right) \mathcal{B}_{ij}(v,v_*,\sigma) \mathrm{d}\sigma \mathrm{d}v_* \mathrm{d} v,
\end{multline}
with $v'$ and $v'_*$  are now  given by the post-collisional velocities as defined by \eqref{collisional rules bi}. \\

Some choice of test function leads to annihilation of the weak form. Namely, from the conservation laws during collision process,  we see
\begin{equation}\label{collision invariants bi mass}
\int_{\mathbb{R}^3} Q_{ij} (g,h)(v)  \mathrm{d} v  =  0,
\end{equation}
as well as
\begin{equation}\label{collision invariants bi}
\int_{\mathbb{R}^3} \left( Q_{ij} (g,h)(v) \left( \begin{matrix} m_i v \\  m_i \left| v \right|^2 \end{matrix} \right) + Q_{ji} (h,g)(v) \left( \begin{matrix} m_j v \\ m_j \left| v \right|^2 \end{matrix} \right) \right) \mathrm{d} v  = 0.
\end{equation}

Therefore, if we consider distribution function $\mathbb{F}=\left[f_i\right]_{1\leq i \leq I}$, then the weak form \eqref{weak form ij+ji} yields 
\begin{multline}\label{weak form any test}
2 \sum_{i=1}^I \sum_{j=1}^I \int_{\mathbb{R}^3} Q_{ij} (f_i,f_j)(v) \psi_i(v)  \mathrm{d} v =  \sum_{i=1}^I \sum_{j=1}^I \iiint_{\mathbb{R}^3 \times \mathbb{R}^3 \times S^2} f_i(v) f_j(v_*)  \\ \times  \left( \psi_i(v') +  \psi_j(v'_*)  - \psi_i(v) - \psi_j(v_*) \right) \mathcal{B}_{ij}(v,v_*,\sigma) \mathrm{d}\sigma \mathrm{d}v_* \mathrm{d} v.
\end{multline}

\subsection{Conservation laws} Weak forms of collision operator imply some its conservative properties. More precisely, for any suitable $\mathbb{F}$,   \eqref{collision invariants bi mass} implies
\begin{equation}\label{cons operator mass}
\int_{\mathbb{R}^3} \left[\mathbb{Q}(\mathbb{F}) \right]_i\mathrm{d}v  %= \mathfrak{m}_{0,i}[\mathbb{Q}(\mathbb{F})](t) 
= 0, \qquad \text{for any} \ i=1,\dots,I,
\end{equation}
and moreover,  from \eqref{weak form any test} choosing $\psi_\ell(x)=m_\ell \left|x\right|^2$, and $\psi_\ell(x)=m_\ell  x$, $x\in \mathbb{R}^3$, one has
\begin{equation}\label{cons operator energy}
\sum_{i=1}^I \left[ \mathbb{Q}(\mathbb{F}) \right]_i m_i \left|v\right|^2 \mathrm{d} v=0,
\end{equation}
and 
\begin{equation*}
\sum_{i=1}^I \left[ \mathbb{Q}(\mathbb{F}) \right]_i m_i v\, \mathrm{d} v=0,
\end{equation*}
for any time $t\geq 0$.

If $\mathbb{F}$ is a solution to the Boltzmann system \eqref{Cauchy problem}, then these properties imply conservation laws for number density of each species $\mathcal{A}_i$, $i=1,\dots,I$, and total energy of the mixture. Indeed, 
\begin{equation}\label{conservation mass energy}
\partial_t \mathfrak{m}_{0,i}[\mathbb{F}](t)=0, \ \forall i=1,\dots,I,  \qquad \partial_t \mathfrak{m}_2[\mathbb{F}](t)=0.
\end{equation}

\section{Proof of Povzner lemma~\ref{Povzner intro} }\label{Section prelim Lemmas}

The proof of  Povzner lemma~\ref{Povzner intro} by angular averaging for the mixing model entices to obtain estimates for the quantity $ \left\langle v' \right\rangle_i^k +  \left\langle v'_* \right\rangle_j^k $ integrated over sphere $S^2$, that represents  the gain part of \eqref{weak form any test} for $\psi_i(x) = \left\langle x \right\rangle_i^k$. The usual techniques used in \cite{Gamba13} for example, can not be directly adapted when $m_i\neq m_j$. This becomes clear when one writes local kinetic energies of each colliding molecule pair. When $m_i\neq m_j$, these energies can be written  as a certain convex combination, while single component case (or in the same fashion when $m_i=m_j$) correspond to the ``middle" of this convex combination, or to the ``halfs" (see Remark \ref{Remark single vs mixture} below). Single component situation (or when $m_i=m_j$) is therefore ``symmetric", in a sense, and the techniques for  proof of a sharper   Povzner lemma by angular averaging, as done by \cite{Bob97} or \cite{GambaPanfVil09},  
can not be  extended to the mixture case in a straight forward form.

Indeed, in the mixture setting when $m_i\neq m_j$, the proof  of the Povzner lemma~\ref{Povzner intro} in the cases of non-linear gas mixture system uses a non-trivial modification of a  powerful energy identity in scattering angle coordinates. This identity is needed in order to compute moment estimates that clearly show positive moments from the gain collision operator part  are dominated by the moments of the corresponding loss part, which   yields a very sharp estimate sufficient to obtain not only moments propagation and generation, but also  their scaled summability that prove propagation and generation of exponential moment estimates as well.  An energy identity in scattering angle coordinates was first developed in \cite{Bob97,GambaBobPanf04} for the elastic and inelastic case for scalar Boltzmann binary models. While  such identity is rather easy in the elastic single species setting, where local energies are just  the sum of the collision invariant $|v|^2$ and just its interacting counterpart $|v_*|^2$, in the mixing case under consideration the problem becomes highly non-trivial and the local energies to be estimated now depend on binary sums 
of $\left\langle v \right\rangle_i^2 $ and $\left\langle v_* \right\rangle_j^2$ and their corresponding post collisional  sum of  $\left\langle v' \right\rangle_i^2 $ and $\left\langle v'_* \right\rangle_j^2$.

\begin{lemma}[{Energy identity in scattering direction coordinates for the $(i,j)-$pair of colliding particles}]\label{energy-identity-mixture}
Consider any $(i,j)-$pair of interacting velocities $v$ and $v_*$ corresponding to particles masses $m_i$ and $m_j$, respectively, with $i, j$ fixed. 
Let their local micro energy be  $E_{ij} = \left\langle v \right\rangle_i^2 + \left\langle v_* \right\rangle_j^2$,  with $\left\langle v \right\rangle_i^2$ and $\left\langle v_* \right\rangle_j^2$ defined according to \eqref{vi}, and recall  the two-body mass fraction parameter $r_{ij}=\frac{m_i}{m_i+m_j}$  introduced in \eqref{cof-rv-coor}.

Then, there exists a couple of  functions $p_{ij}=p_{ij}(v,v_*,m_i,m_j) $ and $q_{ij}=q_{ij}(v,v_*,m_i,m_j)$ such that,  $p_{ij}+q_{ij} = E_{ij}$ and 
the following representation holds
	\begin{equation}\label{EI1}
	\langle  v'_{ij} \rangle_i^2= p_{ij} + \lambda_{ij}\, \sigma \cdot \hat{V}_{ij}, \ \ \ \qquad 
	\langle  v'_{*ij} \rangle_j^2= q_{ij}  - \lambda_{ij}  \, \sigma \cdot \hat{V}_{ij}.
	\end{equation}
where $\lambda_{ij}:= 2 \sqrt{r_{ij} (1-r_{ij}) (sE_{ij} -1)((1-s_{ij})E_{ij}-1)}$ with $s_{ij}=s_{ij}(v,v_*,m_i,m_j)\in [0,1]$. In particular, this representation preserves the local energy identity 
\begin{equation}\label{EI2}
	\langle  v'_{ij} \rangle_i^2 +	\langle  v'_{*ij} \rangle_j^2\  =\  p_{ij}+q_{ij}\ =\ E_{ij} \ = \ \left\langle v \right\rangle_i^2 + \left\langle v_* \right\rangle_j^2.
\end{equation}
Moreover, the following inequalities hold
\begin{equation}\label{key point}
p_{ij} +\lambda_{ij}  \leq E_{ij} , \quad q_{ij}  + \lambda_{ij}  \leq E_{ij} ,
\end{equation}
for any velocities $v, v_* \in \mathbb{R}^3$ and any masses $m_i, m_j >0$.
\end{lemma}
As we mentioned earlier in Remark \ref{eliminate subindex ij}, we eliminate subindex $ij$ from $E_{ij}$, $p_{ij}$, $q_{ij}$, $\lambda_{ij}$, $s_{ij}$ as we did in Remark \ref{eliminate subindex ij} for $v'_{ij}$,  $v'_{*ij}$, $V_{ij}$ and $r_{ij}$. 
\begin{proof}[Proof of Lemma~\ref{energy-identity-mixture}]  As anticipated, we represent the exchange of coordinates at the interaction using  the center of mass and relative velocity reference frame   \eqref{collisional rules intro} (with its symmetric form \eqref{collisional rules symmetry})
where the angular integration if  performed in the scattering direction corresponding to the post-collisional relative velocity $\sigma=\hat{u'}$. Thus, let's denote with $V$ the vector of center-of-mass  and with $u$ the  relative velocity as in \eqref{center-of-mass},
	\begin{equation*}
	V=\frac{m_i v + m_j v_*}{m_i + m_j}, \qquad u=v-v_*.
	\end{equation*}
	Then, taking the squares of the magnitudes of the post-collisional velocities  given in \eqref{collisional rules bi}, one obtains
	\begin{equation*}
	\begin{split}
	\left|v'\right|^2 &= \left|V\right|^2 + \frac{ m_j^2}{(m_i+m_j)^2} \left|u\right|^2 + \frac{2 m_j}{m_i + m_j} \left|u\right| \left|V\right| \sigma \cdot \hat{V},\\
	\left|v'_*\right|^2 &= \left|V\right|^2 + \frac{m_i^2}{(m_i+m_j)^2} \left|u\right|^2 - \frac{2 m_i }{m_i + m_j} \left|u\right| \left|V\right| \sigma \cdot \hat{V},
	\end{split}
	\end{equation*}
	where $\hat{V}$ denotes the unit vector of $V$. Passing to $\langle \cdot \rangle$  bracket forms from \eqref{vi},  implies
	\begin{equation}\label{kinetic energies squares}
	\begin{split}
	\left\langle v' \right\rangle_i^2 &= 1+  \frac{m_i}{\sum_{\ell=1}^I m_\ell}  \left|V\right|^2 + \frac{m_i m_j^2}{(m_i+m_j)^2 \sum_{\ell=1}^I m_\ell} \left|u\right|^2 \\ &\hspace*{5cm}+ \frac{2 m_i m_j}{(m_i + m_j)\sum_{\ell=1}^I m_\ell} \left|u\right| \left|V\right| \sigma \cdot \hat{V},\\
	\left\langle v'_* \right\rangle_j^2 &= 1+ \frac{m_j}{\sum_{\ell=1}^I m_\ell} \left|V\right|^2 + \frac{m_j m_i^2}{(m_i+m_j)^2\sum_{\ell=1}^I m_\ell} \left|u\right|^2 \\ &\hspace*{5cm}- \frac{2 m_i m_j}{(m_i + m_j)\sum_{\ell=1}^I m_\ell} \left|u\right| \left|V\right| \sigma \cdot \hat{V}.
	\end{split}
	\end{equation}
	Let us introduce the total  energy $E$ of two colliding particles in $\langle \cdot \rangle$ bracket forms, which  is conserved during collision process by  \eqref{CL micro energy},
	\begin{equation*}
	\begin{split}
	E := \left\langle v \right\rangle_i^2 + \left\langle v_* \right\rangle_j^2 =	\left\langle v' \right\rangle_i^2 + 	\left\langle v'_* \right\rangle_j^2.
	\end{split}
	\end{equation*}
	Using the above equations \eqref{kinetic energies squares}, the energy $E$ can be written in $u-V$ notation, as well,
	\begin{equation}\label{convex-E}
	\begin{split}
	E &= 2 + \frac{m_i+m_j}{\sum_{\ell=1}^I m_\ell} \left| V \right|^2 + \frac{m_i m_j}{(m_i+m_j)\sum_{\ell=1}^I m_\ell} \left| u \right|^2.
	\end{split}
	\end{equation}
	\\
	
	The aim is to represent the squares of the post-collisional velocities 	$\left\langle v' \right\rangle_i^2$ and $\left\langle v'_* \right\rangle_j^2$  as a scalar convex combination of different ``parts" of the energy $E$. 	This is achieved by introducing two quantities,
\begin{itemize}
	\item[i)] the  parameter $r\in(0,1)$,  that distributes masses in the following convex pair
 \begin{equation}\label{parameter-r}
	r=\frac{m_i}{m_i+m_j} \  \ \text{and}\ \ 1-r = \frac{m_j}{m_i+m_j},
	\end{equation}
	\item[ii)]  the function $s\in [0,1]$ that convexly partitions the  energy $E $ into two components, one related to $\left| u \right|^2$ and another to $\left| V \right|^2$, 
 using the above identity \eqref{convex-E} as follows
	\begin{equation}\label{variable-s}
	sE= 1+ \frac{m_i m_j}{(m_i+m_j)\sum_{\ell=1}^I m_\ell} \left| u \right|^2 \ \  \text{and}\   \ (1-s)E= 1 +  \frac{m_i+m_j}{\sum_{\ell=1}^I m_\ell} \left| V \right|^2.
	\end{equation} 
\end{itemize}
	 
Finally, each of  the post-collisional quantities  $\left\langle v' \right\rangle_i^2$ and $\left\langle v'_* \right\rangle_j^2$  as written  the representation as in \eqref{kinetic energies squares}, can be recast
through the energy $E$ and the dot product between center of mass vector $V$ and the scattering direction $\sigma$ as follows 
	\begin{equation}\label{kin energy mixture}
	\begin{split}
	\left\langle v' \right\rangle_i^2&= r (1-s) E + (1-r) s E + 2 \sqrt{r (1-r) (sE -1)((1-s)E-1)} \, \sigma \cdot \hat{V},\\
	\left\langle v'_* \right\rangle_j^2&= r s E + (1-r) (1-s) E   - 2 \sqrt{r (1-r) (sE -1)((1-s)E-1)} \, \sigma \cdot \hat{V}, 
	\end{split}
	\end{equation}
which  yields the important relation that expresses the post - collisional local micro energy $E$ as  a rotation of  factors of  $E$ and $V\cdot\sigma$, while preserving the local energy itself.  
Indeed, denoting
	\begin{equation*}
	\begin{split}
	& p= r (1-s) E + (1-r) s E, \\ & q=E-p= r s E + (1-r) (1-s) E, \\ &\lambda= 2 \sqrt{r (1-r) (sE -1)((1-s)E-1)} \\& \phantom{\lambda}= 2 \sqrt{r (1-r)}\frac{\sqrt{m_i m_j}}{\sum_{\ell=1}^I m_\ell} \left| u \right| |V|,
	\end{split}
	\end{equation*}
	clearly $p+q=E$  and the representation \eqref{EI1} follows while preserving the binary micro energy relation \eqref{EI2}
	\begin{equation*}
	\left\langle v' \right\rangle_i^2= p + \lambda\, \sigma \cdot \hat{V}, \quad 
	\left\langle v'_* \right\rangle_j^2= q  - \lambda  \, \sigma \cdot \hat{V},
	\end{equation*}
which completes the proof of the energy identities \eqref{EI1} and \eqref{EI2}. 

Moreover, it follows
\begin{equation*}
\frac{1}{E} \left( p+ \lambda\right) \leq \left( \sqrt{r(1-s)} +  \sqrt{(1-r)s} \right)^2 \leq 1,
\end{equation*}
since
\begin{equation*}
\max_{\substack{ 0<r<1 \\ 0\leq s \leq 1}} \left( \sqrt{r(1-s)} +  \sqrt{(1-r)s}  \right) =1.
\end{equation*}
Similarly,
\begin{equation*}
\frac{1}{E} \left( q+ \lambda\right) \leq \left( \sqrt{rs} +  \sqrt{(1-r)(1-s)} \right)^2 \leq 1,
\end{equation*}
uniformly in any $(i,j)-$pair, which concludes the proof of Lemma.
\end{proof}

	\begin{remark}\label{Remark single vs mixture} Let us elaborate more on the difference between writing kinetic energies \eqref{kinetic energies squares} when $m_i\neq m_j$  versus $m_i=m_j$. In order to be more precise, we will put a bar on a quantity when assuming the same masses. For instance, total energy of the two colliding particles of the same masses $m_i$ is
\begin{equation*}
		\bar{E}= \left\langle v \right\rangle_i^2 + \left\langle v_* \right\rangle_i^2  = 2 + \frac{2 m_i}{\sum_{j=1}^I m_j} \left|V\right|^2 + \frac{ m_i}{2 \sum_{j=1}^I m_j} \left|u\right|^2.
		\end{equation*}	
		When $m_i=m_j$ then the parameter $r=1/2$, and consequently for $\bar{p}:=p(v, v_*, m_i, m_i)$, $\bar{q}:=q(v, v_*, m_i, m_i)$ and $\bar{\lambda}$ we have
		\begin{equation*}
\bar{p} =  \bar{q}  = \frac{1}{2} \, \bar{E}, \quad  \bar{\lambda} =  \frac{ m_i}{\sum_{j=1}^I m_j} {\left|u\right|  \left|V\right| },
		\end{equation*}	
		which gives the squares of the magnitudes of the post-collisional velocities when $m_i=m_j$,
		\begin{equation}\begin{split}\label{kin energy single}
		&\left\langle v' \right\rangle_i^2 = \bar{E} \left( \frac{1}{2} + \frac{ m_i}{\sum_{j=1}^I m_j}  \frac{ \left|u\right| \left|V\right| }{\bar{E}} \sigma \cdot \hat{V} \right), \\
		 &\left\langle v'_* \right\rangle_i^2 = \bar{E}  \left( \frac{1}{2} - \frac{ m_i}{\sum_{j=1}^I m_j}  \frac{ \left|u\right| \left|V\right| }{\bar{E}} \sigma \cdot \hat{V}\right).
			\end{split} \end{equation}
	Now, the difference between 	\eqref{kin energy mixture} as a convex combination writing in the mixture setting and \eqref{kin energy single} as its special ``middle point", or ``half", case in single component case (or mixture for $m_i=m_j$) is clear.\\
	
	Another important aspect to be pointed out is the comparison of inequalities \eqref{key point} in the case $m_i\neq m_j$ versus $m_i=m_j$. When $m_i=m_j$, simply performing Young's inequality we get
	\begin{equation}\label{s=1/2}
	\frac{\bar{\lambda}}{\bar{E}}    \leq \frac{1}{2},
	\end{equation}
	which yields
	\begin{equation*}
\frac{1}{\bar{E}} \left( \bar{p}+ \bar{\lambda}\right) = \frac{1}{\bar{E}} \left( \bar{q}+ \bar{\lambda}\right) \leq 1.
	\end{equation*}
	This inequality is an analogue of  \eqref{key point} for $m_i = m_j$. Note than when masses are the same we can make use of the Young inequality, while in the case of different masses, we had to be more precise since both $\frac{1}{{E}} \left( {p}+ {\lambda}\right)$ and $ \frac{1}{{E}} \left( {q}+ {\lambda}\right)$ attain 1 as a maximal value for some values of their arguments, and therefore there is no room for any inequality. In particular, this inequality will be of decisive importance for the success of the Povzner lemma that will guarantee decay of the gain term with respect to the number of moments.  
	\end{remark}

\medskip
  
\begin{proof}[Proof of Povzner lemma~\ref{Povzner intro}]
In order to compute the angular average estimate \eqref{Povzner estimate gain}    we use the representation \eqref{EI1} and \eqref{EI2} from the energy identity Lemma~\ref{energy-identity-mixture} raised to power $k/2$. 
Then, the left hand side integral of  \eqref{Povzner estimate gain} becomes
	\begin{multline}\label{pomocna 2}
	\int_{S^2} \left( \left\langle v' \right\rangle_i^k +  \left\langle v'_* \right\rangle_j^k  \right) b_{ij}(\sigma\cdot \hat{u}) \, \mathrm{d}\sigma \\  = \int_{S^2}  \left( \left( p + \lambda\, \sigma \cdot \hat{V} \right)^\frac{k}{2} +  \left( q  - \lambda  \, \sigma \cdot \hat{V} \right)^\frac{k}{2}  \right) b_{ij}(\sigma\cdot \hat{u}) \, \mathrm{d}\sigma. 
	\end{multline}	
Now we use polar coordinates for $\sigma$ and $\hat{V}$ with zenith $\hat{u}$. Namely, denoting with $\theta$ the angle between $\sigma$ and $\hat{u}$, we decompose $\sigma$ as
\begin{equation}\label{sigma decomposed}
\sigma =  \cos\theta\, \hat{u} +  \sin\theta\, \omega, \ \text{with} \ \hat u\cdot\omega=0 \ \text{and} \  \omega=(\cos\varphi, \sin\varphi), \ \theta\in[0,\pi), \varphi\in[0,2\pi).
\end{equation}
In the same fashion we decompose $\hat{V}$, by denoting with $\alpha \in [0,\pi)$ the angle between $\hat{V}$ and $\hat{u}$,
\begin{equation*}
\hat{V} =  \cos\alpha\, \hat{u} +  \sin\alpha\, \Phi, \ \text{where} \ \Phi\in S^1 \ \text{with} \ \hat u\cdot\Phi=0.
\end{equation*}
Then the scalar product $\sigma \cdot \hat{V}$ becomes
\begin{align*}
\sigma \cdot \hat{V} 
&=\cos\theta\cos\alpha  +  \Phi\cdot\omega \sin\theta\sin\alpha.
\end{align*}
Defining $\tau:= \cos\theta$ and expressing $\sin\theta=\sqrt{1-\tau^2}$, since $\sin\theta \geq 0$ on the range of $\theta$, this scalar product reads
\begin{equation}\label{define_mu} 
\sigma \cdot \hat{V}  = \tau\cos\alpha  +  \Phi\cdot\omega \sqrt{1-\tau^2} \sin\alpha=:\mu =\mu(\tau,\alpha, \Phi\cdot\omega).
\end{equation}
In  the integral \eqref{pomocna 2}, we first express  $\sigma$ in its polar coordinates \eqref{sigma decomposed} and then change variables $\theta\mapsto\tau=\cos\theta$,  which yields
	\begin{multline*}
\int_{S^2}  \left( \left( p + \lambda\, \sigma \cdot \hat{V} \right)^\frac{k}{2} +  \left( q  - \lambda  \, \sigma \cdot \hat{V} \right)^\frac{k}{2}  \right) b_{ij}(\sigma\cdot \hat{u})\, \mathrm{d}\sigma
\\
= \int_{0}^{2\pi} \int_{0}^{\pi}  \left( \left( p + \lambda\, \sigma \cdot \hat{V} \right)^\frac{k}{2} +  \left( q  - \lambda  \, \sigma \cdot \hat{V} \right)^\frac{k}{2}  \right) b_{ij}(\cos\theta) \, \sin\theta \mathrm{d}\theta\mathrm{d}\varphi
\\
= \int_{0}^{2\pi} \int_{-1}^{1}  \left( \left( p + \lambda\,\mu \right)^\frac{k}{2} +  \left( q  - \lambda  \, \mu\right)^\frac{k}{2}  \right) b_{ij}(\tau) \mathrm{d}\tau\mathrm{d}\varphi.
\end{multline*}
For $k\geq 4$, Taylor expansion of $ \left( p + \lambda\, \mu \right)^{k/2}$  and $\left( q  - \lambda  \, \mu \right)^{k/2} $ around $\mu=0$ up to second order yields:
	\begin{equation*}
	\begin{split}
	\left( p + \lambda\, \mu \right)^\frac{k}{2} &= p^\frac{k}{2} + \tfrac{k}{2} p^{\tfrac{k}{2}-1} \lambda \mu + \tfrac{k}{2} \left(\tfrac{k}{2}-1\right) \lambda^2 \mu^2 \int_{0}^{1} (1-z) (p+\lambda \mu z)^{\frac{k}{2}-2} \mathrm{d}z,\\
	\left( q  - \lambda  \, \mu \right)^\frac{k}{2} &= q^\frac{k}{2} - \tfrac{k}{2} q^{\frac{k}{2}-1} \lambda \mu + \tfrac{k}{2} \left(\tfrac{k}{2}-1\right) \lambda^2 \mu^2 \int_{0}^{1} (1-z) (q - \lambda \mu z)^{\frac{k}{2}-2} \mathrm{d}z.
	\end{split}
	\end{equation*}
	For $2<k<4$, we stop at the first order and proceed similarly. 
	
	Now, let us analyze the integrands. By the Young inequality, for $\lambda$ the following estimates hold
	\begin{equation}\label{lambda estimate}
	\pm \lambda\leq q -1 \leq q, \quad \text{and} \quad \pm \lambda \leq p-1 \leq p.
	\end{equation}
We recall  definition of $p$ and $q$, 
\begin{equation*}
p= \left(r(1-s)+(1-r)s\right) E, \quad q= \left(rs+(1-r)(1-s)\right)E,
\end{equation*}
for $r\in (0,1)$ and $s\in [0,1]$. Considering $r$ as parameter, for both coefficients maximum with respect to variable $s$ is achieved on the boundary, i.e. for either $s=0$ or $s=1$, and moreover the following estimate holds for both coefficients
\begin{equation*}
r(1-s)+(1-r)s \leq \max\{r,(1-r)\}, \quad rs+(1-r)(1-s) \leq \max\{r,(1-r)\}.
\end{equation*}
	Denoting
	\begin{equation}\label{rmax}
	\overline{r}=\max\{r, 1-r\},
	\end{equation}
we conclude on upper bound for both $p$ and $q$,
\begin{equation*}
p \leq \overline{r} E, \qquad q\leq \overline{r} E.
\end{equation*}
Moreover, for $p$ and $q$ it holds
\begin{equation*}
p = r (1-s)E + (1-r) sE \geq \underline{r} E % \geq r(1-r)E
\quad \text{and} \quad q \geq \underline{r} E.%r(1-r)E.
\end{equation*}
where we have 	denoted
	\begin{equation}\label{rmin}
	\underline{r}=\min\{r, 1-r\}.
	\end{equation}
Taking into account inequalities above, one has
\begin{equation*}
p + \lambda \mu z \leq p+ q \mu z= E - q(1-\mu z) \leq E (1- \underline{r}(1- |\mu| z)),
\end{equation*}
and similarly
\begin{equation*}
q - \lambda \mu z \leq   E (1- \underline{r} (1-|\mu| z)).
\end{equation*}
Therefore,
\begin{multline*}
\left( p + \lambda\, \mu \right)^\frac{k}{2} +  \left( q  - \lambda  \, \mu \right)^\frac{k}{2} \leq p^\frac{k}{2} + q^\frac{k}{2} + \tfrac{k}{2}  \mu \left( p^{\frac{k}{2}}+q^{\frac{k}{2}} \right) \\ +  k \left(\tfrac{k}{2}-1\right) \overline{r}^2 \mu^2 E^{\frac{k}{2}} \int_{0}^{1} (1-z) (1- \underline{r}(1-|\mu| z))^{\frac{k}{2}-2} \mathrm{d}z.
\end{multline*}

Then 
\begin{equation*}
\int_{0}^{2\pi} \int_{-1}^{1}  \left( \left( p + \lambda\,\mu \right)^\frac{k}{2} +  \left( q  - \lambda  \, \mu\right)^\frac{k}{2}  \right) b_{ij}(\tau) \, \mathrm{d}\tau\,\mathrm{d}\varphi \leq  P_1 + P_2 + P_3,
	\end{equation*}
	with 
	\begin{equation*}
	\begin{split}
	P_1 &:=  \left(p^\frac{k}{2} + q^\frac{k}{2}\right) \int_{0}^{2\pi} \int_{-1}^{1}    b_{ij}(\tau) \, \mathrm{d}\tau \, \mathrm{d}\varphi ,\\
	P_2 &:= \tfrac{k}{2}  \left( p^{\frac{k}{2}}+q^{\frac{k}{2}} \right) \int_{0}^{2\pi} \int_{-1}^{1}  \mu \,  b_{ij}(\tau) \, \mathrm{d}\tau \, \mathrm{d}\varphi,  \\
 P_3  &:= k\left(\tfrac{k}{2}-1\right) \overline{r}^2 E^{\frac{k}{2}}  \int_{0}^{2\pi} \int_{-1}^1 \mu^2  \\ &\qquad  \qquad \qquad \qquad \times \left(\int_{0}^{1} (1-z) (1- \underline{r}(1-|\mu| z))^{\frac{k}{2}-2}   \mathrm{d}z \right) b_{ij}(\tau)\, \mathrm{d}\tau\, \mathrm{d}\varphi.
	\end{split}
	\end{equation*}
	
\subsubsection*{Term $P_1$}
	Introducing  constant $\tilde{C}_n$
	\begin{equation}
	\tilde{C}_n = \overline{r}^n, \quad 0<\overline{r}<1,
	\end{equation}
	which clearly decays in $n$, we have
	\begin{equation*}
	P_1 = \left\| b_{ij} \right\|_{L^1(\mathrm{d}\sigma)} \left(p^\frac{k}{2} + q^\frac{k}{2} \right) \leq  \left\| b_{ij} \right\|_{L^1(\mathrm{d}\sigma)} 2 \tilde{C}_\frac{k}{2} E^\frac{k}{2}.
	\end{equation*}
\subsubsection*{Term $P_2$} 	
Taking into account definition of $\mu$  from \eqref{define_mu}, the parity arguments yield 
\begin{equation*}
P_2 = \tfrac{k}{2} \left(p^\frac{k}{2} + q^\frac{k}{2} \right) \int_0^{2\pi} \int_{-1}^1 \tau \cos \alpha  \, b_{ij}(\tau) \, \mathrm{d} \tau \, \mathrm{d}\varphi,
\end{equation*}	
after bounding $\cos\alpha \leq 1$. 
Using the estimate above for $P_1$ and the fact that $\tau \cos \alpha\leq1$, we finally obtain
\begin{equation*}
P_2 \leq    \left\| b_{ij} \right\|_{L^1(\mathrm{d}\sigma)}  k \, \tilde{C}_\frac{k}{2} E^\frac{k}{2}.
\end{equation*}	
Since the constant $ \tilde{C}_\frac{k}{2}$ has power decay in $k$, the constant $ k \, \tilde{C}_\frac{k}{2} $ also decreases in $k$.
\subsubsection*{Term $P_3$} 
We can compute explicitly the integral with respect to $z$
\begin{multline*}
 \int_{0}^{1} (1-z) (1- a (1-A z))^{n-2} \mathrm{d}z \\= \frac{1}{a^2 A^2} \frac{1}{n(n-1)} \left(  \left(1+ a(A-1)\right)^n -  (1-a)^n - a A(1-a)^{n-1}  n\right),
\end{multline*}
for any $0<a<1$ and $A>0$. If $A=0$, then we easily obtain 
\begin{equation*}
\int_{0}^{1} (1-z) (1- a )^{n-2} \mathrm{d}z = \frac{1}{2}  (1-a)^{n-2}.
\end{equation*}
In our case  $a=\underline{r}$ and $A=|\mu|$, $\mu$ being a function of variables of integration $\tau$ and $\varphi$ defined in \eqref{define_mu} that satisfies $|\mu|\leq 1$, and thus $P_3$ becomes
\begin{multline*}
P_3  =2\frac{\overline{r}^2}{ \underline{r}^2} E^{\frac{k}{2}}  \int_{0}^{2\pi} \int_{-1}^1    \left(  \left(1+ \underline{r}(|\mu|-1)\right)^{\frac{k}{2}} \right. \\ \left. -  (1-\underline{r})^{\frac{k}{2}} - \underline{r} |\mu|(1-\underline{r})^{\frac{k}{2}-1}  \tfrac{k}{2}\right)  b_{ij}(\tau) \, \mathrm{d}\tau \, \mathrm{d}\varphi\\
=: P_{3_1} + P_{3_2} + P_{3_3},
\\ \leq E^{\frac{k}{2}} \left( \check{C}^{b_{ij}}_{\frac{k}{2}} + \left\| b_{ij} \right\|_{L^1(\mathrm{d}\sigma)} \left( \bar{C}_{\frac{k}{2}} + \hat{C}_{\frac{k}{2}} \right) \right),
\end{multline*}
where we have denoted 
	\begin{equation*}
\begin{split}
P_{3_1} &:= 2\frac{\overline{r}^2}{ \underline{r}^2} E^{\frac{k}{2}}  \int_{0}^{2\pi} \int_{-1}^1      \left(1+ \underline{r}(|\mu|-1)\right)^{\frac{k}{2}} b_{ij}(\tau) \, \mathrm{d}\tau \, \mathrm{d}\varphi,\\
P_{3_2} &:= - 2\frac{\overline{r}^2}{ \underline{r}^2} (1-\underline{r})^{\frac{k}{2}} E^{\frac{k}{2}}  \int_{0}^{2\pi} \int_{-1}^1     b_{ij}(\tau) \, \mathrm{d}\tau \, \mathrm{d}\varphi,\\
P_{3_3}  &:= - \frac{\overline{r}^2}{ \underline{r}} \, k\, (1-\underline{r})^{\frac{k}{2}-1} E^{\frac{k}{2}}  \int_{0}^{2\pi} \int_{-1}^1     |\mu| \, b_{ij}(\tau) \, \mathrm{d}\tau \, \mathrm{d}\varphi.\\
\end{split}
\end{equation*}
\subsubsection*{Term $P_{3_1}$} We rewrite term $P_{3_1}$, 
\begin{equation*}
P_{3_1} = \check{C}^{b_{ij}}_{\frac{k}{2}} E^{\frac{k}{2}},
\end{equation*}
by introducing the constant $\check{C}^{b_{ij}}_n$
\begin{equation}\label{Povzner C check}
\check{C}^{b_{ij}}_n = 2\frac{\overline{r}^2}{ \underline{r}^2}  \int_{0}^{2\pi} \int_{-1}^1   \left(1+ \underline{r}(|\mu|-1)\right)^{n} b_{ij}(\tau) \, \mathrm{d}\tau \, \mathrm{d}\varphi.
\end{equation}
In order to study its properties,  we first note that $1+\underline{r}(|\mu|-1) \leq 1$, since $|\mu|\leq 1$, and the equality holds only when $|\mu|=1$ (or $\sigma = \left\{ \pm \hat{V} \right\}$). Therefore, the sequence of functions 
\begin{equation*}
A_n(x):=  \left(1+ \underline{r}(x-1)\right)^{n}
\end{equation*}
decreases monotonically in $n$ and tends to 0 as $n\rightarrow \infty$  for every $x\in (0,1)$ up to a set of measure zero. Finally, we conclude by monotone convergence Theorem that $ \check{C}^{b_{ij}}_{\frac k2} \searrow 0$ as $k\rightarrow \infty$.

When  $b_{ij} \in  L^\infty(S^2; \mathrm{d}\sigma)$, we can obtain the explicit decay rate of the constant $ \check{C}^{b_{ij}}_{\frac k2} $, since in this case the integral \eqref{Povzner C check} significantly simplifies. The rate will be calculated in the Remark \ref{Remark Povzner constant Linf} below.

\subsubsection*{Term $P_{3_2}$} For the term $P_{3_2}$ we immediately obtain 
\begin{equation*}
P_{3_2} =  \left\| b_{ij} \right\|_{L^1(\mathrm{d}\sigma)} \bar{C}_{\frac{k}{2}} E^{\frac{k}{2}},
\end{equation*}
with the constant
\begin{equation*}
\bar{C}_{n} = - 2 \frac{\overline{r}^2}{ \underline{r}^2} (1-\underline{r})^n.
\end{equation*}
\subsubsection*{Term $P_{3_3}$} We first estimate the term $P_{3_3}$ using $|\mu|\leq 1$,
\begin{equation*}
P_{3_3} \leq   \frac{\overline{r}^2}{ \underline{r}} \, k\, (1-\underline{r})^{\frac{k}{2}-1} E^{\frac{k}{2}}  \int_{0}^{2\pi} \int_{-1}^1    b_{ij}(\tau) \, \mathrm{d}\tau \, \mathrm{d}\varphi.
\\ \leq  \left\| b_{ij} \right\|_{L^1(\mathrm{d}\sigma)} \hat{C}_{\frac{k}{2}}   E^{\frac{k}{2}},
\end{equation*}
and  the constant is defined with
\begin{equation}\label{Povzner C hat}
\hat{C}_n = 2 \frac{\overline{r}^2}{ \underline{r}} \, n\, (1-\underline{r})^{n-1}.
\end{equation}
Gathering estimates for $P_1$, $P_2$ and $P_3$ completes the proof of \eqref{Povzner estimate gain} with
	$$\boldsymbol{\mathcal{C}}^{ij}_{n} =  \left\| b_{ij} \right\|_{L^1(\mathrm{d}\sigma)} \left( (2n+2) \tilde{C}_n +   \bar{C}_n +   \hat{C}_n \right) + \check{C}^{b_{ij}}_n, n>2,$$
and $\boldsymbol{\mathcal{C}}^{ij}_{n} = \left\| b_{ij} \right\|_{L^1(\mathrm{d}\sigma)} 2  \tilde{C}_n$, if $1<n\leq2$. Thus, the constant $\boldsymbol{\mathcal{C}}^{ij}_{\frac k2} $ issuing from Povzner lemma  satisfies  $\boldsymbol{\mathcal{C}}^{ij}_{\frac k2} \rightarrow 0$, as $k\rightarrow \infty$, and so there exists $k^{ij}_*=k^{ij}_*(r_{ij}, b_{ij})$ for which $\boldsymbol{\mathcal{C}}^{ij}_{\frac k2} <\left\| b_{ij} \right\|_{L^1(\mathrm{d}\sigma)}$, for $k>k^{ij}_*$.	
\end{proof}
\begin{remark}[The case $b_{ij} \in  L^\infty(S^2; \mathrm{d}\sigma)$]\label{Remark Povzner constant Linf} When the angular kernel is assumed bounded, some calculations are simpler. Pulling out the $L^\infty$ norm of $b_{ij}$, we have
\begin{equation*}
 \int_{0}^{2\pi} \int_{-1}^{1}    b_{ij}(\tau) \, \mathrm{d}\tau \, \mathrm{d}\varphi \leq 4\pi  \left\| b_{ij} \right\|_{L^\infty(\mathrm{d}\sigma)},
\end{equation*}
and so terms $P_1$ and $P_{3_2}$ become
\begin{equation*}
P_1 \leq  8\pi  \left\| b_{ij} \right\|_{L^\infty(\mathrm{d}\sigma)} \tilde{C}_\frac{k}{2} E^\frac{k}{2}, \qquad P_{3_2} =  4\pi  \left\| b_{ij} \right\|_{L^\infty(\mathrm{d}\sigma)} \bar{C}_{\frac{k}{2}} E^{\frac{k}{2}}.
\end{equation*}

Moreover,  when $b_{ij}$ is assumed bounded, the  starting  integral \eqref{pomocna 2} do not depend  $\sigma \cdot \hat{u}$ anymore, so we may instead of $\hat{u}$ take $\hat{V}$ as a zenith of polar coordinates in \eqref{sigma decomposed}, which amounts to take $\alpha=0$ in \eqref{define_mu} that implies $\mu=\tau$.  In this case, thanks to the parity arguments, term $P_2$ vanishes, and term $P_{3_3}$ can be explicitly calculated, without using any estimate,
\begin{equation*}
P_{3_3}  = - \frac{\overline{r}^2}{ \underline{r}} \, k\, (1-\underline{r})^{\frac{k}{2}-1} E^{\frac{k}{2}}  \int_{0}^{2\pi} \int_{-1}^1     |\tau| \, b_{ij}(\tau) \, \mathrm{d}\tau \, \mathrm{d}\varphi\\
=  - 2 \pi\,  \left\| b_{ij} \right\|_{L^\infty(\mathrm{d}\sigma)} \hat{C}_\frac{k}{2} E^{\frac{k}{2}},
\end{equation*}
with the constant $ \hat{C}_\frac{k}{2}$ from \eqref{Povzner C hat}.

Finally, let us compute explicitly  the constant $ \check{C}^{b_{ij}}_{n}$ from \eqref{Povzner C check}  when $b_{ij}(\sigma \cdot \hat{u}) \in  L^\infty(S^2; \mathrm{d}\sigma)$. Namely, pulling out the $L^\infty$ norm of $b_{ij}$ from the integral and using $\mu=\tau$, we get
\begin{multline*}
\check{C}^{b_{ij}}_n = 2\frac{\overline{r}^2}{ \underline{r}^2}   \left\| b_{ij} \right\|_{L^\infty(\mathrm{d}\sigma)} \int_{0}^{2\pi} \int_{-1}^1   \left(1+ \underline{r}(|\tau|-1)\right)^{n}  \mathrm{d}\tau \, \mathrm{d}\varphi
\\
=8 \pi \, \frac{\overline{r}^2}{ \underline{r}^3}   \left\| b_{ij} \right\|_{L^\infty(\mathrm{d}\sigma)} \left( \frac{1}{n+1} - \frac{(1-\underline{r})^{n+1}}{n+1} \right),
\end{multline*}
that shows its decay rate. 

To summarize, the constant from the Povzner lemma  in the case of bounded angular part reads
\begin{equation}\label{c povzner}
\boldsymbol{\mathcal{C}}^{ij}_n = 4 \pi  \left\| b_{ij} \right\|_{L^\infty(\mathrm{d}\sigma)} \boldsymbol{\mathcal{C}}^{\infty}_n(r),
\end{equation}
where we have denoted 
\begin{equation}\label{c povnzer renorm}
\boldsymbol{\mathcal{C}}^{\infty}_n(r) =  2 \tilde{C}_n + \bar{C}_n - \frac{1}{2} \hat{C}_n + 2 \frac{\overline{r}^2}{ \underline{r}^3}   \left( \frac{1}{n+1} - \frac{(1-\underline{r})^{n+1}}{n+1} \right), \quad n>2,
\end{equation}
 and $\boldsymbol{\mathcal{C}}^{\infty}_n(r) =  2 \tilde{C}_n $ if $1< n \leq 2$,
recalling \eqref{rmin} and \eqref{rmax}. Moreover, it satisfies  $\boldsymbol{\mathcal{C}}^{ij}_{\frac k2} < 4 \pi  \left\| b_{ij} \right\|_{L^\infty(\mathrm{d}\sigma)} $, or equivalently $\boldsymbol{\mathcal{C}}^{\infty}_n(r) <1$, for sufficiently large $k^{ij}_*$ depending on $r_{ij}$ and $b_{ij}$.
\end{remark}

\subsection{Study of the Povzner constant for $b_{ij}(\sigma \cdot \hat{u}) \in  L^\infty(S^2; \mathrm{d}\sigma)$}	In this paragraph we study in detail the constant \eqref{c povzner} from the Povzner lemma \ref{Povzner intro} in the case of bounded angular part. More precisely, we study its normalized part \eqref{c povnzer renorm} 
\begin{equation}\label{c povnzer renorm 2}
\boldsymbol{\mathcal{C}}^{\infty}_n(r) =  2 \overline{r}^n  - 2 \frac{\overline{r}^2}{ \underline{r}^2} (1-\underline{r})^n - \frac{\overline{r}^2}{ \underline{r}} \, n\, (1-\underline{r})^{n-1} + 2 \frac{\overline{r}^2}{ \underline{r}^3}   \left( \frac{1}{n+1} - \frac{(1-\underline{r})^{n+1}}{n+1} \right),
\end{equation}
for $n>2$ and $\boldsymbol{\mathcal{C}}^{\infty}_n(r) =  2 \overline{r}^n$ if $1< n \leq2$, with $\overline{r} = \max\left\{r, 1-r \right\}$ and $\underline{r} = \min\left\{r,1-r\right\}$, 
and elaborate more on its decay rate in $n$ depending on $r$. 

First, taking $r=\frac{1}{2}$ we expect to recover the same  properties as for the single gas when decay rate of the Povzner constant  \cite{GambaAlonso18} was $\frac{2}{n+1}$, that monotonically decreases and tends to zero in $n>1$. In our case,
\begin{equation*}
\boldsymbol{\mathcal{C}}^{\infty}_n\left(\frac{1}{2}\right) =
\begin{cases}
 \frac{4}{n+1} - \left(\frac{1}{2}\right)^n \left( n+ \frac{2}{n+1} \right), \text{if}  \ n>2,\\
2 \left(\frac{1}{2}\right)^n, \text{if}  \ 1<n\leq2.
\end{cases}
\end{equation*}
that keeps the same properties as for the single gas, which can be illustrated as in   Figure \ref{figure-povzner-single}.

For general $r\in(0,1)$ decay properties of the constant issuing from the Povzner lemma \eqref{c povnzer renorm} strongly depend on $r$ or on the fact how much species masses $m_i$, $i=1,\dots,I$ are disparate. It is clear that, since $0<r<1$, the constant $\boldsymbol{\mathcal{C}}^{\infty}_n(r)$ will tend to zero as $n$ goes to infinity. Here we are interested in  a more subtle question: determine  $n_*$  such that it holds $\boldsymbol{\mathcal{C}}^{\infty}_n(r)<1$ for $n\geq n_*$ and any fixed $0<r<1$. Converge of $\boldsymbol{\mathcal{C}}^{\infty}_n(r)$ in $n$ towards zero for any $0<r<1$ ensures existence of such $n_*$. It can be observed that  $n_*$ grows as much as $r$ is deviated from $\frac{1}{2}$, since the constants in $\boldsymbol{\mathcal{C}}^{\infty}_n(r)$ with power decay rate will decay more  slowly as $r$ deviates    from $\frac{1}{2}$. This behavior is illustrated in Figure \ref{figure-povzner-mixture}. We can reformulate the question: for some fixed value of $n$ determine the interval of $r$ for which it holds $\boldsymbol{\mathcal{C}}^{\infty}_n(r)<1$, that is illustrated in Figure \ref{figure-povzner-mixture-2}. 

\begin{figure}
	\begin{center}
	% [inline block 0: 3 envs, 154137 chars -> data_tex | \begin{tikzpicture} 	% M=1.2, c_0=0.35, \mu=0.05...]
	
	\end{center}
	\caption{Constant $\boldsymbol{\mathcal{C}}^{\infty}_n(r)$ from Povzner lemma \ref{Povzner intro} for some fixed value of $n=:n_*$. This figure illustrates the interval of $r$ for  which it holds $\boldsymbol{\mathcal{C}}^{\infty}_{n_*}(r)<1$.}
	\label{figure-povzner-mixture-2}
\end{figure}

\section{Proof of Existence and Uniqueness Theorem \ref{theorem existence uniqueness}}\label{Section Ex Uni proof}
Before proving Theorem \ref{theorem existence uniqueness}, we first study a property of the collision operator that is a consequence of the Povzner lemma \ref{povzner constant prop} and  lemma \ref{lemma lower bound}.

\begin{lemma}\label{lemma Q za tangent cond}
	Let $\mathbb{F}=\left[f_i\right]_{i=1,\dots,I} \in \Omega$   and $k_*$ as defined in \eqref{kstar}. Then, the following estimate holds
	\begin{equation}\label{Q za tangent cond}
	\sum_{i=1}^I \int_{ \mathbb{R}^3} \left[ \mathbb{Q}(\mathbb{F}) \right]_i \left\langle v \right\rangle_i^{k_*} \mathrm{d} v
	\leq - A_{k_*} \, \mathfrak{m}_{k_*}[\mathbb{F}](t)^{1+ \frac{\overline{\gamma}}{k_*}}+  B_{k_*} \,   \mathfrak{m}_{k_*}[\mathbb{F}](t),
	\end{equation}
with positive constants 
	\begin{equation}\label{const Ak Bk}
	\begin{split}
	A_{k_*} &= \min_{1\leq i, j \leq I} \left(\left\| b_{ij} \right\|_{L^1(\mathrm{d}\sigma)} - \boldsymbol{\mathcal{C}}^{ij}_{\frac{k_*}{2}} \right)  \frac{c_{lb}}{\max_{1\leq i \leq I} m_i}  \left(I C_0 \right)^{-\frac{\overline{\gamma}}{k_*}}, \\ 
	B_{k_*}&= 2 \, C_{2}  \max_{1\leq i, j \leq I} \left(  \left( \frac{\sum_{i=1}^I m_i}{\sqrt{m_i m_j}} \right)^{\gamma_{ij}} \boldsymbol{\mathcal{C}}^{ij}_{\frac{k_*}{2}} \right)    \sum_{\ell=1}^{\lfloor\frac{k_* + 1}{2}\rfloor} \left( \begin{matrix}
		k_* \\ \ell 
		\end{matrix} \right),
	\end{split}
	\end{equation}
where $C_0$ and $C_{2}$ are from the characterization of the set $\Omega$, $c_{lb}$ is from the lower bound \eqref{lower bound}, and  $\boldsymbol{\mathcal{C}}^{ij}_{\frac{k_*}{2}}$ is a constant from the Povzner lemma \ref{Povzner intro} with   $k_* > \overline{k}$, as defined in \eqref{kstar},   ensuring the property \eqref{povzner constant prop} for any pair $(i,j)$ that yields positivity of the constant $A_{k_*}$.
\end{lemma}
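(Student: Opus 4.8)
The plan is to run the standard moment–estimate machinery, specialized to the single exponent $k=\ks$, while keeping careful track of the species masses and of the $\Omega$–bounds so that the constants come out exactly as in \eqref{const Ak Bk}. First I would test the system against $\psi_i(v)=\japvi^{\ks}$ in the weak form \eqref{weak form any test}, insert the cross section \eqref{cross section}, and pull the loss factors $\japvi^{\ks}$ and $\japvjs^{\ks}$ (which are $\sigma$–independent) out of the $S^2$–integral. Applying the angular averaged Povzner estimate \eqref{Povzner estimate gain} from Lemma~\ref{Povzner intro} to the gain part, the $(i,j)$–integrand is bounded by $|u|^{\gij}\big(\cpovks(\japvi^2+\japvjs^2)^{\ks/2}-\bijnjed(\japvi^{\ks}+\japvjs^{\ks})\big)$, the whole thing summed over $i,j$ and divided by $2$.

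Next I would isolate the diagonal contribution. Using $\japvi^2+\japvjs^2\le(\japvi+\japvjs)^2$ and the binomial formula (taking $\ks$ to be an integer without loss of generality), $(\japvi^2+\japvjs^2)^{\ks/2}\le\japvi^{\ks}+\japvjs^{\ks}+\sum_{\ell=1}^{\lfloor(\ks+1)/2\rfloor}\binom{\ks}{\ell}\big(\japvi^{\ell}\japvjs^{\ks-\ell}+\japvi^{\ks-\ell}\japvjs^{\ell}\big)$, and, since $\log(a^{\ell}b^{\ks-\ell})$ is affine in $\ell$, one has $a^{\ell}b^{\ks-\ell}\le ab^{\ks-1}+a^{\ks-1}b$ for $1\le\ell\le\ks-1$, which collapses every cross term onto the nearest–diagonal pair $\japvi\japvjs^{\ks-1}+\japvi^{\ks-1}\japvjs$. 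Hence the integrand is at most $-|u|^{\gij}(\bijnjed-\cpovks)(\japvi^{\ks}+\japvjs^{\ks})+|u|^{\gij}\cpovks\big(\sum_{\ell}\binom{\ks}{\ell}\big)(\japvi\japvjs^{\ks-1}+\japvi^{\ks-1}\japvjs)$. The coefficient $\bijnjed-\cpovks$ is \emph{positive} for every pair $(i,j)$ precisely because $\ks>\kmax$ (recall \eqref{kstar}), by the Povzner property \eqref{povzner constant prop}; this is what turns the first piece into a genuine coercive term and forces positivity of $A_{\ks}$ in \eqref{const Ak Bk}.

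For the coercive piece I would integrate against $f_i(v)f_j(v_*)$, sum over $i,j$, and invoke the lower bound Lemma~\ref{lemma lower bound}: since $\F\in\Omega$ has total number density bounded below and energy bounded above, \eqref{lower bound} delivers $\sum_j\int f_j(v_*)|v-v_*|^{\gij}\,\mathrm dv_*\ge\tfrac{\clb}{\max_i m_i}\japvi^{\gmax}$ (and symmetrically with the roles of $v$ and $v_*$ exchanged), so the diagonal contribution is at most $-\min_{i,j}(\bijnjed-\cpovks)\tfrac{\clb}{\max_i m_i}\,\pmksgammamax[\F]$. Then log-convexity of the weighted $L^1_{\cdot,i}$–norms, combined with a power–mean inequality over the $I$ species and the bound $\pmnulaF\le\cnulaU$, gives $\pmksgammamax[\F]\ge(I\cnulaU)^{-\gmax/\ks}\,\pmksF^{\,1+\gmax/\ks}$, producing exactly the term $-A_{\ks}\pmksF^{\,1+\gmax/\ks}$.

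Finally, for the surviving cross term I would use $|u|^{\gij}\le\big(\tfrac{\sum_{\ell}m_\ell}{\sqrt{m_i m_j}}\big)^{\gij}(\japvi^{\gij}+\japvjs^{\gij})$ and distribute the extra $\gij$–weight, so that each summand becomes a product of a factor of weight at most $1+\gij\le2$ and a factor of weight at most $\ks-1+\gij\le\ks$ (here $\gij\le\gmax\le1$ is used). Integrating against $f_if_j$ and summing, the low–weight factors are controlled uniformly in time by $\pmdvaF\le\cdva$ while the high–weight factors are controlled by $\pmksF$; collecting constants (including the $1/2$ from the weak form) yields the stated $B_{\ks}\pmksF$. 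The main obstacle is exactly this last step: one must argue that the cross term is \emph{linear} in $\pmksF$ rather than quadratic, and this works only because collapsing the binomial sum onto nearest–diagonal terms \emph{before} absorbing $|u|^{\gij}$ keeps one factor of each product in the ``$\le2$'' regime, where the $\Omega$–constant $\cdva$ applies; the careful bookkeeping of the mass ratios $\tfrac{\sum m_\ell}{\sqrt{m_i m_j}}$ and of the $r_{ij}$–dependence of $\cpovks$ (and hence of $k^{ij}_*$, $\kmax$) is the other point that requires attention.
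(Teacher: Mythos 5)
Your proposal reproduces the paper's own proof: test against $\japvi^{\ks}$ in the weak form, apply the angular-averaged Povzner estimate, expand $(\japvi^2+\japvjs^2)^{\ks/2}$ via the binomial bound and collapse the cross terms onto the nearest-diagonal pair (Lemmas~\ref{binomial} and \ref{moment products}), split the relative speed $|u|^{\gij}$ into a lower bound (Lemma~\ref{lemma lower bound}) feeding the coercive term and an upper bound \eqref{estimate on u^g} feeding the cross term, control the cross-term factors with $\cdva$ and monotonicity of moments, and finish by the Jensen-type estimate \eqref{jensen}. The argument and the bookkeeping of constants $A_{\ks}$, $B_{\ks}$ match the paper's.
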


\begin{remark}\label{coercive_constant} 
It is important to notice that the strict positivity of the constant $A_{k_*}$ can be view as a {\bf coercive condition} that secures global in time solutions, without the need to require boundedness of entropy. 
\end{remark}

\begin{proof}
We start with the weak form \eqref{weak form any test}. Taking test function $\psi_i(x)=\left\langle v \right\rangle_i^{k_*}$, and  cross section \eqref{cross section}, we have
\begin{multline} \label{pomocna 28} 
\sum_{i=1}^I \int_{ \mathbb{R}^3} \left[ \mathbb{Q}(\mathbb{F}) \right]_i \left\langle v \right\rangle_i^{k_*} \mathrm{d} v =\sum_{i=1}^I \sum_{{j=1 }}^I \int_{\mathbb{R}^3} \left\langle v \right\rangle_i^{k_*}  Q_{ij}(f_i,f_j) \, \mathrm{d}v\\ = \frac{1}{2}\sum_{i=1}^I \sum_{{j=1}}^I \iiint_{\mathbb{R}^3 \times \mathbb{R}^3 \times S^2} \left|v-v_*\right|^{\gamma_{ij}} f_i(v) f_j(v_*) \\ \times \left( \left\langle v' \right\rangle_i^{k_*} +  \left\langle v'_* \right\rangle_j^{k_*} -  \left\langle v \right\rangle_i^{k_*} - \left\langle v_* \right\rangle_j^{k_*}   \right) b_{ij}(\sigma \cdot \hat{u}) \,\mathrm{d}\sigma \mathrm{d}v_* \mathrm{d}v,
\end{multline}
where collisional rules are  \eqref{collisional rules bi}. The primed quantities integrated over sphere $S^2$ are estimated via Povzner lemma. Indeed, by Lemma \ref{Povzner intro} , \eqref{pomocna 28} becomes
\begin{multline}\label{poly moment start}
\sum_{i=1}^I \int_{ \mathbb{R}^3} \left[ \mathbb{Q}(\mathbb{F}) \right]_i \left\langle v \right\rangle_i^{k_*} \mathrm{d} v  \leq \frac{1}{2} \sum_{i=1}^I \sum_{j=1}^I \iint_{\mathbb{R}^3 \times \mathbb{R}^3} f_i (v) f_j(v_*) \left|v-v_*\right|^{\gamma_{ij}} \\ \times \left( \boldsymbol{\mathcal{C}}^{ij}_{\frac{k_*}{2}} \left(\left\langle v \right\rangle_i^2 + \left\langle v_* \right\rangle_j^2 \right)^{\frac{k_*}{2}} - \left\| b_{ij} \right\|_{L^1(\mathrm{d}\sigma)} \left( \left\langle v \right\rangle_i^{k_*} + \left\langle v_* \right\rangle_j^{k_*} \right) \right) \mathrm{d} v_* \mathrm{d}v,
\end{multline}
where $\boldsymbol{\mathcal{C}}^{ij}_{\frac{k_*}{2}}$ is a constant from Povzner lemma \ref{Povzner intro}  with $k_* \geq \overline{k}=\max_{1\le i,j\le I}\{ k^{ij}_*\} $ chosen large enough to ensure \eqref{povzner constant prop} uniformly in  $i,j$-pairs.  On one hand,   
we use  polynomial inequalities from Lemmas  \ref{binomial} and \ref{moment products}
\begin{equation*}
\begin{split}
\left(\left\langle v \right\rangle_i^2 + \left\langle v_* \right\rangle_j^2 \right)^{\frac{k_*}{2}} & \leq \left(\left\langle v \right\rangle_i + \left\langle v_* \right\rangle_j\right)^{k_*} \\& \leq \left\langle v \right\rangle_i^{k_*} + \left\langle v_* \right\rangle_j^{k_*} + \sum_{\ell=1}^{\ell_{k_*}} \left( \begin{matrix}
k_* \\ \ell 
\end{matrix} \right) \left( \left\langle v \right\rangle_i^{\ell} \left\langle v_* \right\rangle_j^{k_*- \ell}+ \left\langle v \right\rangle_i^{k_*- \ell} \left\langle v_* \right\rangle_j^{\ell} \right),\\
& \leq \left\langle v \right\rangle_i^{k_*} + \left\langle v_* \right\rangle_j^{k_*} + \left( \left\langle v \right\rangle_i \left\langle v_* \right\rangle_j^{k_*- 1}+ \left\langle v \right\rangle_i^{k_*- 1} \left\langle v_* \right\rangle_j \right) \left( \sum_{\ell=1}^{\ell_{k_*}} \left( \begin{matrix}
k_* \\ \ell 
\end{matrix} \right) \right) 
\end{split}
\end{equation*}
with $\ell_{k_*} = \lfloor\frac{k_* + 1}{2}\rfloor$, 
and therefore
\begin{multline}\label{pomocna 30}
\sum_{i=1}^I \int_{ \mathbb{R}^3} \left[ \mathbb{Q}(\mathbb{F}) \right]_i \left\langle v \right\rangle_i^{k_*} \mathrm{d} v 
	\leq  \frac{1}{2} \sum_{i=1}^I  \sum_{j=1}^I  \iint_{\mathbb{R}^3 \times \mathbb{R}^3} f_i (v) f_j(v_*)   \left|v-v_*\right|^{\gamma_{ij}}  \\  \times \left\{ - \left(\left\| b_{ij} \right\|_{L^1(\mathrm{d}\sigma)} - \boldsymbol{\mathcal{C}}^{ij}_{\frac{k_*}{2}} \right) \left(\left\langle v \right\rangle_i^{k_*} + \left\langle v_* \right\rangle_j^{k_*}\right)  \right. \\ 	 \left. +  \boldsymbol{\mathcal{C}}^{ij}_{\frac{k_*}{2}} \left(    \sum_{\ell=1}^{\ell_{k_*}} \left( \begin{matrix}
		k_* \\ \ell 
		\end{matrix} \right)   \right)  \left(    \left\langle v \right\rangle_i \left\langle v_* \right\rangle_j^{k_*- 1}+ \left\langle v \right\rangle_i^{k_*- 1} \left\langle v_* \right\rangle_j   \right)  \right\} \mathrm{d}v_* \mathrm{d}v.
\end{multline}
On the other hand we use upper and lower bound of the non-angular cross section $\left|v-v_*\right|^{\gamma_{ij}}$. For the upper bound, from \eqref{estimate on u^g} it follows
\begin{equation*}
\left|v-v_*\right|^{\gamma_{ij}} \leq  \left( \frac{\sum_{i=1}^I m_i}{\sqrt{m_i m_j}} \right)^{\gamma_{ij}} \left(\left\langle v \right\rangle_i^{\gamma_{ij}} + \left\langle v_* \right\rangle_j^{\gamma_{ij}}\right)  \leq \left( \frac{\sum_{i=1}^I m_i}{\sqrt{m_i m_j}} \right)^{\gamma_{ij}}  \left(\left\langle v \right\rangle_i^{\bar{\bar{\gamma}}} + \left\langle v_* \right\rangle_j^{\bar{\bar{\gamma}}}\right),
\end{equation*}
for $\bar{\bar{\gamma}}=\max_{1\leq i, j \leq I}\gamma_{ij}\in(0,1]$.  For the   lower bound, we use Lemma \ref{lemma lower bound},  but
 we first  check that all assumptions are satisfied from the fact that $\mathbb{F} \in \Omega$. Indeed, bounds on $\mathfrak{m}_{0}$ implies
\begin{equation*}
c_{0} \min_{1\leq i \leq I} m_i \leq \sum_{i=1}^I \int_{ \mathbb{R}^3} m_i \, f_i \, \mathrm{d}v \leq C_0 \max_{1\leq i \leq I} m_i.
\end{equation*}
From the other side, bounds on $\mathfrak{m}_2$ yield
\begin{equation*}
\left(c_{2} - C_0 \right) \sum_{j=1}^I m_j \leq \sum_{i=1}^I \int_{ \mathbb{R}^3} m_i \left|v\right|^2  f_i \, \mathrm{d}v \leq	\left(C_{2} - c_{0} \right) \sum_{j=1}^I m_j.
\end{equation*}
Therefore, for constants $c$ and $C$ from assumptions of Lemma \ref{lemma lower bound} we can choose
\begin{equation*}
\begin{split}
c&:=\min\left\{ 	c_{0} \min_{1\leq i \leq I} m_i, 	\left(c_{2} - C_0 \right) \sum_{j=1}^I m_j   \right\}, \\
C&:= \max\left\{ C_0 \max_{1\leq i \leq I} m_i,  \left(C_{2} - c_{0} \right) \sum_{j=1}^I m_j \right\}.
\end{split}
\end{equation*}
Note that positivity of $c$ is guaranteed by the definition of the set $\Omega$.  Finally, since it can be estimated
\begin{equation*}
\sum_{i=1}^I \int_{ \mathbb{R}^3} m_i \left|v\right|^{2+\epsilon}  f_i \, \mathrm{d}v \leq \mathfrak{m}_{2+\varepsilon} \left( \sum_{j=1}^I m_j \right)^{1+\frac{\varepsilon}{2}} \max_{1\leq i \leq I} m_i^{-\frac{\varepsilon}{2}},
\end{equation*}
we can choose
\begin{equation*}
B:= C_{2+\varepsilon} \left( \sum_{j=1}^I m_j \right)^{1+\frac{\varepsilon}{2}} \max_{1\leq i \leq I} m_i^{-\frac{\varepsilon}{2}}.
\end{equation*}
Then \eqref{lower bound} implies
\begin{equation*}
\sum_{i=1}^I \int_{  \mathbb{R}^3} f_i(v)  \left|v-v_*\right|^{\gamma_{ij}} \mathrm{d}v \geq \frac{1}{\max_{1\leq i \leq I} m_i} c_{lb} \left\langle v_* \right\rangle_j^{\overline{\gamma}}, 
\end{equation*}
and 
\begin{equation*}
\sum_{j=1}^I \int_{  \mathbb{R}^3} f_j(v_*)  \left|v-v_*\right|^{\gamma_{ij}} \mathrm{d}v \geq \frac{1}{\max_{1\leq j \leq I} m_j} c_{lb} \left\langle v \right\rangle_i^{\overline{\gamma}}.
\end{equation*}
With these estimates, \eqref{pomocna 30} becomes
\begin{equation*}
\sum_{i=1}^I \int_{ \mathbb{R}^3} \left[ \mathbb{Q}(\mathbb{F}) \right]_i \left\langle v \right\rangle_i^{k_*} \mathrm{d} v    \leq  - D_{k_*} \mathfrak{m}_{k_*+\overline{\gamma}} +  E_{k_*} \left( \mathfrak{m}_{1+\bar{\bar{\gamma}}} \, \mathfrak{m}_{k_*-1}  +  \mathfrak{m}_{k_*-1+\bar{\bar{\gamma}}} \, \mathfrak{m}_{1}   \right),
\end{equation*}
where $D_{k_*}$ and $E_{k_*}$ are positive constants
\begin{equation*}
\begin{split}
D_{k_*}&= \min_{1\leq i, j \leq I} \left( \left\| b_{ij} \right\|_{L^1(\mathrm{d}\sigma)}  - \boldsymbol{\mathcal{C}}^{ij}_{\frac{k_*}{2}} \right)  \frac{c_{lb}}{\max_{1\leq i \leq I} m_i}  , \\ 
E_{k_*}&=  \max_{1\leq i, j \leq I} \left( \left( \frac{\sum_{i=1}^I m_i}{\sqrt{m_i m_j}} \right)^{\gamma_{ij}} \boldsymbol{\mathcal{C}}^{ij}_{\frac{k_*}{2}} \right)    \sum_{\ell=1}^{\ell_{k_*}} \left( \begin{matrix}
	k_* \\ \ell 
	\end{matrix} \right).
\end{split}
\end{equation*}
In particular, $D_{k_*}$ is positive since, by assumption,  $k_*\ge \overline{k}$ defined in \eqref{kstar} large enough ensuring \eqref{povzner constant prop} for the constant $\boldsymbol{\mathcal{C}}^{ij}_{\frac{k_*}{2}}$ from Povzner lemma \eqref{Povzner estimate gain}.

Arriving in moment notation, we can use monotonicity of moments \eqref{monotonicity of norm},  together with an estimate on $\mathfrak{m}_2$ from characterization of set $\Omega$, to get the following estimate
\begin{equation*}
\sum_{i=1}^I \int_{ \mathbb{R}^3} \left[ \mathbb{Q}(\mathbb{F}) \right]_i \left\langle v \right\rangle_i^{k_*} \mathrm{d} v  \leq  - D_{k_*} \, \mathfrak{m}_{k_*+\overline{\gamma}} + 2 E_{k_*}   C_{2} \, \mathfrak{m}_{k_*}.
\end{equation*}
It remains to use a control from below  derived in \eqref{jensen} for the highest order moment $\mathfrak{m}_{k_*+\overline{\gamma}}$, taking $k=k_*$, $\lambda=\overline{\gamma}$ and $C_{\mathfrak{m}_{0}}=C_0$ there,
$$
\mathfrak{m}_{k_*+\overline{\gamma}} \geq  \left(I C_0 \right)^{-\frac{\overline{\gamma}}{k_*}}  \mathfrak{m}_{k_*}^{1+\frac{\overline{\gamma}}{k_*}},
$$
which yields final estimate \eqref{Q za tangent cond}.
\end{proof}
We turn to the proof of Existence and Uniqueness Theorem \ref{theorem existence uniqueness}. Our proof follows the one given in \cite{GambaAlonso18} for the single Boltzmann equation. In particular, our aim is to apply Theorem \ref{Theorem general} from a general ODE theory in Banach spaces. In order to do so, we first show that the collision operator  is a mapping $\mathbb{Q}: \Omega \rightarrow L_2^1$. Indeed, take any $\mathbb{F} \in \Omega$. Then,
	\begin{equation}\label{pomocna 32}
	\left\| \mathbb{Q}(\mathbb{F}) \right\|_{L_2^1} = \sum_{i=1}^I \int_{ \mathbb{R}^3} \left| \left[\mathbb{Q}(\mathbb{F}) \right]_{i}(v) \right| \left\langle v \right\rangle_i^2 \mathrm{d}v  \\ \leq \sum_{i=1}^I  \sum_{j=1}^I \int_{ \mathbb{R}^3} \left| Q_{ij}(f_i,f_j)(v) \right| \left\langle v \right\rangle_i^2 \mathrm{d}v.
	\end{equation}
	The absolute value  $\left| Q_{ij}(f_i,f_j)(v) \right|$ is written with the help of sign function and shorter notation
	$$
\left| Q_{ij}(f_i,f_j)(v) \right| = Q_{ij}(f_i,f_j)(v)  \, s_{ij}(v), \quad s_{ij}(v):=	\text{sign}\left( Q_{ij}(f_i,f_j)(v)  \right).
	$$
Then $s_{ij}(v) \left\langle v \right\rangle_i^2$ in \eqref{pomocna 32} are viewed as test functions, so the weak form \eqref{weak form any test} implies  	
\begin{multline*}
	\left\| \mathbb{Q}(\mathbb{F}) \right\|_{L_2^1} \leq \frac{1}{2} \sum_{i=1}^I  \sum_{j=1}^I \iiint_{\mathbb{R}^3 \times \mathbb{R}^3 \times S^2} f_i(v) f_j(v_*) \, \mathcal{B}_{ij}(v,v_*,\sigma)  \\ \times  \left( s_{ij}(v') \left\langle v' \right\rangle_i^{2} +  s_{ji}(v'_*) \left\langle v'_* \right\rangle_j^{2}  - s_{ij}(v) \left\langle v \right\rangle_i^{ 2} - s_{ji}(v_*) \left\langle v_* \right\rangle_j^{2} \right)  \mathrm{d}\sigma \mathrm{d}v_* \mathrm{d} v.
\end{multline*}
Since the sign function is upper bounded  by 1, we obtain
\begin{multline*}
\left\| \mathbb{Q}(\mathbb{F}) \right\|_{L_2^1} \leq \frac{1}{2} \sum_{i=1}^I  \sum_{j=1}^I \iiint_{\mathbb{R}^3 \times \mathbb{R}^3 \times S^2} f_i(v) f_j(v_*) \, \mathcal{B}_{ij}(v,v_*,\sigma)  \\ \times  \left(  \left\langle v' \right\rangle_i^{2} + \left\langle v'_* \right\rangle_j^{2}  + \left\langle v \right\rangle_i^{2} + \left\langle v_* \right\rangle_j^{2} \right)  \mathrm{d}\sigma \mathrm{d}v_* \mathrm{d} v.
\end{multline*}
Using conservation of energy \eqref{CL micro energy}, together with the form of cross section \eqref{cross section}, implies
\begin{multline*}
\left\| \mathbb{Q}(\mathbb{F}) \right\|_{L_2^1} \leq   \sum_{i=1}^I  \sum_{j=1}^I \left\| b_{ij} \right\|_{L^1(\mathrm{d}\sigma)} \iint_{\mathbb{R}^3 \times \mathbb{R}^3 } f_i(v) f_j(v_*) \, \left|v-v_*\right|^{\gamma_{ij}}  \\ \times  \left(   \left\langle v \right\rangle_i^{ 2} + \left\langle v_* \right\rangle_j^{ 2} \right)   \mathrm{d}v_* \mathrm{d} v.
\end{multline*}
Finally, using upper bound \eqref{estimate on u^g product}, we obtain the estimate in terms of norms,
\begin{multline*}
\left\| \mathbb{Q}(\mathbb{F}) \right\|_{L_2^1} \leq \max_{1\leq i, j \leq I}\left( \left\| b_{ij} \right\|_{L^1(\mathrm{d}\sigma)}  \left( \frac{\sum_{i=1}^I m_i}{\sqrt{m_i m_j}} \right)^{\gamma_{ij}}  \right)  \\ \times \sum_{i=1}^I  \sum_{j=1}^I \iint_{\mathbb{R}^3 \times \mathbb{R}^3 } f_i(v) f_j(v_*) \, \left\langle v \right\rangle_i^{\bar{\bar{\gamma}}} \left\langle v_* \right\rangle_j^{\bar{\bar{\gamma}}}   \left(   \left\langle v \right\rangle_i^{2} + \left\langle v_* \right\rangle_j^{2} \right)   \mathrm{d}v_* \mathrm{d} v\\
= 2   \max_{1\leq i, j \leq I}\left( \left\| b_{ij} \right\|_{L^1(\mathrm{d}\sigma)}  \left( \frac{\sum_{i=1}^I m_i}{\sqrt{m_i m_j}} \right)^{\gamma_{ij}}  \right) \left(  \left\| \mathbb{F} \right\|_{L_{2+\bar{\bar{\gamma}}}^1}  \left\| \mathbb{F} \right\|_{L_{\bar{\bar{\gamma}}}^1}  \right).
\end{multline*}
Since $\mathbb{F} \in \Omega$ the right hand side is bounded, and therefore $\mathbb{Q}(\mathbb{F}) \in L_2^1$. \\

The next task is  to show that the mapping $\mathbb{F} \mapsto \mathbb{Q}(\mathbb{F})$, when restricted to $\Omega$ satisfies  (i) H\"{o}lder continuity, (ii) sub-tangent and (iii) one-sided Lipschitz conditions. Indeed, the  proof is divided into proofs of these three properties.

Assume that $\mathbb{F},\mathbb{G} \in \Omega$ and cross section $\mathcal{B}_{ij}$ is given in \eqref{cross section}. Then, the following three properties hold
\begin{itemize}
	\item[(i)] H\"{o}lder continuity condition:
	\begin{equation}\label{holder estimate}
	\left\| \mathbb{Q}(\mathbb{F}) - \mathbb{Q}(\mathbb{G}) \right\|_{L_2^1} \leq C_H \left\| \mathbb{F}-\mathbb{G} \right\|_{L_2^1}^{\frac{1}{2}},
	\end{equation}
		\item[(ii)] Sub-tangent condition: 
	\begin{equation*}
	\lim\limits_{h\rightarrow 0+} \frac{\text{dist}\left(\mathbb{F} + h \mathbb{Q}(\mathbb{F}), \Omega \right)}{h} =0,
	\end{equation*}
	where
	\begin{equation*}
	\text{dist}\left(\mathbb{H},\Omega\right)=\inf_{\omega \in \Omega} \left\| \mathbb{H}-\omega \right\|_{L_2^1}.
	\end{equation*}
	\item[(iii)] One-sided Lipschitz condition: 
	\begin{equation*}
 \left[\mathbb{Q}(\mathbb{F})-\mathbb{Q}(\mathbb{G}), \mathbb{F}-\mathbb{G}\right] \leq C_L \left\| \mathbb{F}-\mathbb{G} \right\|_{L_2^1},
	\end{equation*}
	where, by Remark \ref{Lip remark},
	\begin{multline*}
	 \left[\mathbb{Q}(\mathbb{F})-\mathbb{Q}(\mathbb{G}), \mathbb{F}-\mathbb{G}\right] =\lim_{h\rightarrow 0^-}\frac{\left(\left\| \left(\mathbb{F}-\mathbb{G}\right) + h \left(\mathbb{Q}(\mathbb{F})-\mathbb{Q}(\mathbb{G})\right) \right\|_{L_2^1} - \left\| \left(\mathbb{F}-\mathbb{G}\right) \right\|_{L_2^1} \right)}{h}  \\\leq \sum_{i=1}^I \int_{ \mathbb{R}^3} \left(\left[\mathbb{Q}(\mathbb{F})\right]_i(v) - \left[\mathbb{Q}(\mathbb{G})\right]_i(v) \right) \, \text{sign}\left(f_i(v)-g_i(v)\right) \left\langle v \right\rangle_i^2 \mathrm{d}v.
	\end{multline*}
\end{itemize}
Constants $C_H$ and $C_L$ depend on  $\left\| b_{ij} \right\|_{L^1(\mathrm{d}\sigma)}$, number of species $I$ and their masses $m_{i}$, $i=1,\dots,I$, and constants from characterization of the set $\Omega$.
\begin{proof}[Proof of (i) H\"{o}lder continuity condition]
	
	Let $\mathbb{F}=\left[f_i\right]_{1\leq i \leq I}$ and  $\mathbb{G}=\left[g_i\right]_{1\leq i \leq I}$ belong to $\Omega$. We need to estimate the following expression
	\begin{equation}\label{pomocna 4}
	I_{H}:=\left\| \mathbb{Q}(\mathbb{F}) - \mathbb{Q}(\mathbb{G}) \right\|_{L_2^1} 
	= \sum_{i=1}^I  \int_{\mathbb{R}^3}\left|  \sum_{j=1}^I \left( Q_{ij}(f_i,f_j) - Q_{ij}(g_i,g_j)  \right)  \right| \left\langle v \right\rangle_i^2 \mathrm{d} v.
	\end{equation}
	Using the binary structure  of collision operator \eqref{boltzmann i}, it follows
	\begin{equation}\label{collision operator diff and sum}
	Q_{ij}(f_i,f_j) - Q_{ij}(g_i,g_j)  = \frac{1}{2} \left(Q_{ij}(f_i-g_i, f_j + g_j) + Q_{ij}(f_i+g_i, f_j-g_j) \right).
	\end{equation}
	Therefore, using properties of absolute value, \eqref{pomocna 4} becomes
	\begin{equation}\label{pomocna 7}
	I_{H} \leq \frac{1}{2} \sum_{i=1}^I \sum_{j=1}^I \int_{\mathbb{R}^3} \left( \left|  Q_{ij}(f_i-g_i, f_j + g_j)   \right| +   \left|  Q_{ij}(f_i+g_i, f_j-g_j)  \right| \right) \left\langle v \right\rangle_i^2 \mathrm{d} v.
	\end{equation}
	The absolute value of collision operator will be written with the help of sign function, using  $\left|\cdot\right|= \cdot \,\text{sign}(\cdot)$. Since, at the end, all sign functions will be bounded by 1, we will not go deeply into details of its structure. So, let us for the moment denote 
	\begin{equation*}
	\text{sign}(Q_{ij}(f_i-g_i, f_j + g_j)) = s_{ij}^{-+}, \qquad \text{sign}(Q_{ij}(f_i+g_i, f_j - g_j)) = s_{ij}^{+-}.
	\end{equation*}
	Then, \eqref{pomocna 7} becomes
	\begin{multline}\label{pomocna 8}
	I_{H} \leq \frac{1}{2} \sum_{i=1}^I \sum_{j=1}^I \int_{\mathbb{R}^3} \left( Q_{ij}(f_i-g_i, f_j + g_j)  s_{ij}^{-+}\left\langle v \right\rangle_i^2 \right.  \\ \left. +    Q_{ij}(f_i+g_i, f_j-g_j) s_{ij}^{+-}\left\langle v \right\rangle_i^2 \right)  \mathrm{d} v.
	\end{multline}
	Now we  use the weak form \eqref{weak form ij+ji}, and  in order to do so, we have to match pairs. Indeed, we notice that the pair for $ij$-th element of the first sum is the $ji$-th element of the second sum. That is,  \eqref{weak form ij+ji} implies, after dropping the sign function,
	\begin{multline*}
	\int_{ v \in \mathbb{R}^3} \left( Q_{ij}(f_i-g_i, f_j + g_j)  s_{ij}^{-+}\left\langle v \right\rangle_i^2  +    Q_{ji}(f_j+g_j, f_i-g_i) s_{ji}^{+-}\left\langle v \right\rangle_j^2 \right) \mathrm{d}v
	\\ 
	\leq   \iiint_{\mathbb{R}^3 \times \mathbb{R}^3 \times S^2} \left|f_i(v)-g_i(v)\right| (f_j(v_*) + g_j(v_*)) \\ \times \left( \left\langle v' \right\rangle_i^2 +  \left\langle v'_* \right\rangle_j^2  + \left\langle v \right\rangle_i^2   + \left\langle v_* \right\rangle_j^2\right) \mathcal{B}_{ij}(v,v_*,\sigma) \mathrm{d}\sigma \mathrm{d}v_* \mathrm{d} v 
	\\
	= 2 \iiint_{\mathbb{R}^3 \times \mathbb{R}^3 \times S^2} \left|f_i(v)-g_i(v)\right| (f_j(v_*) + g_j(v_*)) \\ \times  \left(\left\langle v \right\rangle_i^2   + \left\langle v_* \right\rangle_j^2\right) \mathcal{B}_{ij}(v,v_*,\sigma) \mathrm{d}\sigma \mathrm{d}v_* \mathrm{d} v,
	\end{multline*}
	the last equality is due to the conservation law at the microscopic level \eqref{CL micro energy}. Therefore, \eqref{pomocna 8} becomes
	\begin{multline*}
	I_{H} \leq  \sum_{i=1}^I \sum_{j=1}^I \iiint_{\mathbb{R}^3 \times \mathbb{R}^3 \times S^2} \left|f_i(v)-g_i(v)\right| (f_j(v_*) + g_j(v_*)) \\ \times  \left(\left\langle v \right\rangle_i^2   + \left\langle v_* \right\rangle_j^2\right) \mathcal{B}_{ij}(v,v_*,\sigma) \mathrm{d}\sigma \mathrm{d}v_* \mathrm{d} v.
	\end{multline*}

	Now we use the form of cross section \eqref{cross section}.	Inequality \eqref{estimate on u^g}  yields the following upper bound of the previous expression 
	\begin{multline*}
	I_{H} \leq  \max_{1\leq i, j \leq I}\left( \left\| b_{ij} \right\|_{L^1(\mathrm{d}\sigma)}  \left( \frac{\sum_{i=1}^I m_i}{\sqrt{m_i m_j}} \right)^{\gamma_{ij}}  \right) \\ \sum_{i=1}^I \sum_{j=1}^I \iint_{\mathbb{R}^3 \times \mathbb{R}^3 } \left|f_i(v)-g_i(v)\right| (f_j(v_*) + g_j(v_*)) \\ \times  \left(\left\langle v \right\rangle_i^{2+\bar{\bar{\gamma}}} + \left\langle v \right\rangle_i^{2}  \left\langle v_* \right\rangle_j^{\bar{\bar{\gamma}}}  + \left\langle v_* \right\rangle_j^2 \left\langle v \right\rangle_i^{\bar{\bar{\gamma}}} + \left\langle v_* \right\rangle_j^{2+\bar{\bar{\gamma}}}\right) \mathrm{d}v_* \mathrm{d} v
	\\ \leq  \max_{1\leq i, j \leq I}\left( \left\| b_{ij} \right\|_{L^1(\mathrm{d}\sigma)}  \left( \frac{\sum_{i=1}^I m_i}{\sqrt{m_i m_j}} \right)^{\gamma_{ij}}  \right) \left(  \left\| \mathbb{F} - \mathbb{G} \right\|_{L_{2+\bar{\bar{\gamma}}}^1} \left\| \mathbb{F}+\mathbb{G} \right\|_{L_0^1} \right. \\ \left.  + \left\| \mathbb{F} - \mathbb{G}  \right\|_{L_2^1}  \left\| \mathbb{F}+\mathbb{G} \right\|_{L_{\bar{\bar{\gamma}}}^1} + \left\| \mathbb{F}- \mathbb{G} \right\|_{L_{\bar{\bar{\gamma}}}^1}\left\| \mathbb{F}+\mathbb{G} \right\|_{L_2^1}+ \left\|\mathbb{F}-\mathbb{G} \right\|_{L_0^1}\left\|\mathbb{F}+\mathbb{G} \right\|_{L_{2+\bar{\bar{\gamma}}}^1}  \right).
	\end{multline*}
	Monotonicity of the norm \eqref{monotonicity of norm} yields
	\begin{multline*}
	I_{H} \leq  2   \max_{1\leq i, j \leq I}\left( \left\| b_{ij} \right\|_{L^1(\mathrm{d}\sigma)}  \left( \frac{\sum_{i=1}^I m_i}{\sqrt{m_i m_j}} \right)^{\gamma_{ij}}  \right) \\ \times \left\| \mathbb{F} - \mathbb{G} \right\|_{L_{2+\bar{\bar{\gamma}}}^1} \left( \left\| \mathbb{F}+\mathbb{G} \right\|_{L_2^1}+\left\|\mathbb{F}+\mathbb{G} \right\|_{L_{2+\bar{\bar{\gamma}}}^1}  \right).
	\end{multline*}
	By the interpolation inequality \eqref{interpolation inequality vector}, it follows
	\begin{multline}\label{pomocna 15}
	I_{H} \leq  2 I   \max_{1\leq i, j \leq I}\left( \left\| b_{ij} \right\|_{L^1(\mathrm{d}\sigma)}  \left( \frac{\sum_{i=1}^I m_i}{\sqrt{m_i m_j}} \right)^{\gamma_{ij}}  \right) \\  \times  \left\| \mathbb{F} - \mathbb{G} \right\|_{L_2^1}^{1/2} \left\| \mathbb{F} - \mathbb{G} \right\|_{L_{2+2\bar{\bar{\gamma}}}^1}^{1/2} \left( \left\| \mathbb{F}+\mathbb{G} \right\|_{L_2^1}+\left\|\mathbb{F}+\mathbb{G} \right\|_{L_{2+\bar{\bar{\gamma}}}^1}  \right).
	\end{multline}
	Then we can bound term by term:
	\begin{equation*}
	\left\| \mathbb{F} - \mathbb{G} \right\|_{L_{2+2\bar{\bar{\gamma}}}^1}^{1/2} \leq \left\| \mathbb{F}   \right\|_{L_{2+2\bar{\bar{\gamma}}}^1}^{1/2} +\left\| \mathbb{G} \right\|_{L_{2+2\bar{\bar{\gamma}}}^1}^{1/2} \leq 2 \, C_{2+2\bar{\bar{\gamma}}}^{1/2},
	\end{equation*}
	and in the same fashion
	\begin{equation*}
	\left\| \mathbb{F}+\mathbb{G} \right\|_{L_2^1}  \leq 2 C_{2}, \qquad \left\| \mathbb{F}+\mathbb{G} \right\|_{L_{2+\bar{\bar{\gamma}}}^1} \leq 2 C_{2+\bar{\bar{\gamma}}},
	\end{equation*}
	since both $\mathbb{F}$ and $\mathbb{G}$ belong to $\Omega$. Therefore, \eqref{pomocna 15} becomes
	\begin{equation*}
	I_{H} \leq  8  \max_{1\leq i, j \leq I}\left( \left\| b_{ij} \right\|_{L^1(\mathrm{d}\sigma)}  \left( \frac{\sum_{i=1}^I m_i}{\sqrt{m_i m_j}} \right)^{\gamma_{ij}}  \right)   C_{2+2\bar{\bar{\gamma}}}^{1/2} \left( C_{2} + C_{2+\bar{\bar{\gamma}}} \right) \left\| \mathbb{F} - \mathbb{G} \right\|_{L_2^1}^{1/2},
	\end{equation*}
	which concludes the proof of H\"{o}lder continuity.
\end{proof}

\begin{proof}[Proof of (ii) sub-tangent condition]
	In order to prove sub-tangent condition, we first observe that, since we are in cut-off case, it is possible to split collision operator $\mathbb{Q}(\mathbb{F})$ into gain and loss term.  Namely,
	\begin{equation*}
	\left[ \mathbb{Q}(\mathbb{F}) \right]_i = \left[ \mathbb{Q}^+(\mathbb{F}) \right]_i - f_i(v) \left[ \nu(\mathbb{F}) \right]_i, 
	\end{equation*}
	where $\mathbb{Q}^+$ is a positive operator, and collision frequency $\nu(\mathbb{F})$, for any component $1\leq i\leq I$ reads 
	\begin{equation*} 
	\left[ \nu(\mathbb{F}) \right]_i=\sum_{j=1}^{I} \iint_{\mathbb{R}^3 \times S^2 } f_j(v_*) \mathcal{B}_{ij}(v,v_*,\sigma) \mathrm{d}\sigma \mathrm{d}v_* \geq 0.
	\end{equation*}
	In our case, $\nu(\mathbb{F})$ is finite whenever $\mathbb{F} \in \Omega$. Indeed, for the cross section \eqref{cross section}-\eqref{gamma max}, and since  $\left|v-v_*\right|^{\gamma_{ij}} \leq \left|v-v_*\right|^{\bar{\bar{\gamma}}}$, for $\left|v-v_*\right| \geq 1$ and $\left|v-v_*\right|^{\bar{\bar{\gamma}}} \leq \left|v\right|^{\bar{\bar{\gamma}}} + \left|v_*\right|^{\bar{\bar{\gamma}}}$,
	\begin{multline*}
	0 \leq \left[ \nu(\mathbb{F}) \right]_i(v) \leq \left( \max_{1\leq i, j \leq I} \left\| b_{ij} \right\|_{L^1(\mathrm{d}\sigma)} \right) \sum_{j=1}^{I} \int_{\mathbb{R}^3 } f_j(v_*)  \left|v-v_*\right|^{\gamma_{ij}}  \mathrm{d}v_*\\ 
	\leq    \left( \max_{1\leq i, j \leq I} \left\| b_{ij} \right\|_{L^1(\mathrm{d}\sigma)} \right) \left(  \sum_{j=1}^{I} \int_{\left|v-v_*\right|<1} f_j(v_*)  \, \mathrm{d}v_* \right. 
\\  \left.
	+ \sum_{j=1}^{I} \int_{\left|v-v_*\right|\geq 1} f_j(v_*)  \left|v-v_*\right|^{\bar{\bar{\gamma}}}  \mathrm{d}v_* \right) \\ 
	\leq   \left( \max_{1\leq i, j \leq I} \left\| b_{ij} \right\|_{L^1(\mathrm{d}\sigma)} \right)   \left( C_0 +  \left|v\right|^{\bar{\bar{\gamma}}} C_0 + \left( \frac{\sum_{i=1}^I m_i}{\min_{1\leq j \leq I} m_j} \right)^{{\bar{\bar{\gamma}}}/2}\left\| \mathbb{F} \right\|_{L_{\bar{\bar{\gamma}}}^1} \right) \\
	\leq K  \left( 1+\left|v\right|^{\bar{\bar{\gamma}}}\right),
	\end{multline*}
	where
	\begin{equation}\label{constant c tangency}
	K=  \left( \max_{1\leq i, j \leq I} \left\| b_{ij} \right\|_{L^1(\mathrm{d}\sigma)} \right) \left( 2 C_0 + \left( \frac{\sum_{i=1}^I m_i}{\min_{1\leq j \leq I} m_j} \right) C_{2} \right).
	\end{equation}
	\begin{proposition}\label{prop tangency}
		Fix $\mathbb{F} \in \Omega$. Then, for any $\varepsilon>0$ there exists $h_1>0$, such that $B(\mathbb{F}+h\mathbb{Q}(\mathbb{F}), h \varepsilon) \cap \Omega \neq \emptyset$, for any $0<h<h_1$.
	\end{proposition}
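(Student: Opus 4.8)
The plan is to approximate $\F$ by a sequence of truncated data that enjoys better integrability, solve the associated bounded problem, and pass to the limit, thereby witnessing that $\F+h\Q$ gets within $h\varepsilon$ of $\Omega$ for small $h$. First I would introduce, for $R>0$, the velocity truncation $f_i^R := f_i \mathbbm{1}_{|v|\le R}$ and set $\F^R := [f_i^R]_{1\le i\le I}$. Because the collision operator preserves compact support of the loss term and the gain term is controlled by the Povzner estimate \eqref{Povzner estimate gain}, the map $\QQ$ restricted to the ball of $L^1$-functions supported in $\{|v|\le R\}$ is genuinely Lipschitz (all moments being comparable on compact supports), so by the classical Picard iteration there is a local-in-time $C^1$ solution $\F^R(t)$ with $\F^R(0)=\F^R$, staying nonnegative. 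The key computation is then Taylor expansion of this solution: $\F^R(h) = \F^R + h\,\QQ(\F^R) + o(h)$ in $\ldva$ as $h\to 0^+$, with the $o(h)$ uniform once $R$ is fixed.

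The second step is to control the error between $\F^R(h)$ and the target $\F + h\QQ(\F)$, and to show it can be made smaller than $h\varepsilon/2$ by first choosing $R$ large, then $h$ small. We have
\begin{equation*}
\left\| \left(\F + h\QQ(\F)\right) - \F^R(h) \right\|_{\ldva} \le \left\| \F - \F^R \right\|_{\ldva} + h \left\| \QQ(\F) - \QQ(\F^R) \right\|_{\ldva} + \left\| \F^R(h) - \F^R - h\QQ(\F^R) \right\|_{\ldva}.
\end{equation*}
The first term tends to $0$ as $R\to\infty$ by dominated convergence, using $\pmdvapF<\infty$ from the definition of $\Omega$. The second term is handled by the same weak-form estimate that showed $\QQ\colon\Omega\to\ldva$ is well defined, namely the bound by $2\,\cbres\,(\|\cdot\|_{\ldvagammamax}\|\cdot\|_{\lgammamax})$ applied to $\F-\F^R$ — again $\to 0$ with $R$, and multiplied by $h$. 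The third term is the Taylor remainder, $o(h)$ for fixed $R$. So: pick $R$ so the first two (the second after dividing by $h$, which is harmless since we only need it $\le h\varepsilon/4$) are under $h\varepsilon/4$; then pick $h_1$ so the remainder is under $h\varepsilon/2$ for $h<h_1$.

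The third and most delicate step is to verify that the approximating solution, or rather a slight modification of it, actually lands in $\Omega$ — this is where $B(\F+h\Q,h\varepsilon)\cap\Omega\ne\emptyset$ gets its nonemptiness. One cannot expect $\F^R(h)\in\Omega$ on the nose: truncation destroys the momentum-zero normalization and perturbs the mass and energy bounds. The fix is to correct $\F^R(h)$ by a controlled rescaling and Galilean shift — replace each component by $c_i f_i^R((v-w)/\lambda_i)$-type adjustments, or more simply add back a small smooth compactly supported nonnegative bump — chosen so that (a) the total momentum $\sum_i \int m_i v\, f_i\,\mathrm dv$ is restored to $0$, (b) $\pmnulaF$ and $\pmdvaF$ sit strictly inside $[\cnulaL,\cnulaU]$ and $[\cdvaL,\cdva]$, which is possible because $\F\in\Omega$ puts them in the \emph{open} interior for $R$ large, and (c) the resulting $\pmdvapF$ and $\pmksF$ respect their bounds — here one uses that the differential inequality of Lemma~\ref{poly moments}/Lemma~\ref{lemma Q za tangent cond} makes $\pmksF$ nonincreasing near the boundary value $\cks$ (the coercive constant $A_{k_*}>0$ from Remark~\ref{coercive_constant}), and likewise $\pmdvapF$ cannot cross $\cdvaplus$ for small $h$. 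The corrections are $O(\|\F-\F^R\|)$ in size, hence absorbable into the $h\varepsilon$ budget. I expect this last bookkeeping — showing the corrected approximant is genuinely in the closed convex set $\Omega$ while the correction stays within tolerance — to be the main obstacle, since it requires simultaneously juggling the two-sided bounds on $\pmnulaF,\pmdvaF$, the one-sided bounds on $\pmdvapF,\pmksF$, and the momentum constraint, and it is precisely the point where the invariance properties encoded in $\Omega$ (especially the strict inequality $\cnulaU<\cdvaL$ and the coercivity giving control of the top moment) must be invoked.
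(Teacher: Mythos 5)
Your strategy (truncate, approximate, close the $h\varepsilon$ budget) is in the right spirit, and your second step is essentially the H\"older-continuity estimate the paper actually uses, but the argument as written has a genuine gap in the third step and is also harder than it needs to be. By solving a truncated Cauchy problem with data $\F^R = \F\,\mathbbm{1}_{|v|\le R}$ and evaluating at time $h$, you force yourself to repair the zero-momentum constraint and the two-sided bounds on $\pmnulaF$ and $\pmdvaF$ afterwards, because truncation strictly decreases mass and energy (and perturbs momentum) and the truncated flow only conserves them from those already-wrong values. The repair step appeals to the claim that $\F\in\Omega$ puts $\pmnulaF,\pmdvaF$ ``in the open interior for $R$ large,'' but $\Omega$ is closed and $\F$ may sit exactly on its boundary — say $\pmnulaF = \cnulaL$ — so there is no slack to absorb the truncation loss; the multi-constraint correction (simultaneously fixing momentum, the two-sided bounds, and the one-sided bounds on $\pmdvapF$ and $\pmksF$) is also never actually produced. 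There are secondary problems as well: Picard iteration for the truncated problem is itself a local existence statement not yet available, and the collision gain term does not preserve compact support, so the ``genuinely Lipschitz on compact supports'' claim needs more care.

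The paper sidesteps all of this with a cleaner approximant. Instead of $\F^R(h)$, it takes the purely algebraic object $\W_R = \F + h\,\QQ(\F_R)$, keeping the \emph{untruncated} $\F$ as base and truncating only inside the collision operator, with no time evolution at all. This makes the conservation laws work for you rather than against you: by \eqref{cons operator mass} and \eqref{cons operator energy}, $\QQ(\F_R)$ has vanishing mass, momentum and energy integrals, so $\pmnula[\W_R] = \pmnula[\F]$, $\pmdva[\W_R] = \pmdva[\F]$ and the zero-momentum constraint hold \emph{exactly} for every $R$ — no correction, no slack needed. Nonnegativity of $\W_R$ is then the one place the truncation matters: the collision frequency $\nu(\F_R)$ is bounded by $K(1+R^{\gmax})$, so $\left[\W_R\right]_i \geq f_i\bigl(1 - hK(1+R^{\gmax})\bigr) \geq 0$ for $h$ small. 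The top moment $\pmks[\W_R] \leq \cks$ follows from the coercive differential inequality of Lemma~\ref{lemma Q za tangent cond}. Finally, $\|\F + h\QQ(\F) - \W_R\|_{\ldva} = h\,\|\QQ(\F)-\QQ(\F_R)\|_{\ldva} \leq h\, C_H\|\F - \F_R\|_{\ldva}^{1/2}$ by the already-proven H\"older condition, which is made $\leq h\varepsilon$ by choosing $R$ large. Your proof would need its third step rebuilt from scratch; the paper's choice of $\W_R$ is the idea that makes that rebuilding unnecessary.
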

	\begin{proof}
		Set $\chi_R(v)$ the characteristic function of the ball of radius $R>0$ and introduce the truncated function $\mathbb{F}_R(t,v)=\chi_R(v)\mathbb{F}(t,v)$. Let
		\begin{equation}\label{wR}
		\mathbb{W}_R = \mathbb{F} + h \mathbb{Q}(\mathbb{F}_R).
		\end{equation}
		The idea of the proof is to find  $R$ such that from on one hand  $\mathbb{W}_R \in \Omega$, and on the another hand $\mathbb{W}_R \in B(\mathbb{F}+h\mathbb{Q}(\mathbb{F}), h \varepsilon)$, with $h$ explicitly calculated.
		\subsubsection*{Step 1.} We first show that it is possible to find an $h_1$ such that $\mathbb{W}_R$ remains non-negative for  as long  $0<h<h_1$. Indeed, for any $\mathbb{F} \in \Omega$ its truncation $\mathbb{F}_R \in \Omega$ as well. Since  $\mathbb{Q}^+$ is a positive operator, we have
		\begin{equation*}
		\left[\mathbb{W}_R\right]_i = f_i + h \left[\mathbb{Q}^+(\mathbb{F}_R)\right]_i - h \left[\mathbb{F}_R\right]_i \left[\nu(\mathbb{F}_R)\right]_i \geq f_i\left( 1 - h \, K\, \left( 1 + R^{\bar{\bar{\gamma}}}\right) \right) \geq 0,
		\end{equation*}
		for any $0<h<\frac{1}{K(1+R^{\bar{\bar{\gamma}}})}$, and $1\leq i \leq I$,
		with $K $ from \eqref{constant c tangency}.
		\subsubsection*{Step 2.} Since $\mathbb{F}_R \in \Omega$, we use   conservative properties of the collision operator detailed in \eqref{cons operator mass} and \eqref{cons operator energy}, to obtain
		\begin{equation*}
		\sum_{i=1}^I	\int_{\mathbb{R}^3} \left[ \mathbb{Q}(\mathbb{F}_R) \right]_i\mathrm{d}v = 0, 	\qquad \sum_{i=1}^I \left[ \mathbb{Q}(\mathbb{F}_R) \right]_i \left\langle v \right\rangle_i^2 \mathrm{d} v=0.
		\end{equation*}
		From \eqref{wR}, we get
		\begin{equation*}
		\mathfrak{m}_{0}[\mathbb{W}_R] = \mathfrak{m}_{0}[\mathbb{F}],  \qquad \mathfrak{m}_2[\mathbb{W}_R] = \mathfrak{m}_2[\mathbb{F}],
		\end{equation*}
		independently of $R$, which yields all needed lower and upper bounds on this quantities.
		\subsubsection*{Step 3} Finally, we need to show that $L^1_{k_*}$ norm of  $\mathbb{W}_R$ is bounded.%, which will imply, by the monotonicity of norms, that moments of order $2^+$ and $2+\gamma$ are also bounded. 
		
		Let the map $\mathcal{L}_{\overline{\gamma}, k_*}:[0,\infty) \rightarrow \mathbb{R}$, be defined with $\mathcal{L}_{\overline{\gamma}, k_*}(x)=- A_{k_*}x^{1+ \frac{\overline{\gamma}}{k_*}}+  B_{k_*}   x$, where $\overline{\gamma}\in (0,1]$ and  $k_*$ as defined in \eqref{kstar} 
		%is chosen large enough to ensure that \eqref{povzner constant prop} holds, 
		that yields  positivity of constants $A_{k_*}$ and $B_{k_*}$. It has only one root, denoted with $x^*_{\overline{\gamma}, k_*}$, at which $\mathcal{L}_{\overline{\gamma}, k_*}$ changes from positive to negative. Thus, for any $x\geq0$, we may write
		\begin{equation*}
		\mathcal{L}_{\overline{\gamma}, k_*}(x) \leq \max_{0\leq x \leq x^*_{\overline{\gamma}, k_*}} \mathcal{L}_{\overline{\gamma}, k_*}(x) =: \mathcal{L}^*_{\overline{\gamma}, k_*}.
		\end{equation*}
		Now, Lemma \ref{lemma Q za tangent cond} implies
		\begin{equation*}
		\sum_{i=1}^I \int_{ \mathbb{R}^3}\left[ \mathbb{Q}(\mathbb{F}) \right]_i \left\langle v \right\rangle_i^{k_*} \mathrm{d} v \leq  \mathcal{L}_{{\overline{\gamma}, k_*}}\left( \mathfrak{m}_{k_*}[\mathbb{F}] \right) \leq  \mathcal{L}^*_{\overline{\gamma}, k_*}.
		\end{equation*}
	Define
		\begin{equation*}
		\xi_{{\overline{\gamma}, k_*}} := x^*_{\overline{\gamma}, k_*} +  \mathcal{L}^*_{\overline{\gamma}, k_*}.
		\end{equation*}
		For any $\mathbb{F} \in \Omega$ we have two possibilities: either $\mathfrak{m}_{k_*}[\mathbb{F}] \leq x^*_{\overline{\gamma}, k_*}$ or   $\mathfrak{m}_{k_*}[\mathbb{F}]> x^*_{\overline{\gamma}, k_*}$.
		For the former, it follows that
		\begin{equation*}
		\mathfrak{m}_{k_*}[ \mathbb{W}_R]  \leq x^*_{\overline{\gamma}, k_*} + h \left( \sum_{i=1}^I \int_{ \mathbb{R}^3} \left[ \mathbb{Q}(\mathbb{F}_R) \right]_i \left\langle v \right\rangle_i^{k_*} \mathrm{d} v  \right) \leq x^*_{{\overline{\gamma}, k_*}} + h \, \mathcal{L}^*_{{\overline{\gamma}, k_*}} \leq \xi_{{\overline{\gamma}, k_*}},
		\end{equation*}
		where we have assumed, without loss of generality, that $h\leq 1$. For the latter, we choose $R=R(\mathbb{F})$ sufficiently large such that $\mathfrak{m}_{k_*}[\mathbb{F}_R]> x^*_{{\overline{\gamma}, k_*}}$, and therefore,
		\begin{equation*}
		\mathcal{L}_{{\overline{\gamma}, k_*}}\left(\mathfrak{m}_{k_*}[\mathbb{F}_R]\right) \leq 0.
		\end{equation*}
		As a consequence, 
		\begin{equation*}
		\mathfrak{m}_{k_*}[\mathbb{W}_R]  \leq  x^*_{{\overline{\gamma}, k_*}} \leq  \xi_{{\overline{\gamma}, k_*}}.
		\end{equation*}
		Therefore, we constructed a constant $C_{k_*}$ from characterization of the set $\Omega$, that is $ \xi_{{\overline{\gamma}, k_*}}$.\\

		The conclusion is that $\mathbb{W}_R \in \Omega$ for any $0<h<h_*$, where 
		\begin{equation*}
		h_*=\min\left\{1, \frac{1}{K \left(1 + R(\mathbb{F})^{\bar{\bar{\gamma}}}\right)}\right\},
		\end{equation*}
		and $K$ is from \eqref{constant c tangency}.\\
		
		Now, H\"{o}lder estimate \eqref{holder estimate} implies 
		\begin{equation*}
		h^{-1} \left\| \mathbb{F}+ h \mathbb{Q}(\mathbb{F}) - \mathbb{W}_R \right\|_{L_2^1} = \left\|  \mathbb{Q}(\mathbb{F}) - \mathbb{Q}(\mathbb{F}_R) \right\|_{L_2^1} \leq C_H \left\|  \mathbb{F} - \mathbb{F}_R \right\|_{L_2^1}^\frac{1}{2} \leq \varepsilon, 
		\end{equation*}
		for $R:=R(\varepsilon)$ sufficiently large. Then, for this choice of $R$, $\mathbb{W}_R \in B(\mathbb{F}+ h \mathbb{Q}(\mathbb{F}),h \epsilon)$.  \\
		
		Finally, choosing $R=\max\{R(\mathbb{F}),R(\epsilon)\}$ and $h_1$ as 
		\begin{equation}\label{h1}
		h_1=\min\left\{1, \frac{1}{K \left(1 + R^{\bar{\bar{\gamma}}}\right)}\right\},
		\end{equation}
		with $c$ given in \eqref{constant c tangency},
		one concludes that $\mathbb{W}_R \in B(\mathbb{F}+h \mathbb{Q}(\mathbb{F}), h \epsilon) \cap \Omega$. 
	\end{proof}
	Once the Proposition \ref{prop tangency} is proved, it immediately follows  
	\begin{equation*}
	h^{-1} \text{dist}\left(\mathbb{F}+ h \mathbb{Q}(\mathbb{F}), \Omega \right) \leq \varepsilon, \qquad \forall \, 0<h<h_1,
	\end{equation*}
	with $h_1$ from \eqref{h1},	which concludes the proof of tangency condition.
\end{proof}

\begin{proof}[Proof of (iii) one-sided Lipschitz condition]
	From definition and representation \eqref{collision operator diff and sum}, we have
	\begin{multline*}
	I_L:=\left[  \mathbb{Q}(\mathbb{F}) - \mathbb{Q}(\mathbb{G}), \mathbb{F}-\mathbb{G} \right]
	\\ \leq \sum_{i=1}^I  \sum_{j=1}^I  \int_{\mathbb{R}^3}  \left( Q_{ij}(f_i,f_j) -  Q_{ij}(g_i,g_j)  \right)  \text{sign}(f_i(v)-g_i(v)) \left\langle v \right\rangle_i^2 \mathrm{d}v 
	\\= \frac{1}{2}\sum_{i=1}^I  \sum_{j=1}^I  \int_{\mathbb{R}^3}  \left(Q_{ij}(f_i-g_i, f_j + g_j) + Q_{ij}(f_i+g_i, f_j-g_j) \right)  \text{sign}(f_i(v)-g_i(v))\left\langle v \right\rangle_i^2 \mathrm{d}v.
	\end{multline*}
	Changing $i\leftrightarrow j$ in the second integral, we precisely obtain binary structure of the weak form \eqref{weak form ij+ji} that yields
	\begin{multline*}
	I_L\leq \frac{1}{2}\sum_{i=1}^I  \sum_{j=1}^I  \int_{\mathbb{R}^3}  \left(Q_{ij}(f_i-g_i, f_j + g_j)\: \text{sign}(f_i(v)-g_i(v)) \:\left\langle v \right\rangle_i^2 \right. \\ \left.+ Q_{ji}(f_j+g_j, f_i-g_i) \: \text{sign}(f_j(v)-g_j(v) ) \: \left\langle v \right\rangle_j^2 \right)  \mathrm{d}v
	\\
	=\frac{1}{2}\sum_{i=1}^I  \sum_{j=1}^I  \iiint_{\mathbb{R}^3 \times \mathbb{R}^3 \times S^2} \mathcal{B}_{ij}(v,v_*,\sigma) \left( f_i(v) - g_i(v)\right) \left( f_j(v_*) + g_j(v_*)\right) \\ \times \left( \text{sign}(f_i(v')-g_i(v')) \:\left\langle v' \right\rangle_i^2 + \text{sign}(f_j(v'_*)-g_j(v'_*)) \:\left\langle v'_* \right\rangle_j^2 \right. \\ \left. - \text{sign}(f_i(v)-g_i(v)) \:\left\langle v \right\rangle_i^2  - \text{sign}(f_j(v_*)-g_j(v_*) ) \: \left\langle v_* \right\rangle_j^2 \right)  \mathrm{d}\sigma \mathrm{d}v_* \mathrm{d}v.
	\end{multline*}
	Using the upper bound of the sign function, one has
	\begin{multline*}
	I_L \leq \frac{1}{2}\sum_{i=1}^I  \sum_{j=1}^I  \iiint_{\mathbb{R}^3 \times \mathbb{R}^3 \times S^2} \mathcal{B}_{ij}(v,v_*,\sigma) \\ \times \left( \left| f_i(v) - g_i(v)\right| \left( f_j(v_*) + g_j(v_*)\right) \left( \left\langle v' \right\rangle_i^2 + \left\langle v'_* \right\rangle_j^2 \right) \right. \\ \left. - \left|f_i(v)-g_i(v)\right|  \left( f_j(v_*) + g_j(v_*)\right)  \:\left\langle v \right\rangle_i^2  \right. \\ \left.+ \left|f_i(v)-g_i(v) \right|   \left( f_j(v_*) + g_j(v_*)\right)  \left\langle v_* \right\rangle_j^2 \right)  \mathrm{d}\sigma \mathrm{d}v_* \mathrm{d}v.
	\end{multline*}
	Then, conservation of energy implies
	\begin{multline*}
	I_L \leq\sum_{i=1}^I  \sum_{j=1}^I  \iiint_{\mathbb{R}^3 \times \mathbb{R}^3 \times S^2} \mathcal{B}_{ij}(v,v_*,\sigma) \\ \times \left| f_i(v) - g_i(v)\right| \left( f_j(v_*) + g_j(v_*)\right) \left\langle v_* \right\rangle_j^2   \mathrm{d}\sigma \mathrm{d}v_* \mathrm{d}v.
	\end{multline*}
	Now, specifying the collision cross section \eqref{cross section} and using \eqref{estimate on u^g product} 
	\begin{equation*}
	\left|v-v_*\right|^{\gamma_{ij}} \leq  \left( \frac{\sum_{i=1}^I m_i}{\sqrt{m_i m_j}} \right)^{\gamma_{ij}} \left\langle v \right\rangle_i^{\gamma_{ij}}  \left\langle v_* \right\rangle_j^{\gamma_{ij}} \leq \left( \frac{\sum_{i=1}^I m_i}{\sqrt{m_i m_j}} \right)^{\gamma_{ij}} \left\langle v \right\rangle_i^{\bar{\bar{\gamma}}}  \left\langle v_* \right\rangle_j^{\bar{\bar{\gamma}}},
	\end{equation*}
	we obtain 
	\begin{equation*}
	I_L \leq  \max_{1\leq i, j \leq I}\left( \left\| b_{ij} \right\|_{L^1(\mathrm{d}\sigma)}  \left( \frac{\sum_{i=1}^I m_i}{\sqrt{m_i m_j}} \right)^{\gamma_{ij}}  \right) \left\| \mathbb{F} - \mathbb{G} \right\|_{L_{\bar{\bar{\gamma}}}^1}  \left\| \mathbb{F} + \mathbb{G} \right\|_{L_{2+\bar{\bar{\gamma}}}^1}.
	\end{equation*}
	Thanks to the monotonicity of norms \eqref{monotonicity of norm}
	\begin{equation*}
	\left\| \mathbb{F} - \mathbb{G} \right\|_{L_{\bar{\bar{\gamma}}}^1} \leq  \left\| \mathbb{F} - \mathbb{G} \right\|_{L_2^1},
	\end{equation*}
	we finally obtain
	\begin{equation*}
	I_L \leq 2    \max_{1\leq i, j \leq I}\left( \left\| b_{ij} \right\|_{L^1(\mathrm{d}\sigma)}  \left( \frac{\sum_{i=1}^I m_i}{\sqrt{m_i m_j}} \right)^{\gamma_{ij}}  \right) C_{2+\bar{\bar{\gamma}}} \left\| \mathbb{F} - \mathbb{G} \right\|_{L_2^1},
	\end{equation*}
	which completes the proof of one-sided Lipschitz condition.	
\end{proof}

\section{Proof of Theorem \ref{theorem bound on norm} (Generation  and propagation of polynomial moments)}\label{section proof generation of poly}

The proof consists of several steps. First, once the existence and uniqueness  of   vector value solution $\mathbb{F}$ to the Boltzmann system \eqref{Cauchy problem} is proven, we can derive from the Boltzmann system  an ordinary differential inequality for the scalar polynomial moment $\mathfrak{m}_k[\mathbb{F}](t)$. Then, the comparison principle for ODEs will yield estimates that guarantee both generation and propagation of these polynomial moments. 
\subsubsection*{Step 1. (Ordinary Differential Inequality for the polynomial moment)}
\begin{lemma}\label{ODI poly}
	Let $\mathbb{F}=\left[f_i\right]_{i=1,\dots,I}$ be a solution of the Boltzmann system \eqref{Cauchy problem}. Then the polynomial moment \eqref{poly moment} satisfies the following Ordinary Differential Inequality
	\begin{equation}\label{ODI F}
	\frac{\mathrm{d}}{\mathrm{d} t} \mathfrak{m}_k[\mathbb{F}](t) 
	\leq - A_k \, \mathfrak{m}_k[\mathbb{F}](t)^{1+\frac{\overline{\gamma}}{k}}+  B_k \,  \mathfrak{m}_k[\mathbb{F}](t),
	\end{equation}
	for   $k\ge k_*$ as defined in \eqref{kstar}, with positive constants $A_k$ and $B_k$ as defined in  Lemma~\ref{lemma Q za tangent cond}, equation \eqref{const Ak Bk}, after replacing $k_*$ by $k\ge k_*$.
\end{lemma}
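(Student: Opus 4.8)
\textbf{Proof plan for Lemma~\ref{ODI poly}.} The plan is to differentiate the scalar polynomial moment $\pmkF(t)$ in time using the Boltzmann system \eqref{Cauchy problem}, and then to reduce the resulting expression to exactly the situation already treated in Lemma~\ref{lemma Q za tangent cond}. Concretely, since $\F \in \mathcal{C}^1\left(\left(0,\infty\right),\ldva\right)$ by the Existence and Uniqueness Theorem~\ref{theorem existence uniqueness}, and since for $k \ge \ks$ the higher moment $\pmkF(t)$ is finite and differentiable by the generation/propagation estimates being established (more precisely, one justifies the interchange of time derivative and velocity integral by a standard truncation/dominated-convergence argument together with the $L^1_{k+\gmax}$ bounds available on $\Omega$), we may write
\begin{equation*}
\dt \pmkF(t) = \sum_{i=1}^I \int_{\mathbb{R}^3} \left[\Q\right]_i(t,v) \, \japvi^k \, \mathrm{d}v .
\end{equation*}
First I would record that $\F(t,\cdot) \in \Omega$ for all $t \ge 0$ by Theorem~\ref{theorem existence uniqueness}, so every structural property used in the proof of Lemma~\ref{lemma Q za tangent cond} --- the lower and upper bounds on $\pmnulaF$, $\pmdvaF$, $\pmdvapF$, positivity of the constant $c$ in the lower bound lemma, and finiteness of $\pmksF$ --- remains valid along the flow.

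Next I would observe that the entire chain of estimates in the proof of Lemma~\ref{lemma Q za tangent cond} (weak form \eqref{weak form any test} with test function $\psi_i(x)=\japxi^k$, application of the angular averaged Povzner Lemma~\ref{Povzner intro} to control $\int_{S^2}(\japvip^k+\japvjps^k)\bij\,\mathrm{d}\sigma$, the binomial/polynomial inequalities from Lemmas~\ref{binomial} and~\ref{moment products}, the two-sided bounds on $\left|v-v_*\right|^\gij$ via \eqref{estimate on u^g} and the lower bound Lemma~\ref{lemma lower bound}, and the Jensen-type bound \eqref{jensen}) goes through verbatim with $\ks$ replaced by any $k \ge \ks$. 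The only point requiring attention is that the Povzner constant $\cpov_{k/2}$ must still satisfy the strict inequality \eqref{povzner constant prop}, i.e. $\cpov_{k/2} - \bijnjed < 0$, uniformly in the pair $(i,j)$; this holds precisely because $k \ge \ks \ge \kmax = \max_{1\le i,j\le I}\{k^{ij}_*\}$ and because, by Lemma~\ref{Povzner intro}, $\cpov_{n}$ tends to zero as $n$ grows, so the inequality is preserved for all larger orders. Therefore the positivity of the coercive constant $A_k$ (cf. Remark~\ref{coercive_constant}) is retained for every $k \ge \ks$, and $B_k$ is manifestly positive.

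Hence Lemma~\ref{lemma Q za tangent cond}, applied with $\ks$ replaced by $k$, yields directly
\begin{equation*}
\dt \pmkF(t) = \sum_{i=1}^I \int_{\mathbb{R}^3} \left[\Q\right]_i \japvi^k \, \mathrm{d}v \leq - A_k \, \pmkF(t)^{1+\frac{\gmax}{k}} + B_k \, \pmkF(t),
\end{equation*}
with $A_k$, $B_k$ given by the formulas in \eqref{const Ak Bk} after the substitution $\ks \mapsto k$, which is the claimed differential inequality \eqref{ODI F}. The main obstacle --- and the only genuinely new point relative to Lemma~\ref{lemma Q za tangent cond} --- is the rigorous justification of differentiating $\pmkF(t)$ under the integral sign for the higher-order weight $k > 2$, since the Banach-space solution is only a priori $\mathcal{C}^1$ with values in $\ldva$; this is handled by approximating $\japvi^k$ by bounded truncations $\japvi^k \wedge R$, using the already-available (from $\Omega$) $L^1_{2+\varepsilon}$ and $L^1_{k+\gmax}$ control to dominate the collision-operator integrand uniformly in $R$, passing to the limit $R\to\infty$, and invoking the continuity in time of the moments. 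Everything else is a transcription of the previous argument.
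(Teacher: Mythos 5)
Your proof is correct and follows the same route as the paper: differentiate $\pmkF(t)$ along the Boltzmann flow, reduce to $\sum_i\int[\Q]_i\japvi^k\,\mathrm{d}v$, and invoke Lemma~\ref{lemma Q za tangent cond} with $\ks$ replaced by $k\ge\ks$, noting that the inequality \eqref{povzner constant prop} persists for all larger $k$ because $\cpov_{k/2}\to 0$ as $k\to\infty$. The paper's own proof is even more compressed and does not discuss the differentiation-under-the-integral issue you raise; your remarks there are a reasonable supplement, though the phrase ``$L^1_{k+\gmax}$ bounds available on $\Omega$'' slightly overstates what $\Omega$ provides (it controls moments only up to order $\ks$), so for $k>\ks$ the full rigor does indeed require the truncation-and-limit argument you sketch.
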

\begin{proof}
	Consider $i-$th equation of the Boltzmann system \eqref{Cauchy problem},
	\begin{equation*}
	\partial_t f_i(t,v) =  \sum_{{j=1}}^I Q_{ij}(f_i,f_j)(t,v), \qquad i=1,\dots,I.
	\end{equation*}
	Integration with respect to velocity  $v$ with weight $\left\langle v \right\rangle_i^k$, $k\geq0$, and summation over all species $i=1,\dots,I$ yields 
	\begin{equation} \label{ode}
	\frac{\mathrm{d}}{\mathrm{d}t} \mathfrak{m}_k[\mathbb{F}](t) =  \sum_{i=1}^I \sum_{{j=1 }}^I \int_{\mathbb{R}^3} \left\langle v \right\rangle_i^k  Q_{ij}(f_i,f_j)(t,v)\mathrm{d}v,
	\end{equation}
	after recalling definition  \eqref{poly moment} of polynomial moment. Using results from Lemma \ref{lemma Q za tangent cond} for $k\ge k_*$ as defined in \eqref{kstar}, we conclude the estimate \eqref{ODI F}.
\end{proof}

\subsubsection*{Step 2. (Comparison principle)}

The starting point is the inequality \eqref{ODI F}.  We  associate to it an ODE of Bernoulli type 
\begin{equation}\label{associated ode}
 y'(t)
= - a \,y(t)^{1+c}+  b \,  y(t),
\end{equation}
whose solution will be an upper bound for $\mathfrak{m}_k[\mathbb{F}](t)$. Indeed, solution to \eqref{associated ode} reads
\begin{equation}\label{ode comparison}
y(t) = \left(\frac{a}{b}\left(1-e^{-c \, b\, t}\right) + y(0)^{-c} e^{-c\,b\,t}  \right)^{-\frac{1}{c}}.
\end{equation}
\subsubsection*{Step 3. (Generation of polynomial moments)}
Dropping initial data in \eqref{ode comparison} yields 
\begin{equation*}
y(t)  \leq  \left(\frac{a}{b}\left(1-e^{-c \,b \,t}\right)   \right)^{-\frac{1}{c}}, \qquad \forall t > 0.
\end{equation*}
Setting $y(t):= \mathfrak{m}_k[\mathbb{F}](t)$, $a:= A_k$, $b:=B_k$ and $c:= \overline{\gamma}/k$ this  implies generation estimate  \eqref{bound on norm} with 
 $$\mathfrak{C}^{\mathfrak{m}} = \left(\frac{A_k}{B_k}\right)^{-\frac{k}{\overline{\gamma}}}, \qquad  \text {for any} \ k\ge k_*.$$ 
 
 \begin{remark}
 	For later purposes, we derive also the following inequality by approximating the last result. Namely, for $t<1$, we may write 
 	$$
 	\left(1-e^{-c \,  b \,  t}\right)^{-\frac{1}{c}} = \left(c \, b \, t\right)^{-\frac{1}{c}} \left(1 + \frac{b}{2} t + o(t)\right) \leq \left(c \, b \right)^{-\frac{1}{c}}  e^{\frac{b}{2} t} \ t^{-\frac{1}{c}} \leq \left(c \, b \right)^{-\frac{1}{c}}  e^{\frac{b}{2} } \ t^{-\frac{1}{c}}.
 	$$
 	On the other hand, for $t\geq 1$, it follows
 	\begin{equation*}
 	\left(1-e^{-c \,b \,t}   \right)^{-\frac{1}{c}} \leq \left(1-e^{-c \,b }  \right)^{-\frac{1}{c}}.
 	\end{equation*}
 	Therefore,
 	\begin{equation}\label{generation poly decay}
 	y(t) \leq  \left(\frac{a}{b}  \right)^{-\frac{1}{c}} \begin{cases}
 	\left(c \, b \right)^{-\frac{1}{c}}  e^{\frac{b}{2} }  \ t^{-\frac{1}{c}}, & t<1 \\
 	\left(1-e^{-c \,b }  \right)^{-\frac{1}{c}}, & t\geq 1.
 	\end{cases}
 	\end{equation}
 	In other words, plugging $y(t):= \mathfrak{m}_k[\mathbb{F}](t)$, $a:= A_k$, $b:=B_k$ and $c:= \overline{\gamma}/k$, it yields
 	\begin{equation}\label{poly gen max t}
 	\mathfrak{m}_k[\mathbb{F}](t)  \leq \mathfrak{B}^{\mathfrak{m}} \ \max\{ 1, t^{-\frac{k}{\overline{\gamma}}}\}, \quad \forall t>0,
 	\end{equation}
 	where the constant is $$\mathfrak{B}^{\mathfrak{m}} = \mathfrak{C}^{\mathfrak{m}} \max\left\{ \left(\frac{\overline{\gamma}}{B_k k }\right)^{-\frac{k}{\overline{\gamma}}} e^{\frac{B_k}{2}}, \left( 1- e^{-\frac{\overline{\gamma}}{B_k k }}\right)^{-\frac{k}{\overline{\gamma}}} \right\}, \qquad  \text {for any} \ k\ge k_*.$$
 \end{remark}

\subsubsection*{Step 4. (Propagation of polynomial moments)}

For propagation result, when $y(0)$ is assumed to be finite, we first notice that $y(t)$ is a monotone function of $t$, which  approaches to $y(0)$ as  $t\rightarrow 0$ on one hand, and converges to  $ (a/b)^{-1/c}$ when $t\rightarrow\infty$  on the other hand. Therefore, $$y(t) \leq \max\{y(0), (a/b)^{-1/c}\},$$ for all $t\geq 0$. 
Again, taking $y(t):= \mathfrak{m}_k[\mathbb{F}](t)$, $a:= A_k$, $b:=B_k$ and $c:= \overline{\gamma}/k$,  for any $k\ge k_*$,   implies  the propagation estimate \eqref{poly ODI prop}.

\section{Generation and propagation of  exponential moments}\label{Section gen prop exp mom}

Let $\mathbb{F}$ be a solution of the Boltzmann system \eqref{Cauchy problem}. In this section we prove both generation and propagation of exponential moment \eqref{exp moment} related to  $\mathbb{F}$.
The proof strongly relies on generation and propagation of polynomial moments stated in Theorem \ref{bound on norm}. Moreover, it uses polynomial moment ODI, but  written in a slightly different manner than in Section \ref{ODI poly}, which we make precise in the following Lemma.
\begin{lemma}
	Let $\mathbb{F}$ be a solution of the Boltzmann system \eqref{Cauchy problem} with $\overline{\gamma}=\bar{\bar{\gamma}}$. Then there exists positive constants $K_1$ and $K_2$ such that the following two polynomial moments ODI hold
\begin{itemize}
	\item ODI needed for propagation of exponential moments
\begin{multline}\label{poly ODI prop}
\frac{\mathrm{d}}{\mathrm{d} t} \mathfrak{m}_{s k}[\mathbb{F}](t) \leq - K_1 \mathfrak{m}_{s k+ \overline{\gamma}}[\mathbb{F}](t) \\+   K_2 \left( \max_{1\leq i, j \leq I}\boldsymbol{\mathcal{C}}^{ij}_{\frac{s k}{2}} \right) \sum_{\ell=1}^{\ell_{k}} \left( \begin{matrix}
k\\ \ell 
\end{matrix} \right) \left( \mathfrak{m}_{s \ell+\overline{\gamma}}[\mathbb{F}](t) \ \mathfrak{m}_{s k- s \ell}[\mathbb{F}](t) + \mathfrak{m}_{s k- s \ell+\overline{\gamma}}[\mathbb{F}](t) \ \mathfrak{m}_{s \ell}[\mathbb{F}](t)\right).
\end{multline}
	\item  ODI needed for generation of exponential moments
		\begin{multline}\label{poly ODI gen}
	\frac{\mathrm{d}}{\mathrm{d} t} \mathfrak{m}_{\overline{\gamma} k}[\mathbb{F}](t) \leq - K_1 \mathfrak{m}_{\overline{\gamma} k+ \overline{\gamma}  }[\mathbb{F}](t) \\+   K_2  \left( \max_{1\leq i, j \leq I} \boldsymbol{\mathcal{C}}^{ij}_{\frac{\overline{\gamma} k}{2}}\right) \sum_{\ell=1}^{\ell_{k}} \left( \begin{matrix}
	k\\ \ell 
	\end{matrix} \right) \left( \mathfrak{m}_{\overline{\gamma} \ell+\overline{\gamma}}[\mathbb{F}](t) \ \mathfrak{m}_{\overline{\gamma} k- \overline{\gamma} \ell}[\mathbb{F}](t) + \mathfrak{m}_{\overline{\gamma} k- \overline{\gamma} \ell+\overline{\gamma}}[\mathbb{F}](t) \ \mathfrak{m}_{\overline{\gamma} \ell}[\mathbb{F}](t)\right).
	\end{multline}
\end{itemize}
\end{lemma}
\begin{proof}
	We briefly point out that the main steps in the proofs are adaption of the proof given in \cite{GambaTask18}. Let us consider polynomial moment
	\begin{equation*}
	\mathfrak{m}_{\delta q}[\mathbb{F}](t)=: \mathfrak{m}_{\delta q}, \quad 0<\delta\leq 2, \ q\geq 0,  \ \ \text{ with}\   \delta q>k_* ,
	\end{equation*}
	 with $k_*$  as defined in \eqref{kstar}, and derive an ODI for it starting from \eqref{poly moment start}
	so that   $\boldsymbol{\mathcal{C}}^{ij}_{\frac{\delta q }{2}} < \left\| b_{ij} \right\|_{L^1(\mathrm{d}\sigma)}$ holds uniformly for any pair $i,j=1,\dots,I$, with $\boldsymbol{\mathcal{C}}^{ij}_n$ being the constant from Povzner lemma \ref{Povzner intro}. Once we derive it, \eqref{poly ODI prop} will follow setting $\delta := s$, and \eqref{poly ODI gen} will follow with $\delta:=\overline{\gamma}$. Indeed, from \eqref{poly moment start} we get
	\begin{multline*}
	\mathfrak{m}_{\delta q}' = \sum_{i=1}^I \int_{ \mathbb{R}^3} \left[ \mathbb{Q}(\mathbb{F}) \right]_i \left\langle v \right\rangle_i^{\delta q} \mathrm{d} v  \leq \frac{1}{2}\sum_{i=1}^I \sum_{j=1}^I \iint_{\mathbb{R}^3 \times \mathbb{R}^3} f_i (v) f_j(v_*) \left|v-v_*\right|^{\gamma_{ij}} \\ \times \left( \boldsymbol{\mathcal{C}}^{ij}_{\frac{\delta q}{2}} \left(\left\langle v \right\rangle_i^2 + \left\langle v_* \right\rangle_j^2 \right)^{\frac{\delta q}{2}} -  \left\| b_{ij} \right\|_{L^1(\mathrm{d}\sigma)} \left( \left\langle v \right\rangle_i^{\delta q} + \left\langle v_* \right\rangle_j^{\delta q} \right) \right) \mathrm{d} v_* \mathrm{d}v.
	\end{multline*}
	Before applying  Lemma  \ref{binomial}, we first estimate, since $(\delta/2) \leq 1$,
	\begin{equation*}
	 \left(\left\langle v \right\rangle_i^2 + \left\langle v_* \right\rangle_j^2 \right)^{\frac{\delta q}{2}}  \leq  \left(\left\langle v \right\rangle_i^{\delta} + \left\langle v_* \right\rangle_j^{ \delta} \right)^{{ q}},
	\end{equation*}
	and then apply it, which gives the following 
		\begin{multline*}
 \left(\left\langle v \right\rangle_i^{\delta} + \left\langle v_* \right\rangle_j^{ \delta} \right)^{{ q}}	\leq \left\langle v \right\rangle_i^{\delta q} + \left\langle v_* \right\rangle_j^{\delta q} \\ + \sum_{\ell=1}^{\ell_{q}} \left( \begin{matrix}
		q\\ \ell 
	\end{matrix} \right) \left( \left\langle v \right\rangle_i^{ \delta \ell} \left\langle v_* \right\rangle_j^{ \delta q -  \delta \ell}+ \left\langle v \right\rangle_i^{ \delta q-  \delta \ell} \left\langle v_* \right\rangle_j^{\delta \ell} \right),
\end{multline*}
with $\ell_q = \lfloor\frac{q + 1}{2}\rfloor$. The bound from above and below of the non-angular part of the cross-section, $ \left|v-v_*\right|^{\gamma_{ij}}$, is used as in Section \ref{ODI poly}. This implies  a polynomial moment ODI 
\begin{equation*}
\mathfrak{m}_{\delta q}'(t) \leq  - K_1  \mathfrak{m}_{\delta q + \overline{\gamma}} +   \boldsymbol{\mathcal{C}}^{ij}_{\frac{\delta q}{2}} \, K_2  \sum_{\ell=1}^{\ell_{q}} \left( \begin{matrix}
q\\ \ell 
\end{matrix} \right) \left( \mathfrak{m}_{ \delta \ell + \overline{\gamma}} \mathfrak{m}_{\delta q -\delta \ell} + \mathfrak{m}_{\delta q -\delta \ell +\overline{\gamma}} \mathfrak{m}_{ \delta \ell } \right),
\end{equation*}
where $K_1$ and $K_2$ are  positive constants  since $\delta q \ge k_*$, with $k_*$ as defined in \eqref{kstar}, 
\begin{equation*}
\begin{split}
K_1 &= \min_{1\leq i, j \leq I}  \left(\left\| b_{ij} \right\|_{L^1(\mathrm{d}\sigma)}  - \boldsymbol{\mathcal{C}}^{ij}_{\frac{\delta q }{2}}\right)  \frac{c_{lb} }{\max_{1\leq i \leq I} m_i}, \\ 
K_2 &=\frac{1}{2} \left( \max_{1\leq i, j \leq I} \left( \frac{\sum_{i=1}^I m_i}{\sqrt{m_i m_j}} \right)^{\gamma_{ij}}\right),
\end{split}
\end{equation*}  
which completes the proof.
\end{proof}

\section{Proof of Theorem  \ref{theorem gen prop ML} (b) (Propagation of   exponential moments)}

Using Taylor series of an exponential function, one can represent  exponential moment  as 
\begin{equation*}
\mathcal{E}_{s}[\mathbb{F}]({\alpha},t) = \sum_{k=0}^\infty \frac{\alpha^k}{k!} \, \mathfrak{m}_{s k}[\mathbb{F}](t).
\end{equation*}
We will show that the exponential rate $\alpha=\alpha(k_*)$, that is, depending on the $k_*$ parameter defined in \eqref{kstar}  for $\overline{\gamma}=\bar{\bar{\gamma}}$.

We consider its partial sum an a shifted by $\overline{\gamma}$ one, namely,
\begin{equation}\label{pomocna 18}
\mathcal{E}_{s}^n[\mathbb{F}]({\alpha},t) = \sum_{k=0}^n \frac{\alpha^k}{k!} \, \mathfrak{m}_{s k}[\mathbb{F}](t), \quad \mathcal{E}_{s;\overline{\gamma}}^n[\mathbb{F}]({\alpha},t)= \sum_{k=0}^n \frac{\alpha^k}{k!} \, \mathfrak{m}_{s k+ \overline{\gamma}}[\mathbb{F}](t).
\end{equation}
In order to have lighter writing, we will drop from moment notation dependence on $t$ and $\alpha$, and relation to $\mathbb{F}$, and we will  instead write
\begin{equation*}
\mathcal{E}_{s}^n[\mathbb{F}]({\alpha},t)=:\mathcal{E}_{s}^n, \qquad \mathcal{E}_{s;\overline{\gamma}}^n[\mathbb{F}]({\alpha},t):=\mathcal{E}_{s;\overline{\gamma}}^n,  \qquad  \mathfrak{m}_{s k+ \overline{\gamma}}[\mathbb{F}](t)=:\mathfrak{m}_{s k+ \overline{\gamma}}. 
\end{equation*}
When it will be important to highlight dependence on $t$ and $\alpha$, we will also, for example, write $\mathcal{E}_{s}^n(\alpha,t)$ instead of $\mathcal{E}_{s}^n$.\\
 
The idea of  proof is to show that the partial sum $\mathcal{E}_{s}^n$ is bounded uniformly in time $t$ and $n$. To this end, we first derive ordinary differential inequality (ODI) for it.

\changelocaltocdepth{1}

\subsection*{ODI for $\mathcal{E}_{s}^n$}
Taking derivative with respect to time $t$ of \eqref{pomocna 18}, we get 
\begin{align*}
\frac{\mathrm{d} \, }{\mathrm{d} t}\mathcal{E}_{s}^n %&= \sum_{k=0}^n \frac{\alpha^{ k}}{k!}   \mathfrak{m}_{s k}' \\ & 
= \sum_{k=0}^{k_0-1} \frac{\alpha^{ k}}{k!}  \mathfrak{m}_{s k}' +  \sum_{k=k_0}^n \frac{\alpha^{ k}}{k!}  \mathfrak{m}_{s k}',
\end{align*}
where $k_0$ is an index that will be determined later on.  We use a polynomial moment ODE \eqref{poly ODI prop} for the second term that yields 
\begin{multline}\label{pomocna 10}
\frac{\mathrm{d} \, }{\mathrm{d} t}\mathcal{E}_{s}^n \leq \sum_{k=0}^{k_0-1} \frac{\alpha^{k}}{k!}  \mathfrak{m}_{s k}'  - K_1  \sum_{k=k_0}^n \frac{\alpha^{ k}}{k!} \,   \mathfrak{m}_{s k+ \overline{\gamma}} \\ +  K_2 \sum_{k=k_0}^n  \left( \max_{1\leq i, j \leq I}\boldsymbol{\mathcal{C}}^{ij}_{\frac{sk}{2}}\right) \frac{\alpha^{ k}}{k!}   \sum_{\ell=1}^{\ell_{k}} \left( \begin{matrix}
k \\ \ell 
\end{matrix} \right) \left( \mathfrak{m}_{s \ell+\overline{\gamma}}  \mathfrak{m}_{s k- s \ell} + \mathfrak{m}_{s k- s \ell+\overline{\gamma}} \mathfrak{m}_{s \ell}\right)\\
=: S_0 - K_1 S_1 +  K_2 S_2.
\end{multline}
We estimate each sum $S_0$, $S_1$ and $S_2$ separately.
\subsubsection*{Term $S_0$} Propagation of polynomial moments \eqref{poly propagation} ensures bound on $\mathfrak{m}_{s k}$ uniformly in time, which implies from \eqref{ODI F} bound on its derivative, i.e. there exist a constant $c_{k_0}$ such that 
\begin{equation}\label{pomocna 9}
\mathfrak{m}_{s k}, \mathfrak{m}_{s k}' \leq c_{k_0} \quad \text{for all} \ k\in\{0,1,\dots,k_0\}.
\end{equation} 
For $S_0$ this yields 
\begin{equation}\label{S0 estimate}
S_0 \leq c_{k_0}  \sum_{k=0}^{k_0-1} \frac{\alpha^{k}}{k!}  \leq c_{k_0} e^{\alpha} \leq 2 \, c_{k_0},
\end{equation}
for $\alpha$ small enough to satisfy
\begin{equation}\label{alpha first cond}
e^{\alpha} \leq 2.
\end{equation}

\subsubsection*{Term $S_1$}  We complete first the term $S_1$ to make appear shifted partial sum $\mathcal{E}_{s;\overline{\gamma}}^n$ by means of
\begin{equation*}
S_1 =  \sum_{k=k_0}^n \frac{\alpha^{k}}{k!} \, D_k \, \mathfrak{m}_{s k+ \overline{\gamma}} = \mathcal{E}_{s;\overline{\gamma}}^n - \sum_{k=0}^{k_0-1} \frac{\alpha^{ k}}{k!} \, D_k \, \mathfrak{m}_{s k+ \overline{\gamma}}.
\end{equation*}
From the bound \eqref{pomocna 9} we can estimate $\mathfrak{m}_{s k+ \overline{\gamma}}$ as well,
\begin{equation*}
\mathfrak{m}_{s k+ \overline{\gamma}} \leq c_{k_0}, \quad k=0,\dots,k_0-1,
\end{equation*}
which together with considerations for  the term $S_0$ yields
\begin{equation}\label{S1 estimate}
S_1 \geq \mathcal{E}_{s;\overline{\gamma}}^n - 2 c_{k_0}.
\end{equation}

\subsubsection*{Term $S_2$} Term $S_2$ can be separated into two terms, namely
\begin{equation*}
S_2 = \sum_{k=k_0}^n  \left( \max_{1\leq i, j \leq I}\boldsymbol{\mathcal{C}}^{ij}_{\frac{sk}{2}} \right) \frac{\alpha^{ k}}{k!}   \sum_{\ell=1}^{\ell_{k}} \left( \begin{matrix}
k \\ \ell 
\end{matrix} \right) \left( \mathfrak{m}_{s \ell+\overline{\gamma}}  \mathfrak{m}_{s k- s \ell} + \mathfrak{m}_{s k- s \ell+\overline{\gamma}} \mathfrak{m}_{s \ell}\right)=: S_{2_1} + S_{2_2}. 
\end{equation*}
Their treatment is the same, so let perform  an estimate on $S_{2_1}$. Rearranging we can write
\begin{equation*}
S_{2_1} =  \sum_{k=k_0}^n \left( \max_{1\leq i, j \leq I} \boldsymbol{\mathcal{C}}^{ij}_{\frac{sk}{2}} \right) \sum_{\ell=1}^{\ell_{k}}\frac{\alpha^\ell \, \mathfrak{m}_{s \ell+\overline{\gamma}}}{\ell!}  \frac{\alpha^{k-\ell} \, \mathfrak{m}_{s k- s \ell}}{(k-\ell)!} \leq \left( \max_{1\leq i, j \leq I} \boldsymbol{\mathcal{C}}^{ij}_{\frac{sk_0}{2}}\right)  \mathcal{E}_{s;\overline{\gamma}}^n \, \mathcal{E}_{s}^n,
\end{equation*}
the last inequality is due to the decreasing property of $\boldsymbol{\mathcal{C}}^{ij}_k$ in $k\ge k_*$, uniformly for any $i,j$, with $k_*$ defined in \eqref{kstar}.  Therefore, we can estimate 
\begin{equation}\label{S2 estimate}
S_{2} \leq  2 \left( \max_{1\leq i, j \leq I} \boldsymbol{\mathcal{C}}^{ij}_{\frac{s{k_0}}{2}} \right) \mathcal{E}_{s;\overline{\gamma}}^n \, \mathcal{E}_{s}^n.\\
\end{equation}

Finally, desired ODI for $\mathcal{E}_{s}^n$ is obtained from \eqref{pomocna 10} gathering all estimates \eqref{S0 estimate}, \eqref{S1 estimate} and \eqref{S2 estimate}. Namely,
\begin{equation}\label{pomocna 12}
\frac{\mathrm{d} \, }{\mathrm{d} t}\mathcal{E}_{s}^n \leq  - K_1 \mathcal{E}_{s;\overline{\gamma}}^n
+ 2 c_{k_0} (1+K_1)    + 2\, K_2  \left( \max_{1\leq i, j \leq I} \boldsymbol{\mathcal{C}}^{ij}_{\frac{s{k_0}}{2}} \right) \mathcal{E}_{s;\overline{\gamma}}^n \, \mathcal{E}_{s}^n.
\end{equation}

\subsection*{Bound on $\mathcal{E}_{s}^n$} For each $n \in \mathbb{N}$ we define
\begin{equation*}
T_n := \sup\{ t\geq 0: \mathcal{E}_{s}^n(\alpha,\tau) \leq 4 M_0, \ \forall \tau \in [0,t] \}, 
\end{equation*}
where $M_0$ is a bound on initial data in \eqref{initial data exp prop}. We will show that $\mathcal{E}_{s}^n(t)$ is uniformly bounded in $t$ and $n$ by proving that  $T_n=\infty$ for all $n\in \mathbb{N}$.\\

The sequence $T_n$ is well-defined and positive. Indeed,  since $\alpha \leq \alpha_0$, at time $t=0$ we have
\begin{equation*}
\mathcal{E}_{s}^n(\alpha,0) = \sum_{k=0}^n \frac{\alpha^{ k}}{k!} \mathfrak{m}_{s k}(0) \leq  \sum_{k=0}^n \frac{\alpha_0^{ k}}{k!} \mathfrak{m}_{s k}(0) \leq \mathcal{E}_{s}({\alpha_0},0) < 4 M_0,
\end{equation*}
uniformly in $n$, by assumption \eqref{initial data exp prop}. Since each term $\mathfrak{m}_{s k}(t)$ is continuous function of $t$, so is $\mathcal{E}_{s}^n(\alpha, t)$. Therefore, $\mathcal{E}_{s}^n(\alpha, t)< 4M_0$ on some time interval $[0,t_n)$, $t_n>0$. Thus $T_n$ is well-defined and positive for every $n\in \mathbb{N}$.\\

For $t\in [0,T_n]$ it follows $\mathcal{E}_{s}^n(\alpha, t)\leq4 M_0$, which from \eqref{pomocna 12} implies
\begin{equation}\label{pomocna 20}
\frac{\mathrm{d} \, }{\mathrm{d} t}\mathcal{E}_{s}^n \leq  -  \mathcal{E}_{s;\overline{\gamma}}^n \left( K_1 - 8 \,K_2 \left( \max_{1\leq i, j \leq I} \boldsymbol{\mathcal{C}}^{ij}_{\frac{s k_0}{2}} \right)  M_0  \right) \\ +  2 c_{k_0} \left(  1  + K_1 \right).
\end{equation}
Since $\boldsymbol{\mathcal{C}}^{ij}_{\frac{s k_0}{2}}$,  for any $i,j$,  converges to zero as  $\frac{s k_0}{2}> k_*$ goes to infinity we can choose   $k_0> 2\frac{k_*}{s}$ such that
\begin{equation*}
K_1 - 8 \,K_2 \left( \max_{1\leq i, j \leq I}  \boldsymbol{\mathcal{C}}^{ij}_{k_*} \right)  M_0    > \frac{K_1}{2}, 
\end{equation*} 
or, equivalently, 
\begin{equation}\label{K_1kstar}
K_1 <   16 \,K_2 \left( \max_{1\leq i, j \leq I}{  \boldsymbol{\mathcal{C}}^{ij}_{k_*} }\right)  M_0, 
\end{equation}  
with $K_1$ depending on  $k_*$ as defined in \eqref{kstar}. 
Hence,  \eqref{pomocna 20} becomes 
\begin{equation}\label{pomocna 21}
\frac{\mathrm{d} \, }{\mathrm{d} t}\mathcal{E}_{s}^n \leq  -\frac{K_1}{2}\,  \mathcal{E}_{s;\overline{\gamma}}^n+  2 { c_{k_*}} \left(  1  + K_1 \right).
\end{equation}
Next step consists in finding lower bound for $\mathcal{E}_{s;\overline{\gamma}}^n$ in terms of $\mathcal{E}_{s}^n$. Indeed,
we can estimate
\begin{multline*}
\mathcal{E}_{s;\overline{\gamma}}^n =  \sum_{k=0}^n \frac{\alpha^{ k}}{k!} \mathfrak{m}_{s k+ \overline{\gamma}} \geq \sum_{k=0}^n \frac{\alpha^{ k}}{k!} \sum_{i=1}^I \int_{\left\{ \left\langle v \right\rangle_i \geq \alpha^{-1/2} \right\}} f_i(t,v) \left\langle v \right\rangle_i^{sk+\overline{\gamma}} \mathrm{d}v 
\\ \geq  \alpha^{-\overline{\gamma}/2} \left( \mathcal{E}_{s}^n - \sum_{k=0}^n \frac{\alpha^{ k}}{k!} \sum_{i=1}^I \int_{\left\{ \left\langle v \right\rangle_i < \alpha^{-1/2} \right\}} f_i(t,v) \left\langle v \right\rangle_i^{sk} \mathrm{d}v  \right)
\\ \geq  \alpha^{-\overline{\gamma}/2} \left( \mathcal{E}_{s}^n - \sum_{k=0}^n \frac{\alpha^{k (1-\frac{s}{2})}}{k!} \mathfrak{m}_{0}(0)  \right) \geq  \alpha^{-\overline{\gamma}/2} \left( \mathcal{E}_{s}^n - \mathfrak{m}_{0}(0) e^{\alpha^{1-\frac{s}{2}}} \right).
\end{multline*}
Plugging this result into \eqref{pomocna 21} yields
\begin{equation*}
\frac{\mathrm{d} \, }{\mathrm{d} t}\mathcal{E}_{s}^n \leq  -\frac{K_1}{2}\, \alpha^{-\overline{\gamma}/2} \mathcal{E}_{s}^n +  \frac{K_1}{2} \alpha^{-\overline{\gamma}/2}  \mathfrak{m}_{0}(0) e^{\alpha^{1-\frac{s}{2}}}  +  2 c_{k_0} \left(  1  + K_1 \right).
\end{equation*}
By the maximum principle for ODEs, it follows
 \begin{multline}\label{pomocna 22}
\mathcal{E}_{s}^n(\alpha, t) \leq \max\left\{ \mathcal{E}_{s}^n(\alpha, 0), \mathfrak{m}_{0}(0) \, e^{\alpha^{1-\frac{s}{2}}}  + \frac{4 c_{k_0} \left(  1  + K_1 \right)}{ K_1 \, \alpha^{-\overline{\gamma}/2}}  \right\}  \\  \\
\leq  M_0 +  \mathfrak{m}_{0}(0) \, e^{\alpha^{1-\frac{s}{2}}}  + \alpha^{\overline{\gamma}/2}  \ \frac{4 c_{k_0} \left(  1  + K_1 \right)}{ K_1 \, },
\end{multline}
 for any $t\in [0,T_n].$
On the other hand, since $s \leq 2$, the following limit property holds
 \begin{equation*}
 \mathfrak{m}_{0}(0) \, e^{\alpha^{1-\frac{s}{2}}}  + \alpha^{\overline{\gamma}/2}  \ \frac{4 c_{k_0} \left(  1  + K_1 \right)}{ K_1 \, } \rightarrow  \mathfrak{m}_{0}(0), \quad \text{as} \ \alpha\rightarrow 0,
 \end{equation*} 
and $\mathfrak{m}_{0}(0) < \mathcal{E}_{s}^n(\alpha_0,0)$ for any $\alpha_0$, and therefore, by \eqref{initial data exp prop}, $\mathfrak{m}_{0}(0)< M_0$. Thus, we can  choose sufficiently small $\alpha = \alpha_1$ such that 
\begin{equation}\label{alpha second cond}
\mathfrak{m}_{0}(0) \, e^{\alpha^{1-\frac{s}{2}}}  + \alpha^{\overline{\gamma}/2}  \ \frac{4 c_{k_*} \left(  1  + K_1 \right)}{ K_1 \, }< 3 M_0,
\end{equation}
for any $s\leq 2$ and  $K_1=K_1(k_*)$ from \eqref{K_1kstar}. In that case, inequality \eqref{pomocna 22} implies the following strict inequality
 \begin{equation}\label{pomocna 23}
\mathcal{E}_{s}^n(\alpha, t) < 4 M_0, 
\end{equation}
for any $t\in [0,T_n]$ and $0<\alpha(k_*) \leq \alpha_1$,  with $\alpha$ depending on $k_*$ defined in \eqref{kstar}.\\
\subsection*{Conclusion I} If $k_0$ is chosen such that \eqref{pomocna 21} holds, and the choice of  $\alpha$ is such that $0< \alpha \leq \alpha_0$ and  \eqref{alpha first cond}, \eqref{alpha second cond} are satisfied, which amounts to take   $\alpha =\min \left\{ \alpha_0, \ln 2, \alpha_1 \right\}$, then we have strict inequality \eqref{pomocna 23}, $\mathcal{E}_{s}^n(\alpha, t) < 4 M_0$, that holds on the closed interval $ [0,T_n]$ uniformly in $n$. Because of the continuity of $\mathcal{E}_{s}^n(\alpha,t)$ with respect to time $t$, this strict inequality actually  holds on a slightly larger time interval $[0, T_n + \varepsilon)$, $\varepsilon >0$. This contradicts the maximality of $T_n$ unless $T_n = + \infty$. Therefore,  
$\mathcal{E}_{s}^n(\alpha, t) \leq 4 M_0$ for all $t\geq 0$ and $n\in \mathbb{N}$. Thus, letting $n\rightarrow \infty$ we conclude
\begin{equation*}
\mathcal{E}_{s}[\mathbb{F}]({\alpha},t) = \lim\limits_{n\rightarrow \infty} \mathcal{E}_{s}^n[\mathbb{F}]({\alpha},t) \leq 4 M_0, \quad \forall t\geq 0,
\end{equation*}
 i.e. the solution $\mathbb{F}$ to  system of Boltzmann equations with finite initial exponential moment of order $s$ and rate $\alpha_0$ will propagate exponential moments of the same order $s$ and a rate $\alpha$ that satisfies  $\alpha =\min \left\{ \alpha_0, \ln 2, \alpha_1 \right\}$.   It is also very interesting to note that  the rate $\alpha$ depends on the $k_*$ parameter from \eqref{kstar}, which depends on uniform in the $i,j$ pairs upper bounds for the intermolecular potentials  $\gamma_{ij}$ and for  controls of the $k_*^{ij}$ as defined in \eqref{povzner constant prop} in the Povzner lemma~\ref{Povzner intro}.

\section{Proof of Theorem  \ref{theorem gen prop ML} (a)  (Generation of   exponential  moments)}

We consider    an exponential moment of order $\overline{\gamma}=\bar{\bar{\gamma}}$ and rate $\alpha t$,  where $\alpha$ depends on $k_*$ from \eqref{kstar}, for the solution $\mathbb{F}$ of the Boltzmann system, namely 
\begin{equation*}
\mathcal{E}_{\overline{\gamma}}[\mathbb{F}](\alpha t,t)=  \sum_{i=1}^I \int_{\mathbb{R}^3} f_i(t,v) \, e^{\alpha t \left\langle v \right\rangle_i^{\overline{\gamma}}} \mathrm{d}v =\sum_{k=0}^\infty \frac{(\alpha t)^k}{k!} \mathfrak{m}_{\overline{\gamma} k}[\mathbb{F}](t).
\end{equation*}
Consider its partial sum, and a shifted one
\begin{equation*}
\mathcal{E}^n_{\overline{\gamma}}[\mathbb{F}](\alpha t,t)=  \sum_{k=0}^n \frac{(\alpha t)^k}{k!} \mathfrak{m}_{\overline{\gamma} k}[\mathbb{F}](t), \qquad \mathcal{E}^n_{\overline{\gamma}; \overline{\gamma}}[\mathbb{F}](\alpha t,t)=  \sum_{k=0}^n \frac{(\alpha t)^k}{k!} \mathfrak{m}_{\overline{\gamma} k+ \overline{\gamma}  }[\mathbb{F}](t).
\end{equation*}
As usual, we will most of the time relieve notation by omitting explicit dependence on time $t$ and relation to $\mathbb{F}$, and write
 \begin{equation*}
 \mathcal{E}^n_{\overline{\gamma}}[\mathbb{F}](\alpha t,t)=: \mathcal{E}^n_{\overline{\gamma}}, \quad  \mathcal{E}^n_{\overline{\gamma}; \overline{\gamma}}[\mathbb{F}](\alpha t,t):= \mathcal{E}^n_{\overline{\gamma}; \overline{\gamma}}.
 \end{equation*}
 
 Fix $\alpha$ and $\overline{\gamma}$ and define
 \begin{equation*}
 \bar{T}_n := \sup\left\{  t\in [0,1]: \mathcal{E}^n_{\overline{\gamma}}[\mathbb{F}](\alpha t,t)\leq 4 \bar{M}_0 \right\}.
 \end{equation*}
 $ \bar{T}_n $ is well defined. Indeed, taking $\bar{M}_0:= \sum_{i=1}^I f_i(t,v) \left\langle v \right\rangle_i^2 \mathrm{d}v =  \sum_{i=1}^I f_i(0,v) \left\langle v \right\rangle_i^2 \mathrm{d}v$, for $t=0$, we get $\mathcal{E}^n_{\overline{\gamma}}(0,0) \leq   \mathcal{E}_{\overline{\gamma}}(0,0)=\mathfrak{m}_{0}(0) < 4 \bar{M}_0$.  By continuity of partial sum $\mathcal{E}^n_{\overline{\gamma}}$ with respect to $t$, $\mathcal{E}^n_{\overline{\gamma}}(\alpha t,t) \leq 4 \bar{M}_0$ on a slightly larger time interval $t\in[0,t_n)$, $t_n>0$, and thus $\bar{T}_n>0$.

\subsection*{ODI for $ \mathcal{E}^n_{\overline{\gamma}}$}
Taking time derivative of $ \mathcal{E}^n_{\overline{\gamma}}$  yields 
\begin{equation*}
\frac{\mathrm{d}}{\mathrm{d} t} \mathcal{E}^n_{\overline{\gamma}} = \alpha \sum_{k=1}^n \frac{(\alpha t)^{k-1}}{(k-1)!} \mathfrak{m}_{\overline{\gamma} k} + \sum_{k=0}^{k_0-1}  \frac{(\alpha t)^{k}}{k!}  \mathfrak{m}_{\overline{\gamma} k}' +  \sum_{k=k_0}^n \frac{(\alpha t)^{k}}{k!}   \mathfrak{m}_{\overline{\gamma} k}'.
\end{equation*}
For the first term we simply re-index the sum and use definition of shifted partial sum, and for the last one we use polynomial moment ODI \eqref{poly ODI gen}, which together implies
\begin{multline}\label{pomocna 11}
\frac{\mathrm{d}}{\mathrm{d} t} \mathcal{E}^n_{\overline{\gamma}} \leq \alpha \,  \mathcal{E}^n_{\overline{\gamma}; \overline{\gamma}} + \sum_{k=0}^{k_0-1}  \frac{(\alpha t)^{k}}{k!}  \mathfrak{m}_{\overline{\gamma} k}' - K_1  \sum_{k=k_0}^n \frac{(\alpha t)^{k}}{k!}  
 \mathfrak{m}_{\overline{\gamma} k+ \overline{\gamma}  }\\+   K_2  \sum_{k=k_0}^n \frac{(\alpha t)^{k}}{k!}  \left( \max_{1\leq i, j \leq I} \boldsymbol{\mathcal{C}}^{ij}_{\frac{\overline{\gamma} k}{2}} \right) \sum_{\ell=1}^{\ell_{k}} \left( \begin{matrix}
 k\\ \ell 
 \end{matrix} \right) \left( \mathfrak{m}_{\overline{\gamma} \ell+\overline{\gamma}}  \mathfrak{m}_{\overline{\gamma} k- \overline{\gamma} \ell} + \mathfrak{m}_{\overline{\gamma} k- \overline{\gamma} \ell+\overline{\gamma}} \mathfrak{m}_{\overline{\gamma} \ell}\right)
\\ =: \alpha \,  \mathcal{E}^n_{\overline{\gamma}; \overline{\gamma}} + S_0 - K_1 S_1 + K_2 \left(S_{2_1} + S_{2_2} \right).
\end{multline}
\subsubsection*{Term $S_0$}  From polynomial moment generation estimate \eqref{poly gen max t} we can bound polynomial moment of any order, as well as its derivative by means of \eqref{ODI F}. In particular,
\begin{equation*}
\mathfrak{m}_{\overline{\gamma} k}  \leq \mathfrak{B}^{\mathfrak{m}} \ \max_{t>0}\{ 1, t^{-{k}}\},  \quad  \mathfrak{m}_{\overline{\gamma} k}'  \leq B_{\overline{\gamma} k} \mathfrak{B}^{\mathfrak{m}} \ \max_{t>0}\{ 1, t^{-{k}}\}.
\end{equation*}
Denote
\begin{equation*}
\bar{c}_{k_0} := \max_{k\in \{0,\dots k_0-1\}} \left\{ \mathfrak{B}^{\mathfrak{m}},  B_{\overline{\gamma} k} \mathfrak{B}^{\mathfrak{m}}  \right\}. 
\end{equation*}
For $S_0$ taking $t\leq 1$, we have $  \mathfrak{m}_{\overline{\gamma} k}' \leq \bar{c}_{k_0} t^{-k}$ and therefore
\begin{equation*}
S_0 := \sum_{k=0}^{k_0-1}  \frac{(\alpha t)^{k}}{k!}  \mathfrak{m}_{\overline{\gamma} k}' \leq \bar{c}_{k_0} \sum_{k=0}^{k_0-1}  \frac{\alpha^{k}}{k!} \leq \bar{c}_{k_0} e^{\alpha} \leq 2 \bar{c}_{k_0},
\end{equation*}
for $\alpha$ such that
\begin{equation}\label{gen alpha 1}
e^\alpha \leq 2.
\end{equation}
\subsubsection*{Term $S_1$}  Using boundedness of $\mathfrak{m}_{\overline{\gamma} k+ \overline{\gamma}  }$, we can write
\begin{equation*}
S_1:= \sum_{k=k_0}^n \frac{(\alpha t)^{k}}{k!} \mathfrak{m}_{\overline{\gamma} k+ \overline{\gamma}  } = \mathcal{E}^n_{\overline{\gamma}; \overline{\gamma}} -  \sum_{k=0}^{k_0-1} \frac{(\alpha t)^{k}}{k!} \mathfrak{m}_{\overline{\gamma} k+ \overline{\gamma}  } 
\geq  \mathcal{E}^n_{\overline{\gamma}; \overline{\gamma}} - 2 \bar{c}_{k_0} \frac{1}{t},
\end{equation*}
for $\alpha$ chosen as in \eqref{gen alpha 1}.
\subsubsection*{Term $S_2$} Terms $S_{2_1}$ and $S_{2_2}$ are treated in the same fashion. We will detail calculation for $S_{2_1}$.  We first reorganize the terms in sum and  get
\begin{multline*}
S_{2_1}:=\sum_{k=k_0}^n \frac{(\alpha t)^{k}}{k!}  \left( \max_{1\leq i, j \leq I}  \boldsymbol{\mathcal{C}}^{ij}_{\frac{\overline{\gamma} k}{2}} \right) \sum_{\ell=1}^{\ell_{k}} \left( \begin{matrix}
k\\ \ell 
\end{matrix} \right)  \mathfrak{m}_{\overline{\gamma} \ell+\overline{\gamma}}  \mathfrak{m}_{\overline{\gamma} k- \overline{\gamma} \ell} 
\\
= \sum_{k=k_0}^n  \left( \max_{1\leq i, j \leq I} \boldsymbol{\mathcal{C}}^{ij}_{\frac{\overline{\gamma} k}{2}} \right) \sum_{\ell=1}^{\ell_{k}} \frac{(\alpha t)^{\ell}  \mathfrak{m}_{\overline{\gamma} \ell+\overline{\gamma}}}{\ell!}   \frac{(\alpha t)^{k-\ell}\mathfrak{m}_{\overline{\gamma} k- \overline{\gamma} \ell} }{(k-\ell)!} \leq \left( \max_{1\leq i, j \leq I} \boldsymbol{\mathcal{C}}^{ij}_{\frac{\overline{\gamma} k_0}{2}}  \right) \mathcal{E}^n_{\overline{\gamma}; \overline{\gamma}} \, \mathcal{E}^n_{\overline{\gamma}}.
\end{multline*}
since constant $\boldsymbol{\mathcal{C}}^{ij}_{\frac{\overline{\gamma} k}{2}} $ decays with respect to $k$,  for any $i,j$ and large enough $k_0\ge 2\frac{k_*}{\overline{\gamma}}$, with $k_*$ from \eqref{kstar} to ensure \eqref{povzner constant prop},    and therefore $\boldsymbol{\mathcal{C}}^{ij}_{\frac{\overline{\gamma} k}{2}} \leq \boldsymbol{\mathcal{C}}^{ij}_{ k_*}$.
Gathering all estimates together, \eqref{pomocna 11} becomes
\begin{equation}\label{pomocna 25}
\frac{\mathrm{d}}{\mathrm{d} t} \mathcal{E}^n_{\overline{\gamma}} \leq \alpha \,  \mathcal{E}^n_{\overline{\gamma}; \overline{\gamma}} + 2 \bar{c}_{k_0} - K_1 \left(  \mathcal{E}^n_{\overline{\gamma}; \overline{\gamma}} - 2 \bar{c}_{k_0} \frac{1}{t} \right) +   K_2  \left( \max_{1\leq i, j \leq I}  \boldsymbol{\mathcal{C}}^{ij}_{k_*}  \right) \mathcal{E}^n_{\overline{\gamma}; \overline{\gamma}} \, \mathcal{E}^n_{\overline{\gamma}},
\end{equation}
for $\alpha$ satisfying \eqref{gen alpha 1}.

\subsection*{Bound on $ \mathcal{E}^n_{\overline{\gamma}}$} Consider $t\in[0,\bar{T}_n]$. On this interval, $\mathcal{E}^n_{\overline{\gamma}}(\alpha t,t)\leq 4 \bar{M}_0$, as well as since $\bar{T}_n\leq 1$ yields $t^{-1}\geq 1$, which implies for \eqref{pomocna 25} the following estimate
\begin{equation*}
\frac{\mathrm{d}}{\mathrm{d} t} \mathcal{E}^n_{\overline{\gamma}} \leq  - \mathcal{E}^n_{\overline{\gamma}; \overline{\gamma}} \left( - \alpha + K_1 - K_2  \left( \max_{1\leq i, j \leq I} { \boldsymbol{\mathcal{C}}^{ij}_{k_*} } \right) 4 \bar{M}_0\right)+    \frac{ 2 \bar{c}_{k_*} }{\left(1+  K_1 \right){t}}.  
\end{equation*}
Since $\boldsymbol{\mathcal{C}}^{ij}_{\frac{\overline{\gamma} k_0}{2}}  $ converges to zero as  $k_0\ge 2\frac{k_*}{\overline{\gamma}}$, uniformly $i, j$, so choosing such large  $k_0$ and small enough $\alpha$ such that 
\begin{equation*}
 - \alpha + K_1 - K_2   \, \left( \max_{1\leq i, j \leq I}{ \boldsymbol{\mathcal{C}}^{ij}_{k_*}}  \right)  4 \bar{M}_0 > \frac{K_1}{2}.
\end{equation*}
with $K_1=K_1(k_*)$, yields 
\begin{equation*}
\frac{\mathrm{d}}{\mathrm{d} t} \mathcal{E}^n_{\overline{\gamma}} \leq  - \frac{K_1}{2} \mathcal{E}^n_{\overline{\gamma}; \overline{\gamma}} +   \frac{ K_3}{t},
\end{equation*}
for   $K_3(k_*) :=2 \bar{c}_{k_*} \left(1+  K_1(k_*) \right)$. Finally, shifted moment can be bounded as follows
\begin{equation*}
\mathcal{E}^n_{\overline{\gamma}; \overline{\gamma}}(\alpha t,t) = \sum_{k=1}^{n+1} \frac{ (\alpha t)^k \mathfrak{m}_{\overline{\gamma} k}(t) }{k!} \frac{k}{\alpha t} \geq  \frac{1}{\alpha t}  \sum_{k=2}^{n} \frac{ (\alpha t)^k \mathfrak{m}_{\overline{\gamma} k}(t) }{k!} \geq  \frac{\mathcal{E}^n_{\overline{\gamma}}(\alpha t,t) - \bar{M}_0}{\alpha t},
\end{equation*}
that yields
\begin{equation*}
\frac{\mathrm{d}}{\mathrm{d} t} \mathcal{E}^n_{\overline{\gamma}} \leq  - \frac{K_1}{2\alpha t} \left( \mathcal{E}^n_{\overline{\gamma}} - \bar{M}_0 -  \frac{2\alpha}{K_1} { K_3} \right).
\end{equation*}
Now we choose $\alpha$ small enough so that
 $$
\bar{M}_0 +  \frac{2\alpha}{K_1} { K_3} < 2 \bar{M}_0, \quad \text{or, equivalently} \quad  \alpha=\alpha({k_*})<\frac{ K_1(k_*) \bar{M}_0}{2  K_3(k_*)},
$$
which implies 
\begin{equation*}
\frac{\mathrm{d}}{\mathrm{d} t} \mathcal{E}^n_{\overline{\gamma}}(\alpha t,t) \leq  - \frac{K_1}{2\alpha t} \left( \mathcal{E}^n_{\overline{\gamma}}(\alpha t,t) - 2\bar{M}_0  \right).
\end{equation*}
As in \cite{GambaTask18}, integrating this differential inequality with an integrating factor $t^{\frac{K_1}{2\alpha}}$, yields 
\begin{equation}\label{pomocna 26}
\mathcal{E}^n_{\overline{\gamma}}(\alpha t,t)  \leq \max\left\{  \mathcal{E}^n_{\overline{\gamma}}(0,0), 2\bar{M}_0  \right\} \leq 2 \bar{M}_0, \qquad \forall t\in [0,\bar{T}_n],
\end{equation}
since $\mathcal{E}_{\overline{\gamma}}(0,0)=\mathfrak{m}_{0}(0)<2\bar{M}_0$.

\subsection*{Conclusion II}  From  \eqref{pomocna 26} the following bound on 
$\mathcal{E}^n_{\overline{\gamma}}(\alpha t,t)$  holds
\begin{equation*}
\mathcal{E}^n_{\overline{\gamma}}(\alpha t,t)  \leq 2 \bar{M}_0 < 4\bar{M}_0, \qquad \forall t \in [0,\bar{T}_n].
\end{equation*}
Exploring the continuity of the partial sum $\mathcal{E}^n_{\overline{\gamma}}(\alpha t,t)$ this inequality holds on a slightly larger interval, which contradicts maximality of $\bar{T}_n$, unless $\bar{T}_n=1$. Therefore, we can conclude $\bar{T}_n=1$ for all $n \in \mathbb{N}$, or in other words 
\begin{equation*}
\mathcal{E}^n_{\overline{\gamma}}(\alpha t,t) \leq   4\bar{M}_0, \qquad \forall t \in [0,1], \quad \forall n \in \mathbb{N}.
\end{equation*}
Letting $n\rightarrow \infty$, we conclude 
\begin{equation}\label{pomocna 27}
\mathcal{E}^n_{\overline{\gamma}}(\alpha t,t) \leq   4\bar{M}_0, \qquad \forall t \in [0,1].
\end{equation}
In particular, for time $t=1$, \eqref{pomocna 27} can be seen as an initial condition for propagation \eqref{initial data exp prop}, and thus the exponential moment of the order $\overline{\gamma}$ and a rate  $0<\bar{\alpha}\leq \alpha(k_*)$ stays uniformly bounded for all $t>1$,  for $k_*$ as in \eqref{kstar}.

\bigskip

\section{Acknowledgements.} 
The authors would like to thank Professor Ricardo J. Alonso for fruitful discussions on the topic.  This work has been partially supported by NSF grants DMS 1715515 and  RNMS (Ki-Net) DMS-1107444.  Milana Pavi\'c-\v Coli\'c acknowledges the support of Ministry of Education, Science and Technological Development, Republic of Serbia within the Project No. ON174016.   This work was completed while Milana Pavi\'c-\v Coli\'c was a Fulbright Scholar from the University of Novi Sad, Serbia, visiting the Institute of Computational Engineering and Sciences (ICES) at the University of Texas Austin co funded by a JTO Fellowship.  ICES support is also gratefully acknowledged.

\appendix

%\tocless
\section{Existence and Uniqueness Theory for ODE in Banach spaces}\label{Appendix exi and uni}

\begin{theorem}
	\label{Theorem general}
Let $E:=(E,\left\| \cdot \right\|)$ be a Banach space, $\mathcal{S}$ be a bounded, convex and closed subset of $E$, and $\mathcal{Q}:\mathcal{S}\rightarrow E$ be an operator satisfying the following properties:
	\begin{itemize}
		\item[(a)] H\"{o}lder continuity condition
		\begin{equation*}
		\left\| \mathcal{Q}[u] - \mathcal{Q}[v] \right\| \leq C \left\| u-v \right\|^{\beta}, \ \beta \in (0,1), \ \forall u, v \in \mathcal{S};
		\end{equation*}
		\item[(b)] Sub-tangent condition
		\begin{equation*}
		\lim\limits_{h\rightarrow 0+} \frac{\text{dist}\left(u + h \mathcal{Q}[u], \mathcal{S} \right)}{h} =0, \ \forall u \in \mathcal{S};
		\end{equation*}
		\item[(c)] One-sided Lipschitz condition
		\begin{equation*}  
		\left[ \mathcal{Q}[u] - \mathcal{Q}[v], u - v \right] \leq C \left\| u-v \right\|, \ \forall u, v \in \mathcal{S},
		\end{equation*}
		where 	$\left[\varphi,\phi\right]=\lim_{h\rightarrow 0^-} h^{-1}\left(\left\| \phi + h \varphi \right\| - \left\| \phi \right\| \right)$.
	\end{itemize}
Then the equation
\begin{equation*}
\begin{split}
\partial_t  u = \mathcal{Q}[u], \ \text{for} \ t\in(0,\infty), \ \text{with initial data}  \
u(0)= u_0 \ \text{in} \ \mathcal{S},
\end{split}
\end{equation*}
has a unique solution in $C([0,\infty),\mathcal{S})\cap C^1((0,\infty),E)$.
\end{theorem}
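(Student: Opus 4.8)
The plan is to follow the classical Euler-polygonal scheme of Martin for differential equations in Banach spaces, the four ingredients being: a uniform bound on $\mathcal{Q}$, construction of $\varepsilon$-approximate solutions from the sub-tangent condition, convergence of the net of approximate solutions via the one-sided Lipschitz condition, identification of the limit as a genuine $C^1$ solution via H\"older continuity, and uniqueness again from the one-sided Lipschitz condition. First I would record that $\mathcal{Q}$ is bounded on $\mathcal{S}$: fixing $u_*\in\mathcal{S}$, H\"older continuity and boundedness of $\mathcal{S}$ give $\sup_{u\in\mathcal{S}}\|\mathcal{Q}[u]\|\le\|\mathcal{Q}[u_*]\|+C\,(\operatorname{diam}\mathcal{S})^\beta=:M<\infty$. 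It suffices to produce a solution on $[0,T]$ for each $T>0$; uniqueness then glues these into a solution on $[0,\infty)$.

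For fixed $\varepsilon>0$ I would build a piecewise-affine curve $u_\varepsilon\in C([0,T],\mathcal{S})$ as follows. Starting from $u_0$, the sub-tangent condition supplies for each $w\in\mathcal{S}$ a step $h=h(w,\varepsilon)\in(0,\varepsilon)$ and a point $w'\in\mathcal{S}$ with $\|w+h\mathcal{Q}[w]-w'\|\le h\varepsilon$; on the corresponding subinterval I interpolate \emph{linearly} between $w$ and $w'$, which stays in $\mathcal{S}$ by convexity, has velocity $\mathcal{Q}[w]+e$ with $\|e\|\le\varepsilon$, and moves with speed $\le M+\varepsilon$. Iterating yields nodes $0=t_0<t_1<\cdots$; to guarantee the construction reaches $T$ rather than stalling, I would run a Zorn/maximality argument over partial approximate solutions ordered by restriction, using the uniform Lipschitz bound together with closedness of $\mathcal{S}$ to pass to a limit at any accumulation time and then extend past it. The outcome is $u_\varepsilon(t)=u_0+\int_0^t\mathcal{Q}[u_\varepsilon(\rho_\varepsilon(s))]\,\mathrm{d}s+R_\varepsilon(t)$, with $\rho_\varepsilon(s)$ the nearest preceding node, $|s-\rho_\varepsilon(s)|<\varepsilon$ and $\|R_\varepsilon(t)\|\le\varepsilon t$.

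Next, for parameters $\varepsilon,\varepsilon'$ put $\phi(t)=\|u_\varepsilon(t)-u_{\varepsilon'}(t)\|$. Since $u_\varepsilon,u_{\varepsilon'}$ are piecewise $C^1$ and the norm is Lipschitz, $\phi$ is absolutely continuous and $\phi'(t)=[\,u_\varepsilon'(t)-u_{\varepsilon'}'(t),\,u_\varepsilon(t)-u_{\varepsilon'}(t)\,]$ a.e.\ (the one-sided Gateaux derivative of the norm, which exists by monotonicity of difference quotients). Using $u_\varepsilon'=\mathcal{Q}[u_\varepsilon\circ\rho_\varepsilon]+e$, the bracket inequality $[\mathcal{Q}[u]-\mathcal{Q}[v],u-v]\le C\|u-v\|$, and H\"older continuity to absorb the node-displacement error via $\|\mathcal{Q}[u_\varepsilon(\rho_\varepsilon(s))]-\mathcal{Q}[u_\varepsilon(s)]\|\le C(M\varepsilon)^\beta$, I get $\phi'(t)\le C\phi(t)+\omega(\varepsilon,\varepsilon')$ with $\omega(\varepsilon,\varepsilon')\to0$; Gronwall then forces $\sup_{[0,T]}\phi\to0$ as $\varepsilon,\varepsilon'\to0$. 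Hence $u_\varepsilon\to u$ uniformly on $[0,T]$ with $u(t)\in\mathcal{S}$ by closedness. Passing to the limit in the integral identity is legitimate because $\|\mathcal{Q}[u_\varepsilon(\rho_\varepsilon(s))]-\mathcal{Q}[u(s)]\|\le C\|u_\varepsilon(\rho_\varepsilon(s))-u(s)\|^\beta\to0$ uniformly in $s$, giving $u(t)=u_0+\int_0^t\mathcal{Q}[u(s)]\,\mathrm{d}s$; since $s\mapsto\mathcal{Q}[u(s)]$ is continuous (H\"older continuity composed with continuity of $u$), the fundamental theorem of calculus yields $u\in C([0,\infty),\mathcal{S})\cap C^1((0,\infty),E)$ solving $\partial_t u=\mathcal{Q}[u]$, $u(0)=u_0$.

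For uniqueness, if $u,v$ are two such solutions with $u(0)=v(0)$, set $\psi(t)=\|u(t)-v(t)\|$; then $\psi$ is absolutely continuous with $\psi'(t)=[\mathcal{Q}[u(t)]-\mathcal{Q}[v(t)],u(t)-v(t)]\le C\psi(t)$ a.e.\ by the one-sided Lipschitz condition, and $\psi(0)=0$, so Gronwall gives $\psi\equiv0$. I expect the main obstacle to be the first construction step, namely carrying out the polygonal approximation globally on $[0,T]$: this forces one to combine the sub-tangent condition with the velocity bound from boundedness of $\mathcal{Q}$ and the closedness of $\mathcal{S}$ in a maximality argument, and then to keep precise track of the two error sources (the corrections $R_\varepsilon$ and the node displacements $\rho_\varepsilon$) so that, after invoking H\"older continuity, they feed correctly into the Gronwall estimate of the convergence step. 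The remaining steps are routine once the approximate solutions and their error bounds are in place.
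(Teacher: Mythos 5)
Your proposal is correct: it is precisely the classical Euler--polygonal (Martin-type) argument — uniform bound on $\mathcal{Q}$ from H\"older continuity, $\varepsilon$-approximate solutions built from the sub-tangent condition and convexity of $\mathcal{S}$, a Gronwall estimate through the one-sided bracket to get convergence and uniqueness, and identification of the limit as a $C^1$ solution. The paper itself does not prove Theorem~\ref{Theorem general} but defers to \cite{Bressan} and \cite{GambaAlonso18} (based on \cite{Martin}), and your argument is essentially the same proof given in those references.
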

The proof of this Theorem on ODE flows on Banach spaces  can be found in the unpublished notes \cite{Bressan}  or in  \cite{GambaAlonso18}.
\begin{remark}\label{Lip remark}
	In Section \ref{Section Ex Uni proof}, we will concentrate on $E:=L_2^1$. Therefore, for one-sided Lipschitz condition, we will use the following inequality,   
	$$
	\left[\varphi,\phi\right]\leq  \sum_{i=1}^I \int_{ \mathbb{R}^3}  \varphi_i(v) \, \text{sign}(\phi_i (v))\left\langle v \right\rangle_i^2 \mathrm{d}v,
	$$
	for $\varphi=\left[\varphi_i\right]_{1\leq i\leq I}$ and $\phi=\left[\phi_i\right]_{1\leq i\leq I}$, as pointed out in \cite{GambaAlonso18}.
\end{remark}

\section{Upper and lower bound of the cross section}

In this section, we derive an  upper and lower estimate for the non-angular part of the cross section, $\left|v-v_*\right|^{\gamma_{ij}}$, $\gamma_{ij} \in (0,1]$, with $1\leq i, j, \leq I$. First, for the upper estimate, by triangle inequality, we have 
\begin{align}
\sqrt{\frac{m_i}{\sum_{i=1}^I m_i}} \sqrt{\frac{m_j}{\sum_{i=1}^I m_i}} \left|v-v_*\right| \leq \min\left\{ \sqrt{\frac{m_i}{\sum_{i=1}^I m_i}}, \sqrt{\frac{m_j}{\sum_{i=1}^I m_i}}  \right\}  \left|v-v_*\right| \nonumber \\ \leq \min\left\{ \sqrt{\frac{m_i}{\sum_{i=1}^I m_i}}, \sqrt{\frac{m_j}{\sum_{i=1}^I m_i}}  \right\} \left(  \left|v\right| + \left|v_*\right|  \right)  \leq  \sqrt{\frac{m_i}{\sum_{i=1}^I m_i}} \left|v\right| +  \sqrt{\frac{m_j}{\sum_{i=1}^I m_i}} \left|v_*\right| \label{pomocna 14}
\\ \leq  \sqrt{1 + \frac{m_i}{\sum_{i=1}^I m_i} \left|v\right|^2} + \sqrt{1 + \frac{m_j}{\sum_{i=1}^I m_i} \left|v_*\right|^2}. \nonumber
\end{align}
Therefore,
\begin{equation}\label{estimate on u^g}
\left|v-v_*\right|^{\gamma_{ij}} \leq  \left( \frac{\sum_{i=1}^I m_i}{\sqrt{m_i m_j}} \right)^{\gamma_{ij}} \left(\left\langle v \right\rangle_i^{\gamma_{ij}} + \left\langle v_* \right\rangle_j^{\gamma_{ij}}\right),
\end{equation}
for $\gamma_{ij}\in(0,1]$, and any $i, j \in \left\{1,\dots,I \right\}$. 

From \eqref{pomocna 14}  it also follows 
\begin{multline*}
\sqrt{\frac{m_i}{\sum_{i=1}^I m_i}} \sqrt{\frac{m_j}{\sum_{i=1}^I m_i}} \left|v-v_*\right|  \leq  \sqrt{\frac{m_i}{\sum_{i=1}^I m_i}} \left|v\right| +  \sqrt{\frac{m_j}{\sum_{i=1}^I m_i}} \left|v_*\right| 
\\
= \left( \frac{m_i}{\sum_{i=1}^I m_i} \left|v\right|^2  + \frac{m_j}{\sum_{i=1}^I m_i} \left|v_*\right|^2 + 2 \frac{\sqrt{m_i m_j}}{\sum_{i=1}^I m_i} \left|v\right|  \left|v_*\right|  \right)^{1/2}
\leq \left\langle v \right\rangle_i \left\langle v_* \right\rangle_j.
\end{multline*}
Therefore,
\begin{equation}\label{estimate on u^g product}
\left|v-v_*\right|^{\gamma_{ij}} \leq  \left( \frac{\sum_{i=1}^I m_i}{\sqrt{m_i m_j}} \right)^{\gamma_{ij}} \left\langle v \right\rangle_i^{\gamma_{ij}}  \left\langle v_* \right\rangle_j^{\gamma_{ij}},
\end{equation}
for $\gamma_{ij} \in(0,1]$ and $1\leq i, j \leq I$. 

Than, for the lower estimate we use ideas of Lemma  2.1 in \cite{GambaAlonsoMaja17}, to prove the following Lemma. Note that here  functions $F$ do not need to be solutions of the Boltzmann problem. Moreover, this lower bound may not hold for $F$ being a singular measure, since the estimate degenerates as $c$ goes to zero.
\begin{lemma}\label{lemma lower bound}
	Let $\gamma_{ij} \in [0,2]$, for any $i, j \in \left\{1,\dots, I \right\}$, and assume \\
	 $0\leq \left\{ F(t) = \left[ f_1(t) \dots f_I(t) \right]^T \right\}_{t\geq 0} \subset L_2^1$ satisfies
	\begin{multline*}
	c \leq \sum_{i=1}^I \int_{\mathbb{R}^3} m_i \, f_i (t, v)  \mathrm{d}v \leq C, \qquad c \leq \sum_{i=1}^I \int_{\mathbb{R}^3} f_i (t, v) m_i \left|v\right|^2 \mathrm{d}v \leq C,\\ \sum_{i=1}^I \int_{\mathbb{R}^3} f_i (t, v) m_i v \mathrm{d}v=0, 
	\end{multline*} 
	for some positive constants $c$ and $C$. Assume also boundedness of the moment
	\begin{equation*}
	\sum_{i=1}^I \int_{\mathbb{R}^3} f_i (t, v) m_i \left|v\right|^{2+\varepsilon}  \mathrm{d}v \leq B, \quad \varepsilon>0. 
	\end{equation*}
	Then, there exists a constant $c_{lb}$  characterized in \eqref{clb}, such that
	\begin{equation}\label{lower bound}
	\sum_{i=1}^I \int_{\mathbb{R}^3} m_i f_i (t, w) \left|v-w\right|^{\gamma_{ij}} \mathrm{d}w \geq c_{lb} \left\langle v \right\rangle_j^{\overline{\gamma}}, 
	\end{equation}
	for any $j \in \{1,\dots,I\}$,  with $\overline{\gamma}=\min_{1\leq i, j \leq I}\gamma_{ij}$.
\end{lemma}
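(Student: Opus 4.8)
\textbf{Proof proposal for Lemma~\ref{lemma lower bound}.}
The plan is to adapt the argument of Lemma~2.1 in \cite{GambaAlonsoMaja17} to the mixture setting, the key point being that the hypotheses of this lemma concern the \emph{aggregated} quantities $\sum_i \int m_i f_i$, $\sum_i \int m_i |v|^2 f_i$, $\sum_i \int m_i v f_i$ and $\sum_i \int m_i |v|^{2+\varepsilon} f_i$, which behave exactly like the single-species mass, energy, momentum and $(2+\varepsilon)$-moment of a single density $F := \sum_i m_i f_i$. So the first step is to introduce the scalar nonnegative function $F(t,w) := \sum_{i=1}^I m_i f_i(t,w)$ and observe that the assumptions say precisely that $c \le \int_{\mathbb{R}^3} F \le C$, $c \le \int_{\mathbb{R}^3} |w|^2 F \le C$, $\int_{\mathbb{R}^3} w\, F = 0$ and $\int_{\mathbb{R}^3} |w|^{2+\varepsilon} F \le B$, while the left-hand side of \eqref{lower bound} equals $\int_{\mathbb{R}^3} F(t,w)\,|v-w|^{\gij}\,\mathrm{d}w$. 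Thus the problem reduces to a purely scalar statement about $F$.

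Second, for the scalar statement I would follow the standard route: split the integral over $\mathbb{R}^3$ into the region $\{|v-w| \ge \tfrac12 |v|\}$ and its complement, and discard the second region (the integrand is nonnegative). On the first region, $|v-w|^{\gij} \ge (\tfrac12|v|)^{\gij}$ when $\gij \le \gmax$ is handled by also splitting $v$ into $|v|\le 1$ and $|v|\ge 1$ so as to produce the bracket weight $\langle v\rangle_j^{\gmax} = (1 + \tfrac{m_j}{\sum_\ell m_\ell}|v|^2)^{\gmax/2}$ on the right rather than $|v|^{\gij}$; for $|v|\le 1$ one uses instead the trivial lower bound coming from mass and energy control, and for $|v|\ge 1$ one uses $|v-w|^{\gij}\gtrsim |v|^{\gmax}$ on the far region. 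Third, to control the measure of $F$ sitting in the far region $\{|v-w|\ge \tfrac12|v|\}$ uniformly from below, I would use the second-moment lower bound together with the $(2+\varepsilon)$-moment upper bound and a Chebyshev/de~la~Vall\'ee-Poussin type argument: the mass and second moment of $F$ cannot concentrate near $w\approx v$ because the $(2+\varepsilon)$-moment bound prevents it from escaping to infinity and the lower energy bound forces a definite amount of mass to live at velocities of order one; combining these gives a constant $\clb = \clb(c,C,B,\varepsilon,\gmax, m_1,\dots,m_I)$ with $\int_{\{|v-w|\ge |v|/2\}} F(t,w)\,\mathrm{d}w \ge \clb$, uniformly in $t$ and in $v$. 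Explicit tracking of $\clb$ through these inequalities yields the characterization referred to in \eqref{clb}.

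The main obstacle I anticipate is making the constant $\clb$ genuinely uniform in $v \in \mathbb{R}^3$ and exhibiting the correct $\langle v\rangle_j^{\gmax}$ weight (as opposed to $|v|^{\gij}$ or $|v|^{\gmax}$): one must be careful that the degeneracy as $c \to 0$ does not hide a hidden dependence on $|v|$, and that the passage $\gij \le \gmax$ does not cost a $v$-dependent factor. This is resolved by the $|v|\le 1$ versus $|v|\ge 1$ dichotomy noted above, using that on the bounded region $\langle v\rangle_j^{\gmax}$ is comparable to a constant and on the unbounded region $\langle v\rangle_j^{\gmax}$ is comparable to $|v|^{\gmax}$, with comparison constants depending only on the masses. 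A secondary (but routine) point is that the argument does not require $F$ to solve the Boltzmann system and does break down for singular measures, consistent with the statement; this is automatically respected because the only inputs are the four moment bounds.
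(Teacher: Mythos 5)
Your proposal takes a genuinely different route from the paper for part of the argument, and it also leaves a gap at the most delicate point.  Your opening observation is correct and matches the paper: the hypotheses and conclusion only see the aggregate $F(t,w)=\sum_{i}m_i f_i(t,w)$, so the lemma is really a scalar statement about $F$ with prescribed mass, centred momentum, energy and $(2+\varepsilon)$-moment; the paper keeps the sum notation but reasons in exactly this way.  Where you diverge is the decomposition.  You split the $w$-integral into the regions $\{|v-w|\ge |v|/2\}$ and its complement and discard the complement.  The paper instead splits the parameter $v$: for $|v|\ge r_*$ (with $r_*$ explicit) it applies Young's inequality $|v-w|^{\gij}\ge \ct\,|v|^{\gij}-|w|^{\gij}$ globally in $w$ (no near/far split at all), and for $|v|<r_*$ it restricts the $w$-integral to a fixed ball $\{|v-w|\le R\}$, lower-bounds the centred second moment $\int_{\{|v-w|\le R\}}F\,|v-w|^2\,\mathrm{d}w\ge c/2$ there (this is exactly where the $(2+\varepsilon)$-moment is used, to prevent the second moment from escaping to infinity), and finishes with the pointwise comparison $|v-w|^{\gij}\ge R^{\gij-2}\,|v-w|^2$ valid on that ball since $\gij\le 2$.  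Your far-region argument is morally comparable to the paper's large-$|v|$ branch, though it needs the auxiliary far-region mass lower bound which the paper avoids.

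The gap is in your treatment of small $|v|$.  On the far region $\{|v-w|\ge|v|/2\}$ you can only lower-bound $|v-w|^{\gij}$ by $(|v|/2)^{\gij}$, which tends to zero as $|v|\to 0$, so even with the far-region mass bounded below uniformly in $v$, this gives nothing for $|v|\le 1$ (where the right side $\japvj^{\gmax}$ is bounded below by a positive constant).  You then appeal to a ``trivial lower bound coming from mass and energy control'', but this is not trivial, and in fact mass, energy and momentum bounds alone do not suffice: one can place two atoms on a line through the origin (to satisfy $\int w\,F=0$), put almost all the mass at the point $v$, and send the opposing atom to infinity with vanishing mass while keeping the energy fixed, so that $\int F(w)\,|v-w|^{\gij}\,\mathrm{d}w$ degenerates.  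It is precisely the $(2+\varepsilon)$-moment bound that rules this out, and the mechanism by which it enters (lower bound on the centred second moment inside a bounded ball, followed by the $|v-w|^{\gij}\ge R^{\gij-2}|v-w|^2$ comparison) is the substantive content of the proof.  Your sketch mentions a Chebyshev/de la Vall\'ee-Poussin control but aims it at the far-region mass, which as noted does not help for small $|v|$; to close the proof you would need to reproduce something like the paper's bounded-ball second-moment argument for the regime $|v|<r_*$.  Incidentally, be aware that the exponent on the right-hand side is also delicate: on the branch $|v|\ge r_*$ the natural lower bound from either your far-region estimate or Young's inequality is of order $|v|^{\gij}$, not $|v|^{\gmax}$, and upgrading the exponent (or settling for $\gijmin$) requires separate care.
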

\begin{proof}
	Case $\gamma_{ij}=0$ is trivial, so take $\gamma_{ij} \in (0,2]$, for any  $i,j,=1,\dots,I$. 
	
Let us  denote the open ball centered at the origin and of radius $r>0$ with $B(0,r) \subset \mathbb{R}^3$. We consider separately cases when  $v \in B(0,r)$ and $v \in B(0,r)^c$, with $r$   to be chosen later on depending on constants $c$, $C$, and $\gamma_{ij}$.
  
	For  $v \in B(0,r)^c$ we first consider the whole domain $\mathbb{R}^3$, and write, by the Young inequality, for any $v \in \mathbb{R}^3$ and $\gamma_{ij} \in (0,2]$ 
	\begin{equation*}
	\sum_{i=1}^I  m_i	 \int_{\mathbb{R}^3} f_i(t,w) \left|v-w\right|^{\gamma_{ij}} \mathrm{d} w  \geq \sum_{i=1}^I  m_i	 \int_{\mathbb{R}^3} f_i(t,w) \left(\tilde{c}  \left|v\right|^{\gamma_{ij}} - \left|w\right|^{\gamma_{ij}} \right) \mathrm{d} w,
		\end{equation*}
	 where $\tilde{c} =\min_{1\leq i, j \leq I} \left(\min\{1, 2^{1-\gamma_{ij}}\} \right)$. 	
	Since
	\begin{multline*}
 \sum_{i=1}^I  m_i	 \int_{\mathbb{R}^3} f_i(t,w) \left|w\right|^{\gamma_{ij}} \mathrm{d} w \\
 \leq  \sum_{i=1}^I  m_i	 \int_{B(0,1)} f_i(t,w)  \mathrm{d} w +  \sum_{i=1}^I  m_i	 \int_{B(0,1)^c} f_i(t,w) \left|w\right|^2 \mathrm{d} w \leq 2 C,
	\end{multline*}
	we obtain that for any $v\in \mathbb{R}^3$	it holds
		\begin{equation}\label{pomocna 1}
	\sum_{i=1}^I  m_i	 \int_{\mathbb{R}^3} f_i(t,w) \left|v-w\right|^{\gamma_{ij}} \mathrm{d} w  \\    \geq \tilde{c}   \sum_{i=1}^I \left|v\right|^{\gamma_{ij}}   m_i	 \int_{\mathbb{R}^3} f_i(t,w)  \mathrm{d} w - 2C.
	\end{equation}
 Now define the following two parameters, both smaller than one,
\begin{equation*}
\overline{\bf m} := \sqrt{ \frac {\min_{1\le j\le I} {m_j} }{\sum_{i=1}^I m_i}}, \quad \text{and} \quad \wideBar{\bf m} := \sqrt{ \frac {\max_{1\le j\le I} {m_j} }{\sum_{i=1}^I m_i}}.
\end{equation*} 	
In addition, we define the parameter $r_*$ by 
\begin{equation}\label{r star}
\overline{\bf m}  \,r_*\ :=\    \left(  \frac{{4}C}{\tilde c \, c}  \right)^{\frac 1{\overline\gamma}}   \ge\ 1,
\end{equation}
since $C\geq c$ by assumption and   $\tilde c \leq1$.\\
 Hence, for any $i,j=1,\dots,I$ and $v\in \mathbb{R}^3   \cap   B(0,r)^c$ we have the following lower bound
\begin{equation*}
\left| v  \right|^{\gamma_{ij}}  =  \left| v  \right|^{\gamma_{ij}} \left(  \mathbbm{1}_{\left|v\right|<1}(v) +  \mathbbm{1}_{\left|v\right|\geq1}(v) \right)  \geq   \left| v  \right|^{\overline{\gamma}},
\end{equation*}
for  any $r\geq r_*\geq 1$, where
\begin{equation*}
{\overline{\gamma}}= \min_{ \color{black} 1\leq i,j \leq I} \gamma_{ij}. 
\end{equation*}
Therefore,  using the choice of $r_*$ with the inequality \eqref{r star}, \eqref{pomocna 1} becomes
\begin{multline*}
\sum_{i=1}^I  m_i	 \int_{\mathbb{R}^3} f_i(t,w) \left|v-w\right|^{\gamma_{ij}} \mathrm{d} w   \ge  \tilde{c}  \,  c \,  \left|v\right|^{\overline{\gamma}}  - 2C \\
\ge \frac{\tilde{c} \,c }{ 2}\,  \left(\sqrt{\frac{m_j}{\sum_{i=1}^I m_i}} \, \left|v\right|\right)^{\overline{\gamma}}  +  \frac{\tilde{c} \,c }{ 2}  { \left( \overline{\bf m}\, r_*\right)^{\overline\gamma} } -  2C,
\end{multline*}
for every  $j\in\{1,\dots,I\}$. Therefore, for $v \in B(0,r_*)^c$ we have
\begin{equation}\label{pomocna 13}
\sum_{i=1}^I  m_i	 \int_{\mathbb{R}^3} f_i(t,w) \left|v-w\right|^{\gamma_{ij}} \mathrm{d} w   
\geq \frac{\tilde{c}  \, c}2   \left( \sqrt{\frac{m_j}{\sum_{i=1}^I m_i}}   \left|v\right|\right)^{\overline{\gamma}},
\end{equation}
for any  $j\in\{1,\dots,I\}$. 

On the other hand, let us study the case $v \in B(0,r^*)$. First  note that for any $R>0$,
	\begin{multline}\label{pomocna 3}
	\sum_{i=1}^I  m_i	 \int_{\left|v-w\right|\leq R} f_i(t,w) \left|v-w\right|^2 \mathrm{d} w \\=\sum_{i=1}^I  m_i	 \int_{\mathbb{R}^3} f_i(t,w) \left|v-w\right|^2 \mathrm{d} w - \sum_{i=1}^I  m_i \int_{\left|v-w\right|\geq R} f_i(t,w) \left|v-w\right|^2 \mathrm{d} w \\ \geq c \left|v\right|^2+ c  - \sum_{i=1}^I  m_i \int_{\left|v-w\right|\geq R} f_i(t,w) \left|v-w\right|^2 \mathrm{d} w \\
	\geq c (1+ \left|v\right|^2) - \frac{1}{R^{\varepsilon}} \sum_{i=1}^I  m_i \int_{\left|v-w\right|\geq R} f_i(t,w) \left|v-w\right|^{2+\varepsilon} \mathrm{d} w.
	\end{multline}	
	Next, we have 
	\begin{multline*}
	\sum_{i=1}^I  m_i \int_{\left|v-w\right|\geq R} f_i(t,w) \left|v-w\right|^{2+\varepsilon} \mathrm{d} w \leq 2^{1+\varepsilon} \max\{C, B\} \left(1+ \left|v\right|^{2+\varepsilon}\right) \\ \leq 2^{1+\varepsilon} \max\{C,B\} \left(1+ \left|v\right|^{2}\right)^{\frac{2+\varepsilon}{2}} \leq 2^{1+\varepsilon} \max\{C, B\} \left(1+ r_*^{2}\right)^{\frac{2+\varepsilon}{2}}.
	\end{multline*}	
	Choosing $R:=R(r_*, c, C, B)>0$	sufficiently large such that 
	\begin{equation}\label{R}
	 \frac{1}{R^\varepsilon}2^{1+\varepsilon} \max\{C, B\} \left(1+ r_*^{2}\right)^{\frac{2+\varepsilon}{2}} \leq \frac{c}{2}, \quad \text{or} \quad
	R \geq \left( 2^{2+\varepsilon} \left(\frac{\max\{C,B\}}{c} \right)\left(1+ r_*^{2}\right)^{\frac{2+\varepsilon}{2}}\right)^{\frac{1}{\varepsilon}},
	\end{equation}
	from \eqref{pomocna 3} we have 
	\begin{equation*}
	\sum_{i=1}^I  m_i	 \int_{\left|v-w\right|\leq R} f_i(t,w) \left|v-w\right|^2 \mathrm{d} w 
	\geq \frac{c}{2} \qquad \forall v \in B(0,r_*).
	\end{equation*}
	Moreover, for this choice of $R$, for any $\gamma_{ij}\in(0,2]$ we have
	\begin{multline*}
	\sum_{i=1}^I  m_i	 \int_{\mathbb{R}^3} f_i(t,w) \left|v-w\right|^{\gamma_{ij}} \mathrm{d} w  \geq \sum_{i=1}^I  m_i	 \int_{\left|v-w\right|\leq R} f_i(t,w) \left|v-w\right|^{\gamma_{ij}} \mathrm{d} w \nonumber \\
	\geq  \sum_{i=1}^I  R^{\gamma_{ij} -2  }  \, m_i	 \int_{\left|v-w\right|\leq R} f_i(t,w) \left|v-w\right|^2 \mathrm{d} w.
	\end{multline*}
		Since $R\geq 1$, we can bound $R^{\gamma_{ij} -2  } \geq R^{ { \overline{\gamma}} -2}$, which yields the estimate
	\begin{equation}\label{pomocna 17}
	\sum_{i=1}^I  m_i	 \int_{\mathbb{R}^3} f_i(t,w) \left|v-w\right|^{\gamma_{ij}} \mathrm{d} w    \geq \frac{c}{2 R^{2-{ \overline{\gamma}}}}, \qquad \forall v \in B(0,r_*).
	\end{equation}

	Finally, summarizing \eqref{pomocna 13} and \eqref{pomocna 17},
	\begin{multline*}
	\sum_{i=1}^I  m_i	 \int_{\mathbb{R}^3} f_i(t,w) \left|v-w\right|^{\gamma_{ij}} \mathrm{d} w  \geq \frac{c}{2 R^{2-{ \overline{\gamma}}}} \mathbbm{1}_{B(0,r_*)}(v) \\ + \frac{\tilde{c}  \, c}{2}  \left( \sqrt{\frac{m_j}{\sum_{i=1}^I m_i}}   \left|v\right|\right)^{\overline{\gamma}}\mathbbm{1}_{B(0,r_*)^c}(v)\\
\geq 	\frac{\tilde{c}  \, c}{2 R^{2-{ \overline{\gamma}}}} \left( \mathbbm{1}_{B(0,r_*)}(v)+  \left( \sqrt{\frac{m_j}{\sum_{i=1}^I m_i}}   \left|v\right|\right)^{\overline{\gamma}}\mathbbm{1}_{B(0,r_*)^c}(v) \right).
	\end{multline*}
	Then there exists a constant $c_{lb}$ such that
	\begin{equation*}
	\frac{\tilde{c}  \, c}{2 R^{2- { \overline{\gamma}}}} \left( \mathbbm{1}_{B(0,r_*)}(v)+  \left( \sqrt{\frac{m_j}{\sum_{i=1}^I m_i}}   \left|v\right|\right)^{\overline{\gamma}}\mathbbm{1}_{B(0,r_*)^c}(v) \right) \geq c_{lb} \left\langle v \right\rangle_j^{\overline{\gamma}}.
	\end{equation*}
	for any $j\in \left\{1,\dots,I\right\}$. In fact, one may even construct $c_{lb}$ in order to ensure the last inequality. For example, $c_{lb}$ can take the following value
\begin{multline}\label{clb}
	c_{lb}
= \frac{c}{2} \tilde{c}   \left( 2^{2+\varepsilon} \left(\frac{\max\{C,B\}}{c} \right)\left(1+ { \frac{1}{\overline{\bf m}^2}} \left( \frac{ {4} \, C}{\tilde{c}  \, c} \right)^\frac{2}{\overline{\gamma}}\right)^{\frac{2+\varepsilon}{2}}\right)^{\frac{-2+{ \overline{\gamma}}}{\varepsilon}}
\\  \times \left( 1 +  { \left( \frac{{\wideBar{\bf m}} }{\overline{\bf m}}  \right)^2} \left( \frac{ {4} \, C}{\tilde{c}  \, c} \right)^\frac{2}{{\overline{\gamma} }} \right)^{-\overline{\gamma}/2},
	\end{multline}
by taking into account \eqref{r star} and \eqref{R}. 
\end{proof}

\section{Some technical results }

\begin{lemma}[Polynomial inequality I, Lemma 2 from \cite{GambaBobPanf04}] \label{binomial}
Assume $p>1$, and let $n_p=\lfloor\frac{p+1}{2}\rfloor$. Then for all $x, y>0$, the following inequality holds
\begin{equation*}
\left(x+y\right)^p - x^p - y^p \leq \sum_{n=1}^{n_p} \left( \begin{matrix}
p \\ n 
\end{matrix} \right) \left(x^n y^{p-n} + x^{p-n}  y^n \right).
\end{equation*}	
\end{lemma}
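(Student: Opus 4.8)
The final statement to prove is Lemma~\ref{binomial}, the polynomial inequality
\[
(x+y)^p - x^p - y^p \le \sum_{n=1}^{n_p} \binom{p}{n}\bigl(x^n y^{p-n} + x^{p-n} y^n\bigr),
\]
for $p>1$, $n_p = \lfloor \tfrac{p+1}{2}\rfloor$, and all $x,y>0$.

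\medskip

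\textbf{Approach.} The plan is to reduce to a one-variable problem by homogeneity and then compare with the formal binomial series. By scaling (dividing both sides by $y^p$ and setting $t = x/y$), the inequality is equivalent to showing $g(t) := (1+t)^p - 1 - t^p \le \sum_{n=1}^{n_p}\binom{p}{n}(t^n + t^{p-n})$ for all $t>0$. The right-hand side is exactly the symmetrization of the ``middle'' block of the generalized binomial expansion of $(1+t)^p$. First I would recall the generalized binomial theorem: for $t \in (0,1]$, $(1+t)^p = \sum_{n=0}^\infty \binom{p}{n} t^n$ with the series converging absolutely, where $\binom{p}{n} = \frac{p(p-1)\cdots(p-n+1)}{n!}$. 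The key sign observation is that $\binom{p}{n} > 0$ for $0 \le n \le \lceil p \rceil$ roughly — more precisely, the signs of $\binom{p}{n}$ for $n$ beyond $p$ alternate, so the tail terms can be grouped into consecutive pairs of controlled sign. I would carry out the argument first in the regime $0 < t \le 1$ (the case $t \ge 1$ following by the symmetry $t \leftrightarrow 1/t$ after multiplying through, since the RHS is symmetric under $t \mapsto 1/t$ up to the factor $t^p$).

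\medskip

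\textbf{Key steps.} (1) Normalize via homogeneity to the single-variable claim on $(0,1]$. (2) Write $(1+t)^p - 1 - t^p = \sum_{n=1}^{\infty}\binom{p}{n}t^n - t^p$ and split the sum at $n_p$: the head $\sum_{n=1}^{n_p}\binom{p}{n}t^n$ is dominated termwise by $\sum_{n=1}^{n_p}\binom{p}{n}(t^n + t^{p-n})$ since the extra terms $\binom{p}{n}t^{p-n}$ are nonnegative (as $\binom{p}{n}>0$ for $1 \le n \le n_p \le p$, using $p > 1$ and $n_p \le \tfrac{p+1}{2} \le p$ when $p \ge 1$). (3) It then remains to show the tail is controlled: $\sum_{n = n_p+1}^{\infty} \binom{p}{n} t^n \le t^p + (\text{the remaining } t^{p-n} \text{ terms already added})$; in fact since $t \le 1$ one has $t^p \ge t^n$ and $t^{p-n}\ge 0$, so it suffices that $\sum_{n>n_p}\binom{p}{n}t^n \le t^{n_p+1}\sum_{n>n_p}\binom{p}{n}t^{n-n_p-1}$ be bounded by $t^p$ plus nonnegative contributions. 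The cleanest route is to note that for $n > p$ the binomial coefficients $\binom{p}{n}$ alternate in sign and $(1+t)^p \le 2^p$, so the partial-sum/remainder estimates give the bound directly; alternatively, cite the original source \cite{GambaBobPanf04} where this is Lemma~2, since the statement is quoted verbatim. (4) Handle $t \ge 1$ by the substitution $s = 1/t$, multiply the $(0,1]$-inequality by $t^p$, and observe both sides transform correctly.

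\medskip

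\textbf{Main obstacle.} The delicate point is the tail control in step (3): one must verify that the infinite tail $\sum_{n>n_p}\binom{p}{n}t^n$ — whose terms have non-constant sign when $p$ is not an integer — is genuinely dominated by the single monomial $t^p$ together with the already-accounted reflected terms $\binom{p}{n}t^{p-n}$ for $1 \le n \le n_p$. This requires either a careful pairing of consecutive tail terms (exploiting $|\binom{p}{n+1}| \le |\binom{p}{n}|$ for $n$ large and $t \le 1$, so the series is an alternating-type tail bounded by its first term) or invoking convexity of $s \mapsto (1+s)^p$ at a well-chosen point. Since Lemma~\ref{binomial} is explicitly attributed to \cite[Lemma~2]{GambaBobPanf04}, the honest and efficient presentation is to reproduce that short argument: reduce to one variable, expand, and bound the remainder of the binomial series by its leading term using $0 < t \le 1$. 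I do not expect any genuinely new difficulty beyond bookkeeping of the binomial coefficient signs.
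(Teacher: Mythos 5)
The paper does not prove this lemma; it is stated in the Appendix as a citation to \cite[Lemma~2]{GambaBobPanf04}, with no argument supplied, so there is no in-paper proof to compare yours against — the assessment must be of your sketch on its own terms.

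Your reduction to one variable $t = x/y \in (0,1]$, the termwise domination of the head $\sum_{n\le n_p}\binom{p}{n}t^n$ by the right-hand side, and the observation that the genuinely alternating part of the binomial tail (the indices $n>\lceil p\rceil$, where $\binom{p}{n}$ begins to change sign) contributes a non-positive amount for $t\le 1$ (alternating series with negative first term and decreasing absolute values) are all correct. However, there is a real gap between those two pieces: for $n_p<n\le\lceil p\rceil$ the coefficients $\binom{p}{n}$ are all strictly positive, the series is not alternating there, and this ``middle block'' — roughly $p/2$ positive terms — is exactly what must be absorbed by the slack $t^p+\sum_{m=1}^{n_p}\binom{p}{m}t^{p-m}$. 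Your proposed fixes (``alternating-type tail bounded by its first term'', ``invoking convexity at a well-chosen point'') do not touch this block, and deferring to the original source is not a proof. For $1<p\le 2$ your sketch does close, because the middle block is at most the single term $\binom{p}{2}t^2\le t^2\le t^p$, using $p\le 2$ and $\binom{p}{2}=\tfrac{p(p-1)}{2}\le 1$. But for $p>2$ the needed termwise comparison $\binom{p}{n_p+j}t^{n_p+j}\le\binom{p}{n_p+1-j}\,t^{\,p-(n_p+1-j)}$ requires both an exponent check ($2n_p\ge p-1$, which does hold, so $t\le1$ goes the right way) and a genuinely non-trivial coefficient comparison relying on the unimodality of $n\mapsto\binom{p}{n}$ and the location of its mode near $(p-1)/2$ relative to the reflection center $n_p+\tfrac12$; none of this appears in your write-up, and it is more than ``bookkeeping of signs''. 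The route is salvageable, but as written the only non-trivial step of the argument is missing.
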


\begin{lemma}[Polynomial inequality II] \label{moment products}
Let $b+1\leq a\leq \frac{p+1}{2}$. Then for any  $x, y\geq0$, 
	\begin{equation*}
	x^a y^{p-a} + x^{p-a} y^a \leq x^b y^{p-b} + x^{p-b} y^b.
	\end{equation*}	
\end{lemma}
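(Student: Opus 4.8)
\textbf{Proof proposal for Lemma~\ref{moment products}.}

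The plan is to reduce the inequality to a statement about a single variable by homogeneity, and then to exploit the monotonicity of a power function. First I would observe that the claimed inequality
\[
x^a y^{p-a} + x^{p-a} y^a \leq x^b y^{p-b} + x^{p-b} y^b
\]
is trivially true if either $x=0$ or $y=0$ (both sides vanish, using $0<b\leq a$ and $a,p-a,b,p-b>0$, which follows from $b+1\leq a\leq \frac{p+1}{2}$ so that $b\geq 1$ — actually we only need $b\ge 0$; if $b=0$ the right side is $x^p+y^p$ and the left is still dominated as shown below). So assume $x,y>0$. Dividing both sides by $x^{p/2}y^{p/2}$ and setting $t := \sqrt{x/y}>0$ (equivalently $t = (x/y)^{1/2}$, so $x^{c}y^{p-c}/(x^{p/2}y^{p/2}) = t^{2c-p}$), the inequality becomes
\[
t^{2a-p} + t^{p-2a} \leq t^{2b-p} + t^{p-2b}.
\]
Writing $u := t + t^{-1} \geq 2$ is not quite the right substitution; instead set $s := t^{2} > 0$ and $\alpha := 2a-p$, $\beta := 2b-p$. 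Then the inequality reads $s^{\alpha/2}+s^{-\alpha/2}\leq s^{\beta/2}+s^{-\beta/2}$, i.e. $g(|\alpha|)\leq g(|\beta|)$ where $g(r) := s^{r/2}+s^{-r/2}$ is the function $r\mapsto s^{r/2}+s^{-r/2}$ on $[0,\infty)$.

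The key step is then: the function $g(r) = e^{(r/2)\ln s} + e^{-(r/2)\ln s}$ is even in $r$ and nondecreasing in $|r|$ for every fixed $s>0$ (its derivative in $r>0$ is $\frac{\ln s}{2}(e^{(r/2)\ln s}-e^{-(r/2)\ln s})$, which has the sign of $(\ln s)\cdot(\ln s) \geq 0$, hence $g$ is nondecreasing on $[0,\infty)$). Therefore it suffices to check $|\alpha|\geq|\beta|$, i.e. $|2a-p|\geq|2b-p|$. From the hypothesis $b+1\leq a\leq\frac{p+1}{2}$ we get $2a-p\leq 1$ and $2b-p\leq 2a-2-p\leq -1<0$, so $|2b-p| = p-2b$ and $p-2b \geq p-2(a-1) = p-2a+2 \geq |2a-p|$: indeed if $2a-p\geq 0$ then $|2a-p|=2a-p\leq p-2a+2$ iff $4a\leq 2p+2$ iff $2a\leq p+1$, which is the hypothesis; and if $2a-p<0$ then $|2a-p|=p-2a\leq p-2a+2$ trivially. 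This gives $|\alpha|\geq|\beta|$, hence $g(|\alpha|)\leq g(|\beta|)$, which is the desired inequality after multiplying back by $x^{p/2}y^{p/2}$.

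I do not anticipate a serious obstacle here; the only care needed is in the bookkeeping of the exponents and in confirming $b\geq 0$ (so that the edge cases $x=0$ or $y=0$ are handled and all powers appearing are well-defined for nonnegative arguments), together with the elementary monotonicity argument for $g$. An alternative, perhaps cleaner, presentation avoids the substitution entirely: set $\phi(c) := x^c y^{p-c} + x^{p-c} y^c$ for $c\in[0,p]$ and note $\phi(c)=\phi(p-c)$, then show $\phi$ is convex in $c$ (it is a sum of two exponentials in $c$, namely $y^p e^{c\ln(x/y)}$ and $x^p e^{-c\ln(x/y)}$, each convex), so $\phi$ is decreasing on $[0,p/2]$ as one moves toward $c=p/2$; since $b\leq a\leq p/2$ (using $b\le a$ and $a\le\frac{p+1}{2}$, and in fact $a\le p/2$ will need $p\ge 1$, which holds since $a\ge b+1\ge 1$ forces $p\ge 2a-1\ge 1$), we get $\phi(a)\leq\phi(b)$. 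I would present whichever of these two routes is shorter in the final writeup; both are routine.
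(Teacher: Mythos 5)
Your primary argument (the substitution $t=(x/y)^{1/2}$ together with monotonicity of $g(r)=s^{r/2}+s^{-r/2}$ on $[0,\infty)$) is sound, but it takes a genuinely different route from the paper. The paper proves this in one line by observing a clean factorization: the difference of right side minus left side equals
\[
\left(y^{a-b}-x^{a-b}\right)x^b y^b\left(y^{p-a-b}-x^{p-a-b}\right),
\]
and then noting that the hypotheses give $a-b\ge 0$ and $p-a-b\ge 0$, so the two parenthesized factors have the same sign (both nonnegative if $y\ge x$, both nonpositive if $y\le x$) and the product is nonnegative. That direct algebraic identity bypasses the reduction to one variable and the monotonicity lemma; it is more economical, while yours packages the same hypothesis ($a-b\ge 0$ and $p-a-b\ge 0$, which encode that $a$ is ``closer to $p/2$'' than $b$) into the statement $\lvert 2a-p\rvert\le\lvert 2b-p\rvert$.

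Two small defects in your write-up. First, there is a sign slip: since $g$ is nondecreasing and you want $g(\lvert\alpha\rvert)\le g(\lvert\beta\rvert)$, the sufficient condition is $\lvert\alpha\rvert\le\lvert\beta\rvert$, not $\lvert\alpha\rvert\ge\lvert\beta\rvert$; fortunately your computation $p-2b\ge p-2a+2\ge\lvert 2a-p\rvert$ actually proves the correct inequality $\lvert\beta\rvert\ge\lvert\alpha\rvert$, so the substance is fine and only the displayed direction of the comparison needs to be corrected in two places. Second, the alternative convexity route is not airtight as stated: $a\le\frac{p+1}{2}$ does not force $a\le p/2$ (take $p=3$, $a=2$), so you cannot invoke monotonicity of $\phi$ on $[0,p/2]$ directly. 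The fix is easy because $\phi$ is symmetric about $p/2$: when $a>p/2$, replace $a$ by $p-a$, and then check $b\le p-a$ (equivalently $a+b\le p$, which follows from $a\le\frac{p+1}{2}$ and $b\le\frac{p-1}{2}$). With that patch the convexity argument also goes through.
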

\begin{proof} This Lemma is modified version of Lemma A.1 from \cite{GambaTask18}. Indeed, the proof is the same, one just needs to observe that $a-b\geq 0$ and $p-a-b\geq0$, and therefore
$$
\left(y^{a-b}-x^{a-b}\right)x^by^b\left(y^{p-a-b}-x^{p-a-b}\right)\geq 0,  
$$	
for any $x,y\geq0$.
\end{proof}

\begin{lemma}[Interpolation inequality]
	Let $k=\alpha k_1 + (1-\alpha) k_2 $, $\alpha\in(0,1)$, $0<k_1\leq k \leq k_2$. Then for any $g \in L_{k,i}^1$
	\begin{equation}\label{interpolation inequality}
	\left\| g \right\|_{L_{k,i}^1} \leq \left\| g \right\|_{L_{k_1,i}^1}^{\alpha}  \left\| g \right\|_{L_{k_2,i}^1}^{1-\alpha}.
	\end{equation}
\end{lemma}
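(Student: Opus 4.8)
The statement to prove is the interpolation inequality \eqref{interpolation inequality}: if $k = \alpha k_1 + (1-\alpha)k_2$ with $\alpha \in (0,1)$ and $0 < k_1 \leq k \leq k_2$, then $\left\| g \right\|_{\lki} \leq \left\| g \right\|_{\lkjedi}^{\alpha} \left\| g \right\|_{\lkdvai}^{1-\alpha}$ for any $g \in \lki$. The natural approach is a direct application of H\"older's inequality with exponents $1/\alpha$ and $1/(1-\alpha)$, which are conjugate since $\alpha + (1-\alpha) = 1$.

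First I would write out the weight decomposition. By definition \eqref{vi} the bracket weight satisfies $\japvi^{k} = \japvi^{\alpha k_1 + (1-\alpha)k_2} = \japvi^{\alpha k_1}\japvi^{(1-\alpha)k_2}$, using the exponent identity $k = \alpha k_1 + (1-\alpha)k_2$. Hence
\begin{equation*}
\left\| g \right\|_{\lki} = \int_{\mathbb{R}^3} |g(v)|\, \japvi^{k}\, \mathrm{d}v = \int_{\mathbb{R}^3} \left( |g(v)|^{\alpha} \japvi^{\alpha k_1} \right) \left( |g(v)|^{1-\alpha} \japvi^{(1-\alpha)k_2} \right) \mathrm{d}v,
\end{equation*}
where I have split $|g| = |g|^{\alpha}\cdot|g|^{1-\alpha}$. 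Then apply H\"older's inequality to the product of the two bracketed factors with conjugate exponents $p = 1/\alpha$ and $q = 1/(1-\alpha)$:
\begin{equation*}
\left\| g \right\|_{\lki} \leq \left( \int_{\mathbb{R}^3} |g(v)|\, \japvi^{k_1}\, \mathrm{d}v \right)^{\alpha} \left( \int_{\mathbb{R}^3} |g(v)|\, \japvi^{k_2}\, \mathrm{d}v \right)^{1-\alpha} = \left\| g \right\|_{\lkjedi}^{\alpha}\, \left\| g \right\|_{\lkdvai}^{1-\alpha},
\end{equation*}
since raising the first factor to the power $1/\alpha$ cancels the exponent $\alpha$ on $|g|^{\alpha}$ and on $\japvi^{\alpha k_1}$, and similarly for the second factor. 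This is the entire argument.

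There is essentially no obstacle here: the only mild point is to note that the inequality is trivially true (both sides infinite or the right side infinite) unless $g \in \lkdvai$ as well, but the monotonicity \eqref{monotonicity of norm} already ensures $\left\| g \right\|_{\lkjedi} \leq \left\| g \right\|_{\lki} \leq \left\| g \right\|_{\lkdvai}$ for $k_1 \leq k \leq k_2$, so whenever the right-hand side is finite the left-hand side is too, and whenever $\left\| g \right\|_{\lkdvai} = \infty$ the bound holds vacuously. I would also remark that the vector-valued version $\left\| \F \right\|_{\lk} \leq I\, \left\| \F \right\|_{\lkjed}^{\alpha} \left\| \F \right\|_{\lkdva}^{1-\alpha}$ used in \eqref{pomocna 15} (labelled ``interpolation inequality vector'') follows by summing the component-wise estimate over $i = 1,\dots,I$ and applying H\"older once more on the finite sum, or more crudely by bounding each component's $\lkjedi,\lkdvai$ norms by the full $\lkjed,\lkdva$ norms; this accounts for the factor $I$.
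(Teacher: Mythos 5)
Your proof is correct and it is the standard argument — the paper simply states this lemma in the appendix without proof, as a well-known fact, so there is nothing to compare it against. Your decomposition of the weight via $k=\alpha k_1+(1-\alpha)k_2$, the splitting $|g|=|g|^\alpha|g|^{1-\alpha}$, and the application of H\"older with conjugate exponents $1/\alpha$ and $1/(1-\alpha)$ is exactly how this is done. One small refinement worth noting: for the vector-valued version \eqref{interpolation inequality vector}, the discrete H\"older inequality $\sum_i a_i^\alpha b_i^{1-\alpha}\leq\bigl(\sum_i a_i\bigr)^\alpha\bigl(\sum_i b_i\bigr)^{1-\alpha}$ would in fact give the bound with constant $1$ rather than $I$; the cruder bound you also mention (replacing each $\|f_i\|_{\lkjedi}$ by $\|\F\|_{\lkjed}$, etc.) is what produces the harmless factor $I$ that the authors chose to state.
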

We can extend this interpolation inequality for vector functions $\mathbb{G}=\left[g_i\right]_{1\leq i \leq I}$. Namely, under the same assumptions,
\begin{equation}\label{interpolation inequality vector}
\left\| \mathbb{G}  \right\|_{L_k^1} \leq I \left\| \mathbb{G} \right\|_{L_{k_1}^1}^\alpha \left\| \mathbb{G} \right\|_{L_{k_2}^1}^{1-\alpha}.
\end{equation}

\begin{lemma}[Jensen's inequality] Let $f(x)$ be positive and integrable in $\mathbb{R}^d$ and $G$ a convex function. Then
	\begin{equation*}
	G\left( \frac{1}{\int f(x) \mathrm{d}x} \int f(x) g(x) \mathrm{d}x \right) \leq \frac{1}{\int f(x) \mathrm{d}x} \int f(x) G(g(x)) \mathrm{d}x,
	\end{equation*} 
	for any positive function $g$.
\end{lemma}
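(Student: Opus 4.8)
The plan is to reduce to the probability-measure case by a normalization and then invoke the supporting-line characterization of convexity. Since $f$ is positive and integrable, $\|f\|_1 := \int_{\mathbb{R}^d} f(x)\,\mathrm{d}x$ is a finite, strictly positive number, so $\mathrm{d}\mu(x) := \|f\|_1^{-1} f(x)\,\mathrm{d}x$ defines a probability measure on $\mathbb{R}^d$. In this notation the asserted inequality reads exactly $G\!\left( \int_{\mathbb{R}^d} g\,\mathrm{d}\mu \right) \leq \int_{\mathbb{R}^d} G(g)\,\mathrm{d}\mu$, i.e. the classical Jensen inequality for the probability measure $\mu$ applied to the $\mu$-integrable function $g$. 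If $\int_{\mathbb{R}^d} G(g)\,\mathrm{d}\mu = +\infty$ there is nothing to prove, so one may assume that integral finite; measurability of $G\circ g$ follows from continuity of $G$ on the interior of its domain together with measurability of $g$.

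First I would set $m := \int_{\mathbb{R}^d} g(x)\,\mathrm{d}\mu(x)$, which is finite and lies in the interval spanned by the range of $g$, hence in the domain of $G$. Next I would use that a convex function $G$ admits at every interior point $m$ of its domain a supporting line: there is a real number $c$ (any value between the one-sided derivatives $G'_-(m)$ and $G'_+(m)$, which exist by convexity) such that
\begin{equation*}
G(y) \geq G(m) + c\,(y-m) \qquad \text{for all } y \text{ in the domain of } G.
\end{equation*}
The degenerate case in which $m$ is an endpoint of the domain is handled directly, or by extending $G$ by $+\infty$ outside its domain without affecting the statement. Substituting $y = g(x)$ and integrating against $\mathrm{d}\mu$ gives
\begin{equation*}
\int_{\mathbb{R}^d} G(g(x))\,\mathrm{d}\mu(x) \geq \int_{\mathbb{R}^d} \Big( G(m) + c\,(g(x)-m) \Big)\,\mathrm{d}\mu(x) = G(m) + c\Big( \int_{\mathbb{R}^d} g\,\mathrm{d}\mu - m \Big) = G(m),
\end{equation*}
since $\mu$ is a probability measure so the constant $G(m)$ integrates to itself and the linear term vanishes by the definition of $m$. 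Undoing the normalization, $G(m) = G\!\left( \|f\|_1^{-1} \int f g\,\mathrm{d}x \right)$ and $\int_{\mathbb{R}^d} G(g)\,\mathrm{d}\mu = \|f\|_1^{-1} \int_{\mathbb{R}^d} f\,G(g)\,\mathrm{d}x$, which is precisely the claimed inequality.

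The argument is entirely standard; the only structural input is the existence of the supporting slope $c$, which rests on the fact that a convex function on an interval has everywhere-defined one-sided derivatives and that its graph lies above each one-sided tangent. This is the single place where convexity is genuinely used, and thus the only step that is an ``obstacle'' at all; everything else is normalization and linearity of the integral.
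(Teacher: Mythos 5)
Your proof is correct: the normalization to a probability measure followed by the supporting-line (subgradient) inequality is the standard argument for Jensen's inequality, and it is exactly what the paper implicitly relies on, since the paper states this lemma as a classical fact in the appendix and offers no proof of its own before applying it with $g(x)=\japxi^k$ and $G(x)=x^{1+\lambda/k}$. The only caveat, which you already flag, is the tacit assumption that $g$ is integrable against the normalized measure (so that $m$ is finite and lies in the domain of $G$); in the paper's application $G$ is increasing and superlinear on $[1,\infty)$, so the degenerate cases make both sides infinite and the inequality holds trivially.
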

We apply this lemma specifying $g(x)=\left\langle x \right\rangle_i^k$ and $G(x)=x^{1+\frac{\lambda}{k}}$, $\lambda \in (0,1]$ and $k\geq 1$. This implies
\begin{equation*}
\int_{ \mathbb{R}^3} f_i(v) \left\langle v \right\rangle_i^{k+\lambda} \mathrm{d}v \geq \left( \int_{ \mathbb{R}^3} f_i(v) \mathrm{d}v \right)^{-\frac{\lambda}{k}}  \left( \int_{ \mathbb{R}^3} f_i(v) \left\langle v \right\rangle_i^k \mathrm{d}v \right)^{1+\frac{\lambda}{k}} .
\end{equation*}

If additionally we have an upper bound on  zero order scalar polynomial moment, that is, if it holds
\begin{equation*}
 \int_{ \mathbb{R}^3} f_i(v) \mathrm{d}v =  \mathfrak{m}_{0,i}[\mathbb{F}] \leq \mathfrak{m}_0[\mathbb{F}] \leq C_{\mathfrak{m}_{0}},
\end{equation*}
then 
\begin{equation*}
\int_{ \mathbb{R}^3} f_i(v) \left\langle v \right\rangle_i^{k+\lambda} \mathrm{d}v \geq  C_{\mathfrak{m}_{0}}^{-\frac{\lambda}{k}}  \left( \int_{ \mathbb{R}^3} f_i(v) \left\langle v \right\rangle_i^k \mathrm{d}v \right)^{1+\frac{\lambda}{k}} .
\end{equation*}

Summing over $i=1,\dots,I$ after some manipulation we get a control from below for the moment $ \mathfrak{m}_{k+\lambda}[\mathbb{F}]$. In	deed,
	\begin{equation}\label{jensen}
 \mathfrak{m}_{k+\lambda}[\mathbb{F}] \geq \left( I   C_{\mathfrak{m}_{0}} \right)^{-\frac{\lambda}{k}} \mathfrak{m}_k[\mathbb{F}]^{1+\frac{\lambda}{k}}.
\end{equation}

%\section*{Acknowledgments}

\medskip
% The data information below will be filled by AIMS editorial staff
%Received ; revised .
\medskip

\end{document}